\DeclareMathAccent{\wtilde}{\mathord}{largesymbols}{"65}
\DeclareSymbolFont{tipa}{T3}{cmr}{m}{n}
\DeclareMathAccent{\invbreve}{\mathalpha}{tipa}{16}
\newcommand{\vertiii}[1]{{\left\vert\kern-0.25ex\left\vert\kern-0.25ex\left\vert #1 
    \right\vert\kern-0.25ex\right\vert\kern-0.25ex\right\vert}}
\theoremstyle{plain}
\newtheorem{thm}{Theorem}
\newtheorem{lem}{Lemma}
\newtheorem{coro}{Corollary}
\newtheorem{proposition}{Proposition}
\begin{document}
\vskip 1cm

\thispagestyle{empty} \vskip 1cm


\title{{On A Class of Time-Varying  \\Gaussian ISI  Channels }}
\author{ Kamyar Moshksar\\
\small  Vancouver, Canada} 
\maketitle

\begin{abstract} 
 This paper studies a class of stochastic and time-varying Gaussian intersymbol interference~(ISI) channels. The probability law for the~$i^{th}$ channel tap during time slot~$t$ is supported over an interval of centre $c_i$ and radius~$ r_{i}$. The transmitter and the receiver only know the centres $c_i$ and the radii $r_i$.  The joint distribution for the array of channel taps and their realizations are unknown to both the transmitter and the receiver. A lower bound (achievability result) is presented on the channel capacity which results in an upper bound on the capacity loss compared to when all radii are zeros. The lower bound on the channel capacity saturates at a positive value as the maximum average input power $P$ increases beyond what is referred to as the saturation power $P_{sat}$. Roughly speaking, $P_{sat}$ is inversely proportional to the sum of the squares of the radii $r_i$.  A partial converse result is provided in the worst-case scenario where the array of channel taps varies independently along both indices $t$ and $i$ with uniform marginals. It is shown that for every sequence of codebooks with vanishing probability of error, if the size of each symbol in every codeword is bounded away from zero by an amount proportional to $\sqrt{P}$, then the rate of that sequence of codebooks does not scale with~$P$. Tools in matrix analysis such as matrix norms and Weyl's inequality on perturbation of eigenvalues of symmetric matrices are used in order to analyze the probability of error. 
\end{abstract}
\begin{IEEEkeywords}
 Intersymbol Interference, Time-Varying Channels, Joint-Typicality Decoding, Matrix Norm, Weyl's Inequality. 
\end{IEEEkeywords}

\section{Introduction}
\subsection{Summary of prior art}
Transmission beyond the Nyquist rate over a band-limited communication channel results in a phenomenon known as intersymbol interference (ISI). The time-invariant Gaussian ISI channel is modelled by a linear filter with known finite impulse-response and additive white Gaussian noise~\cite{Proakis}. The capacity for this channel was first studied in \cite{galager} where it was shown that Gaussian signalling is capacity-achieving. Addressing more practical structures for signal transmission, the Gaussian ISI channel with independent and identically distributed~(i.i.d.) signalling over a fixed finite alphabet was examined in~\cite{Shamai-Laroia} and more recently in~\cite{Shamai-Carmon1,Shamai-Carmon2}. 


The time-varying Gaussian ISI channel is less studied in the literature. Reference \cite{suhas} presents information-theoretic considerations for multi-carrier transmission in time-varying Gaussian ISI channels. Reference~\cite{goldsmith} derives the capacity of block-memoryless channels with block-memoryless side information. Subsequently, the capacity of time-varying Gaussian ISI channels is characterized. Both~\cite{suhas} and \cite{goldsmith} assume that perfect channel state information about the time-varying channel is available at the receiver side.  

To the author's best knowledge, there are no capacity results on time-varying Gaussian ISI channels in the absence of channel state information at both ends of communication. In the context of time-varying fading channels, perhaps the most relevant work is the landmark paper~\cite{abou-faycal} by Abou-Faycal and others. The authors study a Rayleigh fading channel subject to a constraint on the average input power where the channel coefficient (the channel filter has only one tap) varies independently from symbol to symbol and both the transmitter and the receiver do not have channel state information. It is shown that the capacity-achieving input distribution is discrete with a mass point located at zero.      

\subsection{Summary of contributions}
Consider a Gaussian ISI channel initially modelled by a sequence of filter taps $c_0, c_1, \cdots, c_k$ as its impulse response. Let $\boldsymbol{h}_{t,i}$ be the actual $i^{th}$ channel tap during time slot $t$.  Then the initial model assumes $\boldsymbol{h}_{t,i}=c_i$ for all $t$ and $i$. We ask the following question: 

\textit{If $\boldsymbol{h}_{t,i}$ for $t=1,2,\cdots$ are not exactly the constant $c_i$, but a random process (with possibly unknown dynamics) that takes values near $c_i$, then how would that impact the channel capacity? } 

In an attempt to answer this question, it is assumed that the probability law for $\boldsymbol{h}_{t,i}$ is supported over the interval $[c_i-r_i, c_i+r_i]$ for all $t$ and $i$ where the centres $c_i$ and the radii $r_i\ge0$ are known constants to the transmitter and the receiver. The joint distribution for the array $\boldsymbol{h}_{t,i}$ (including its marginals) as well as the realizations of $\boldsymbol{h}_{t,i}$ remain unknown to both the transmitter and the receiver.  The capacity $C^*$ for this channel is studied subject to a maximum average transmission power~$P$. We investigate the performance of the ensemble of Gaussian codebooks where the codewords are independent zero mean Gaussian vectors with common covariance matrix $\Sigma_n$. Here, $n$ denotes the length of the codewords. 
It is shown that if the sequence of positive-definite matrices $\Sigma_n$ satisfies the fairly mild condition $\lim_{n\to\infty}\frac{1}{n}\lambda_{\max}(\Sigma_n)=0$ where $\lambda_{\max}(\Sigma_n)$ is the largest eigenvalue of $\Sigma_n$,  then the probability of decoding error is guaranteed to vanish as~$n$ grows to infinity for sufficiently small values of the transmission rate. As a result, we obtain a lower bound $C^*_{LB}$ on the capacity~$C^*$ which is presented in Theorem~1 in Section~II. This lower bound is further maximized in Proposition~1 in Section~III over the admissible set of matrices~$\Sigma_n$ where water-filling is performed. A feature of $C^*_{LB}$ is that it saturates at a positive value as $P$ increases beyond what we refer to as the saturation power denoted by $P_{sat}$. Its value is given by $P_{sat}=\frac{2}{(k+1)\sum_{i=0}^k r_i^2}-2J$ where $J$ depends entirely on the coefficients $c_i$. An upper bound is established in Corollary~2 in Section~III on the amount of loss in channel capacity before saturation occurs compared to when all radii~$r_i$ are zeros. 
We also provide a partial converse result in Theorem~2 in Section~II. It addresses the worst-case scenario where the array $\boldsymbol{h}_{t,i}$ varies independently along both indices $t$ and $i$ and its marginals are uniform. 
It states that for a sequence of codebooks with vanishing probability of error, if there exists a constant $a>0$ such that the size of every symbol in each codeword is at least~$a\sqrt{P}$, then the rate of that sequence of codebooks does not scale with~$P$. Tools in matrix analysis such as matrix norms and Weyl's inequality on perturbation of eigenvalues of symmetric matrices are used in order to analyze the probability of error. 

The rest of the paper is organized as follows. We end the current section by a list of adopted notations. System model, the problem statement and a summary of main results appear in Section~II. Section~III further explores the lower bound presented in Theorem~1 in Section II. Section~IV describes the structure of the proposed decoder. Section~V is devoted to error analysis where we establish the main result (Theorem~1). The majority of details of the proofs are deferred to the appendices at the end of the paper. Finally, Section~VI concludes the paper.

\subsection{Notations}
For a real number $x$, $x^+=\max\{0,x\}$. The Euclidean space of sequences of length $n$ whose entries are real numbers is denoted by $\mathbb{R}^n$. Vectors and sequences are identified by an underline such as $\underline{x}$.  Random quantities appear in bold such as $\boldsymbol{x}$ and $\underline{\boldsymbol{x}}$ with realizations $x$ and $\underline{x}$, respectively. An $m\times n$ matrix whose all entries are zeros is denoted by $0_{m,n}$. We use $\underline{0}_{n}$ to denote a column vector of length $n$ whose all entries are zeros. The $n\times n$ identity matrix is denoted by~$I_n$. The transpose, inverse, trace and determinant of a square matrix $A$ are denoted by $A^T, A^{-1}, \mathrm{tr}(A)$ and $\det (A)$, respectively. The entry at the $i^{th}$ row and the $j^{th}$ column of a matrix $A$ is denoted by $A_{i,j}$ or $[A]_{i,j}$. The $i^{th}$ entry of a vector $\underline{x}$ is denoted by $x_i$ or $[\underline{x}]_i$. The 2-norm~(Frobenius norm) and the spectral norm~(operator~norm)  of a matrix $A$ are defined by 
\begin{eqnarray}
\|A\|_2:=(\mathrm{tr}(A^TA))^{1/2}
\end{eqnarray}
and  
\begin{eqnarray}
\label{opn_111}
\|A\|:=\max_{\|\underline{x}\|_2\leq 1}\|A\underline{x}\|_2=\big(\lambda_{\max}(A^TA)\big)^{\frac{1}{2}},
\end{eqnarray}
respectively, where $\lambda_{\max}(M)$ denotes the largest eigenvalue of a symmetric matrix $M$. The smallest eigenvalue of a symmetric matrix $M$ is denoted by $\lambda_{\min}(M)$. The probability law of a random variable $\boldsymbol{x}$ (the induced probability measure on the range of $\boldsymbol{x}$) is denoted by $\mathcal{L}_{\boldsymbol{x}}(\cdot)$ and its mean is denoted by~$\mathbb{E}[\boldsymbol{x}]$. The probability density function~(PDF) of a continuous random variable $\underline{\boldsymbol{x}}$ is denoted by $p_{\boldsymbol{x}}(\cdot)$. The PDF of a Gaussian random vector of length $n$ with zero mean and covariance matrix $\Sigma$~is~denoted~by 
 \begin{eqnarray}
p_{\mathsf{G}}(\underline{x};\Sigma): =\frac{1}{(2\pi)^\frac{n}{2}\sqrt{\det(\Sigma)}}\exp\big(-\frac{1}{2}\underline{x}^T\Sigma^{-1}\underline{x}\big).
\end{eqnarray}
 We write $\underline{\boldsymbol{x}}\sim \mathrm{N}(\underline{0}_{n}, \Sigma)$ if $p_{\boldsymbol{\underline{x}}}(\underline{x})=p_{\mathsf{G}}(\underline{x};\Sigma)$. The differential entropy of a random vector $\underline{\boldsymbol{x}}\sim \mathrm{N}(\underline{0}_{n}, \Sigma)$ is denoted by $h_{\mathsf{G}}(\Sigma)$ given by 
 \begin{equation}
\label{ }
h_{\mathsf{G}}(\Sigma):=\frac{1}{2}\log\big((2\pi e)^n\det(\Sigma)\big),
\end{equation}
where $\log(\cdot)$ is the logarithm function with base $2$. We also recall the definition for a typical set~\cite{Cover-Thomas}. For $\eta>0$, positive integer~$n$ and an $n\times n$ positive-definite matrix $\Sigma$, the Gaussian typical set $\mathcal{T}_{\eta}^{(n)}(\Sigma)$ is defined by 
\begin{eqnarray}
\label{fgb_111}
\mathcal{T}_{\eta}^{(n)}(\Sigma)&:=&\Big\{\underline{a}\in\mathbb{R}^n: \Big|\frac{1}{n}\log p_{\mathsf{G}}(\underline{a};\Sigma)+\frac{1}{n}h_{\mathsf{G}}(\Sigma)\Big|<\frac{\eta}{2\ln 2}\Big\}\notag\\&=&\Big\{\underline{a}\in\mathbb{R}^n: \Big|\frac{1}{n}\underline{a}^T \Sigma^{-1}\underline{a}-1\Big|<\eta\Big\}.
\end{eqnarray}
\section{System model and summary of results}
 The stochastic and time-varying Gaussian ISI channel with memory-depth $k\geq 1$ is defined by
\begin{eqnarray}
\label{model_1}
\boldsymbol{y}_t=\sum_{i=0}^{k}\boldsymbol{h}_{t,i}\boldsymbol{x}_{t-i}+\boldsymbol{z}_t,\,\,\,t=1,2,\cdots,
\end{eqnarray}
where $\boldsymbol{h}_{t,0},\boldsymbol{h}_{t,1},\cdots,\boldsymbol{h}_{t,k}$ are the channel coefficients (filter taps) during time slot $t=1,2,\cdots$  and $\boldsymbol{x}_t$, $\boldsymbol{y}_t$~and~$\boldsymbol{z}_t$ are the channel input, the channel output and the additive noise at time slot $t$, respectively.  The process $\boldsymbol{z}_1, \boldsymbol{z}_2,\cdots$ is white Gaussian with zero mean and unit variance. The input process $\boldsymbol{x}_1, \boldsymbol{x}_2,\cdots$ is subject to the average power constraint 
\begin{eqnarray}
\label{ghor_1}
\sum_{t=1}^{n}\mathbb{E}[|\boldsymbol{x}_{t}|^2]\leq nP,
\end{eqnarray}
where $n$ is the length of the communication period of interest. 

The following assumptions are made on the array of channel coefficients:  
\begin{enumerate}[(i)]
\item 
There is a sequence of numbers $\underline{c}=(c_0,c_1,\cdots, c_k)$ and a sequence of nonnegative numbers~$\underline{r}=( r_0,  r_1,\cdots, r_k)$  such that  the probability law for $\boldsymbol{h}_{t,i}$ is supported over the interval $[c_i-r_i,c_i+r_i]$  for every $t,i$. We denote the sum of the radii $r_i$ by $r_s$, i.e., 
\begin{eqnarray}
\label{ehj_1}
r_s:=\sum_{i=0}^k r_i.
\end{eqnarray}
\item The transmitter and the receiver only know the centres $c_i$ and the radii $r_i$. The joint distribution of the random variables $\boldsymbol{h}_{t,i}$ as well as their realizations are unknown to both ends of communication. 
\item The processes $\boldsymbol{h}_{t,i}$ and $\boldsymbol{z}_t$ are independent. 
 \end{enumerate} 
 
 To describe the ISI channel in matrix form, assume a codeword is transmitted during time slots $t=1,\cdots, n$. Define
\begin{eqnarray}
m:=n+k
\end{eqnarray}
and
\begin{eqnarray}
\label{y_1}
\underline{\boldsymbol{x}}:=\big[\boldsymbol{x}_1\,\,\,\boldsymbol{x}_2\,\,\,\cdots\,\,\,\boldsymbol{x}_n\big] ^T,\,\,\,\,\,\,\,\underline{\boldsymbol{y}}:=\big[\boldsymbol{y}_1\,\,\,\boldsymbol{y}_2\,\,\,\cdots\,\,\,\boldsymbol{y}_m\big]^T,\,\,\,\,\,\,\,\,\,\underline{\boldsymbol{z}}:=\big[\boldsymbol{z}_1\,\,\,\boldsymbol{z}_2\,\,\,\cdots\,\,\,\boldsymbol{z}_m\big]^T.
\end{eqnarray}
 Then 
 \begin{eqnarray}
 \label{jat_123}
\underline{\boldsymbol{y}}=\boldsymbol{H}\underline{\boldsymbol{x}}+\underline{\boldsymbol{z}},
\end{eqnarray}
 where the $m\times n$ random channel matrix $\boldsymbol{H}$ is given by 
 \begin{eqnarray}
 \label{gg_hh}
\boldsymbol{H}_{i,j}:=\left\{\begin{array}{cc}
    \boldsymbol{h}_{i,i-j}  & 0\le i-j\leq k   \\
     0 &   \mathrm{otherwise}
\end{array}\right..
\end{eqnarray}
We also define the $m\times n$ deterministic matrix $H_c$ by 
  \begin{eqnarray}
  \label{vrty}
[H_c]_{i,j}:=\left\{\begin{array}{cc}
    c_{i-j}  & 0\le i-j\leq k   \\
     0 &   \mathrm{otherwise}
\end{array}\right..
\end{eqnarray}
For example, when $n=4$ and $k=2$, we have $m=4+2=6$ and the matrices $\boldsymbol{H}$ and $H_c$ are given~by 
\begin{eqnarray}
\boldsymbol{H}=\begin{bmatrix}
     \boldsymbol{h}_{1,0} & 0 &0 & 0 \\
     \boldsymbol{h}_{2,1} &\boldsymbol{h}_{2,0}  &0 & 0   \\
     \boldsymbol{h}_{3,2} & \boldsymbol{h}_{3,1} & \boldsymbol{h}_{3,0} & 0 \\
     0 &\boldsymbol{h}_{4,2} & \boldsymbol{h}_{4,1}& \boldsymbol{h}_{4,0}\\
     0 & 0& \boldsymbol{h}_{5,2}&\boldsymbol{h}_{5,1} \\
     0 &0 & 0 &\boldsymbol{h}_{6,2}
\end{bmatrix},\hskip1cm  H_c=\begin{bmatrix}
     c_{0} & 0 &0 & 0 \\
     c_{1} &c_{0}  &0 & 0   \\
     c_{2} & c_{1} & c_{0} & 0 \\
     0 &c_{2} & c_{1}& c_{0}\\
     0 & 0& c_{2}&c_{1} \\
     0 &0 & 0 & c_{2}
\end{bmatrix}.
\end{eqnarray} 
Throughout the paper, we will denote the range of the random matrix $\boldsymbol{H}$ by $\mathcal{H}$, i.e., $\mathcal{H}$ is the set of all $m\times n$ matrices $H$ such that $H_{i,j}=0$ for $i-j<0$ or $i-j>k$ and $|H_{i,j}-c_{i-j}|\leq r_{i-j}$ for $0\leq i-j\leq k$.

The message $\boldsymbol{W}$ is uniformly distributed over the set of indices $\{1,2,\cdots, 2^{nR}\}$ where $R$ is the transmission rate. The encoder  maps  $\boldsymbol{W}$ to a codeword $\underline{\boldsymbol{x}}$ of length $n$. This codeword is then transmitted over the channel in (\ref{jat_123}) during time slots~$t=1,\cdots, n$. The decoder receives the vector $\underline{\boldsymbol{y}}$ and generates the estimate $\hat{\boldsymbol{W}}$ for $\boldsymbol{W}$. The probability of decoding~error~is denoted~by
\begin{eqnarray}
p_{e,n}:=\Pr(\hat{\boldsymbol{W}}\neq \boldsymbol{W}).
\end{eqnarray}
We say a transmission rate~$R$ is achievable if there exists a sequence of encoder-decoder pairs such that  
\begin{eqnarray}
\label{ens_00}
\lim_{n\to\infty} p_{e,n}=0.
\end{eqnarray}
The supremum of all achievable rates is denoted by $C^*=C^*( \underline{c},\underline{r}, P)$ and referred to as the capacity of the stochastic and time-varying ISI channel with parameters $\underline{c}$ and $\underline{r}$ under a maximum average power~of~$P$. 

Let
 \begin{eqnarray}
f(\omega):=\sum_{l=0}^{k}c_le^{\sqrt{-1}\,l\omega}
\end{eqnarray}
be the discrete Fourier transform of the~sequence~$\underline{c}$. In the special case where the ISI channel is deterministic and time-invariant, i.e., $ r_{i}=0$ for every~$i$, it is well-known\footnote{See \cite{Hirt-Massey} and the references therein.} that the capacity, denoted in this case by $C^*_0=C^*( \underline{c},\underline{0}, P)$, is given~by 
\begin{eqnarray}
\label{known_1}
C^*_0=\frac{1}{4\pi}\int_0^{2\pi}\log[\max(\Theta |f(\omega)|^2,1)]d\omega,
\end{eqnarray}
where the parameter $\Theta$ solves 
\begin{eqnarray}
\label{faghat}
\frac{1}{2\pi}\int_0^{2\pi}(\Theta-|f(\omega)|^{-2})^+d\omega=P.
\end{eqnarray}
 The main contributions of this paper are a lower bound (achievability result) on $C^*$ for arbitrary~$\underline{c}, \underline{r},P$ and a partial converse result. To present these results, fix a sequence of positive-definite matrices $\Sigma_n$ such that\footnote{As a non-example, the matrices $\Sigma_n$ whose all diagonal entries are $P$ and whose all entries off the main diagonal are $\rho P$ for some $\rho\neq 0$ satisfy (\ref{sx_1}), but they do not satisfy (\ref{sx_2}).} 
\begin{eqnarray}
\label{sx_1}
\mathrm{tr}(\Sigma_n)\leq nP
\end{eqnarray}
and 
\begin{eqnarray}
\label{sx_2}
\lim_{n\to\infty}\frac{1}{n}\lambda_{\max}(\Sigma_n)=0.
\end{eqnarray}
Define 
\begin{eqnarray}
\label{min_max}
\alpha:=\min_{\omega\in[0,2\pi]}|f(\omega)|,\,\,\,\,\,\,\, \beta:=\max_{\omega\in[0,2\pi]}|f(\omega)|,
\end{eqnarray}
\begin{eqnarray}
\label{joda_22}
\phi_{1,n}:= \frac{r_s( r_s+2\beta)\lambda_{\max}(\Sigma_n)}{1+\alpha^2\lambda_{\min}(\Sigma_n)}
\end{eqnarray}
and\footnote{The parameters $\phi_{1,n}, \phi_{2,n}$ and $\phi_{3,n}$ appear in the study of the type II error in Section~V. }
\begin{eqnarray}
\label{nmkio}
\phi_{2,n}:=r_s(r_s+2\beta)\frac{\mathrm{tr}(\Sigma_n)}{m},\,\,\,\,\,\phi_{3,n}:=\frac{1}{1+ r_s(r_s+2\beta)\lambda_{\max}(\Sigma_n)}.
\end{eqnarray}
We are ready to present the achievability result.
\newpage
\begin{thm}
Let $\Sigma_n$ be a sequence of positive-definite matrices that satisfies both (\ref{sx_1}) and~(\ref{sx_2}). Assume $r_s$ is small enough such that $\phi_{1,n}<1$ for all $n$. Then 
\begin{eqnarray}
\label{lb_00}
C^*\geq C^*_{LB}:=\liminf_{n\to\infty}\bigg(\frac{1}{2n}\log\det(I_n+H_c^TH_c\Sigma_n)-\log\Big(1+(k+1)\|\underline{r}\|_2^2\frac{\mathrm{tr}(\Sigma_n)}{m+n}\Big)-\delta_n\bigg),
\end{eqnarray} 
where 
\begin{eqnarray}
\label{kool}
\delta_n:=-\frac{1}{2}\log(1-\phi_{1,n})+\frac{1}{2\ln 2}(1-(1-\phi_{2,n})^+\phi_{3,n}).
\end{eqnarray} 
\end{thm}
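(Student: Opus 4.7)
The plan is to prove the bound via a random-coding argument built around a fictitious \emph{nominal} channel $\underline{\boldsymbol{y}}=H_c\underline{\boldsymbol{x}}+\underline{\boldsymbol{z}}$: draw $2^{nR}$ codewords i.i.d.\ from $\mathrm{N}(\underline{0}_n,\Sigma_n)$ (so that (\ref{ghor_1}) is met in expectation by (\ref{sx_1})) and pair them with a joint-typicality decoder whose tests are defined relative to the Gaussian law induced by $H_c$. Concretely, the decoder forms the residual $\underline{\boldsymbol{v}}(w):=\underline{\boldsymbol{y}}-H_c\underline{\boldsymbol{x}}(w)$ and declares $w$ if it is the unique index for which $\underline{\boldsymbol{x}}(w)\in\mathcal{T}_\eta^{(n)}(\Sigma_n)$ and $\underline{\boldsymbol{v}}(w)$ lies in a Gaussian typical set for an \emph{inflated} noise covariance of the form $(1+\Delta_n)I_m$, with $\Delta_n$ of order $(k+1)\|\underline{r}\|_2^2\,\mathrm{tr}(\Sigma_n)/(m+n)$ --- chosen just large enough to absorb, in the worst case, the second moment of the channel-uncertainty term $(\boldsymbol{H}-H_c)\underline{\boldsymbol{x}}(w)$.

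The analysis starts from the decomposition
\[
\underline{\boldsymbol{y}}\;=\;H_c\underline{\boldsymbol{x}}(w)\;+\;(\boldsymbol{H}-H_c)\underline{\boldsymbol{x}}(w)\;+\;\underline{\boldsymbol{z}},
\]
in which each row of $\boldsymbol{H}-H_c$ has at most $k+1$ nonzero entries, the $i$-th bounded in magnitude by $r_i$. Expanding $\boldsymbol{H}-H_c$ as a sum of banded shift operators yields the deterministic spectral-norm bound $\|\boldsymbol{H}-H_c\|\leq r_s$ uniformly over $\mathcal{H}$, from which $\|\boldsymbol{H}\|\leq\beta+r_s$ and $\|\boldsymbol{H}\Sigma_n\boldsymbol{H}^T-H_c\Sigma_n H_c^T\|\leq r_s(r_s+2\beta)\lambda_{\max}(\Sigma_n)$, which is precisely the numerator of $\phi_{1,n}$. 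For the Type~I error, I would condition on $\boldsymbol{H}$ and invoke concentration of the Gaussian quadratic forms $\underline{\boldsymbol{x}}(w)^T\Sigma_n^{-1}\underline{\boldsymbol{x}}(w)$ and $\underline{\boldsymbol{v}}(w)^T(1+\Delta_n)^{-1}\underline{\boldsymbol{v}}(w)$ around their expectations: the first has per-coordinate fluctuation of order $\lambda_{\max}(\Sigma_n)/n$, which vanishes by (\ref{sx_2}); the second is handled via Weyl's inequality applied to the perturbation $H_c\Sigma_n H_c^T\mapsto\boldsymbol{H}\Sigma_n\boldsymbol{H}^T$, and the hypothesis $\phi_{1,n}<1$ is exactly what keeps the perturbed minimum eigenvalue of $I_m+\boldsymbol{H}\Sigma_n\boldsymbol{H}^T$ positive so that the typicality test is well-defined.

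For the Type~II error, I would bound the probability that an independent Gaussian $\underline{\boldsymbol{x}}'$ passes the decoder's tests against the true $\underline{\boldsymbol{y}}$. A Gaussian volume computation gives the leading factor $\bigl(\det(I_m+H_c\Sigma_nH_c^T)\bigr)^{-1/2}=\bigl(\det(I_n+H_c^TH_c\Sigma_n)\bigr)^{-1/2}$, which after a union bound over the $2^{nR}-1$ rivals yields the dominant rate term $\frac{1}{2n}\log\det(I_n+H_c^TH_c\Sigma_n)$ in (\ref{lb_00}). The mismatch penalty $-\log\!\bigl(1+(k+1)\|\underline{r}\|_2^2\,\mathrm{tr}(\Sigma_n)/(m+n)\bigr)$ then arises because the inflation $\Delta_n$ carried in the residual-test covariance appears inside the $\log\det$ factor, raising the apparent noise floor; the residual quantities collected in $\delta_n$ come from Weyl- and typicality-tolerance bookkeeping, with $-\tfrac{1}{2}\log(1-\phi_{1,n})$ accounting for the perturbation of the extremal eigenvalues and the $\phi_{2,n},\phi_{3,n}$ term accounting for the slack between $\Delta_n$ and the actual second moment of $(\boldsymbol{H}-H_c)\underline{\boldsymbol{x}}(w)$.

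The principal obstacle is the Type~I analysis, since the joint law of $\boldsymbol{H}$ and $\underline{\boldsymbol{x}}$ is not even assumed to be a product and the bound must hold uniformly over every $H\in\mathcal{H}$. The saving grace is that, conditional on any such $H$, the codeword remains $\mathrm{N}(\underline{0}_n,\Sigma_n)$, reducing the task to a deterministic perturbation problem in which the banded Toeplitz structure of $H-H_c$ and Weyl's inequality together pin down every eigenvalue perturbation needed. Once these estimates are in place, combining the two error bounds via a random-coding union bound, passing to a $\liminf$ in~$n$, and invoking the standard expurgation to obtain a deterministic codebook are routine, and the existence of a codebook achieving rate $C^*_{LB}-\epsilon$ for any $\epsilon>0$ follows.
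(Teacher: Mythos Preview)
Your overall strategy---random Gaussian codebook plus a mismatched joint-typicality decoder tuned to $H_c$---matches the paper, and your residual test is essentially equivalent to the paper's joint test on $\underline{\boldsymbol{w}}=[\underline{\boldsymbol{x}}^T\ \underline{\boldsymbol{y}}^T]^T$, since with $\Xi_n^{-1}$ as in (\ref{sig_inv}) one has $\underline{\boldsymbol{w}}^T\Xi_n^{-1}\underline{\boldsymbol{w}}=\underline{\boldsymbol{x}}^T\Sigma_n^{-1}\underline{\boldsymbol{x}}+\|\underline{\boldsymbol{y}}-H_c\underline{\boldsymbol{x}}\|_2^2$. However, you have the roles of Type~I and Type~II essentially inverted, and this causes you to miss the decisive technical step.

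In the paper, Type~I is the \emph{easy} part: a Chebyshev bound on the quadratic form $\underline{\boldsymbol{\nu}}^T\boldsymbol{\Phi}\,\underline{\boldsymbol{\nu}}$ after computing $\mathrm{tr}(\boldsymbol{\Phi})$ and $\mathrm{tr}(\boldsymbol{\Phi}^2)$ uniformly over $H\in\mathcal{H}$ (Proposition~\ref{prop_1}); no Weyl inequality is invoked, and $\phi_{1,n}$ plays no role there. All of $\delta_n$---both $-\tfrac{1}{2}\log(1-\phi_{1,n})$ and the $\phi_{2,n},\phi_{3,n}$ piece---arises in Type~II, precisely at the step you gloss over as ``a Gaussian volume computation gives the leading factor $(\det\Omega_c)^{-1/2}$.'' The obstacle is that $p_{\underline{\boldsymbol{y}}}(\underline{y})$ is \emph{not} $p_{\mathsf{G}}(\underline{y};\Omega_c)$ but a mixture over the unknown law of $\boldsymbol{H}$, so one cannot simply read off the $\det\Omega_c$ factor. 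The paper's fix is to intersect with an auxiliary event $\underline{\boldsymbol{y}}\in\mathcal{T}_{\eta'}^{(m)}(\Omega_c)$ (whose threshold $\eta'_n=\phi_{2,n}$ is produced by a second Chebyshev bound, Proposition~\ref{prop_2}) and then bound $\sup_{\underline{y}\in\mathcal{T}_{\eta'}^{(m)}(\Omega_c)}p_{\underline{\boldsymbol{y}}}(\underline{y})$ pointwise via two lemmas: Weyl's inequality gives $\inf_{H}\det\Omega_H\geq(1-\phi_{1,n})^m\det\Omega_c$ (Lemma~\ref{lem_55}; this is where $\phi_{1,n}<1$ actually matters---not to keep any eigenvalue of $I_m+\boldsymbol{H}\Sigma_n\boldsymbol{H}^T$ positive, since that matrix always dominates $I_m$, but to keep the determinant ratio nontrivial), and a constrained-quadratic argument gives $\inf_H\inf_{\underline{y}}\underline{y}^T\Omega_H^{-1}\underline{y}\geq m(1-\eta')^+\phi_{3,n}$ (Lemma~\ref{lem_66}). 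Your proposal contains neither the auxiliary typical set for $\underline{\boldsymbol{y}}$ nor these two bounds, and without them it is unclear how you would ever manufacture the specific constants $\phi_{1,n},\phi_{2,n},\phi_{3,n}$ appearing in $\delta_n$.

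A secondary point: the term $\log\bigl(1+(k+1)\|\underline{r}\|_2^2\,\mathrm{tr}(\Sigma_n)/(m+n)\bigr)$ also arises in Type~II, not from a noise-floor inflation but from a sharpened volume bound $\mathrm{Vol}(\mathcal{T}_\eta^{(m+n)}(\Xi_n))\leq 2^{h_{\mathsf{G}}(\Xi_n)+\frac{m+n}{2}\log(1+\eta)}$ (Lemma~\ref{seal}), with $\eta$ pushed down to the Type~I threshold $\eta_n$ of (\ref{mir_111}). The paper explicitly remarks that the textbook bound $2^{h_{\mathsf{G}}+\frac{m+n}{2\ln 2}\eta}$ is too loose here because $\eta_n$ scales with $P$; your sketch does not flag this refinement.
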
 
\begin{proof}
The proof is detailed in Section~V.
\end{proof}
The lower bound $C^*_{LB}$ is achieved by a Gaussian joint-typicality decoder that is tuned to the matrix $H_c$ rather to the unknown channel matrix $\boldsymbol{H}$. In the absence of channel variations, i.e., when $r_0,r_1,\cdots, r_k$ are all equal to zero, one only gets the first term $\frac{1}{2n}\log\det(I_n+H_c^TH_c\Sigma_n)$ in (\ref{lb_00}). The second term $\log(1+(k+1)\|\underline{r}\|_2^2\frac{\mathrm{tr}(\Sigma_n)}{m+n})$ and the third term $\delta_n$ in~(\ref{lb_00}) come to life in the presence of channel variations. They appear in the study of the so-called type~I and type~II error probabilities, respectively. The details are given in Section~V. 

We also present a partial converse result which is stated next. It addresses the worst-case scenario where the array $\boldsymbol{h}_{t,i}$ varies independently along both indices $t$ and $i$ and its marginals are uniform. This is called the worst-case scenario in the sense that the uniform distribution on a given interval has the largest entropy among all continuous distributions with a density supported on that interval.   

\begin{thm}
Assume the array of channel taps $\boldsymbol{h}_{t,i}$ is independent along both indices $t$ and $i$ and $\boldsymbol{h}_{t,i}$ is uniformly distributed on $[c_i-r_i,c_i+r_i]$. Let $\mathcal{C}_n=\{\underline{x}_{n,1},\cdots, \underline{x}_{n,2^{nR}}\}$ be a sequence of codebooks with rate $R$ and vanishingly small probability of error that satisfy the average power constraint in (\ref{ghor_1}). Then 
\begin{eqnarray}
\label{ub_ub}
R\leq \frac{1}{2}\log\Big(1+(k+1)\big(\|\underline{c}\|_2^2+\|\underline{r}\|_2^2/3\big)P\Big)-\kappa,
\end{eqnarray}
where
\begin{eqnarray}
\kappa:=\liminf_{n\to\infty}\frac{1}{n2^{nR}}\sum_{i=1}^{2^{nR}}\sum_{t=1}^{n+k} \frac{1}{2}\log\Big(1+\frac{2}{\pi e}\sum_{j=0}^{k}r^2_{j}x^2_{n,i,t-j}\Big).
\end{eqnarray} 
Here, $x_{n,i,t-j}$ is the $(t-j)^{th}$ symbol of the $i^{th}$ codeword $\underline{x}_{n,i}$ in the $n^{th}$ codebook $\mathcal{C}_n$.
\end{thm}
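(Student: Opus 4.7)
The plan is to combine Fano's inequality with a careful estimate of $\frac{1}{n}I(\underline{\boldsymbol{x}}; \underline{\boldsymbol{y}})$ through the standard decomposition $I = h(\underline{\boldsymbol{y}}) - h(\underline{\boldsymbol{y}} | \underline{\boldsymbol{x}})$. Since $p_{e,n}\to 0$, Fano together with the data-processing inequality $I(\boldsymbol{W};\underline{\boldsymbol{y}})\leq I(\underline{\boldsymbol{x}};\underline{\boldsymbol{y}})$ gives $R \leq \frac{1}{n}I(\underline{\boldsymbol{x}};\underline{\boldsymbol{y}}) + \epsilon_n$ with $\epsilon_n \to 0$, so it suffices to upper bound $h(\underline{\boldsymbol{y}})$ and lower bound $h(\underline{\boldsymbol{y}} | \underline{\boldsymbol{x}})$.

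For the conditional entropy I would exploit independence of $\boldsymbol{h}_{t,i}$ along $t$: given $\underline{\boldsymbol{x}}=\underline{x}$, the outputs $\boldsymbol{y}_1,\ldots,\boldsymbol{y}_m$ are mutually independent, so $h(\underline{\boldsymbol{y}}\mid\underline{x}) = \sum_{t=1}^{m} h(\boldsymbol{y}_t\mid\underline{x})$. After subtracting the deterministic mean,
\begin{eqnarray*}
\boldsymbol{y}_t - \sum_{j=0}^{k} c_j x_{t-j} = \sum_{j=0}^{k}(\boldsymbol{h}_{t,j}-c_j)\,x_{t-j} + \boldsymbol{z}_t
\end{eqnarray*}
is a sum of $k+2$ independent random variables: the $j$-th summand is uniform on $[-r_j|x_{t-j}|,\,r_j|x_{t-j}|]$ with entropy power $2r_j^2 x_{t-j}^2/(\pi e)$, and $\boldsymbol{z}_t$ has entropy power $1$. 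The entropy power inequality then yields
\begin{eqnarray*}
h(\boldsymbol{y}_t\mid\underline{x})\;\geq\;\tfrac{1}{2}\log\Big(2\pi e\,\Big(1+\tfrac{2}{\pi e}\sum_{j=0}^{k} r_j^2 x_{t-j}^2\Big)\Big).
\end{eqnarray*}
Summing over $t=1,\ldots,m$ and averaging over the uniformly chosen codeword index reproduces precisely the $\kappa$-term in the theorem, plus a leading $\frac{m}{2}\log(2\pi e)$ that will cancel against the same term in the $h(\underline{\boldsymbol{y}})$ bound.

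For $h(\underline{\boldsymbol{y}})$ I would chain $h(\underline{\boldsymbol{y}})\leq\sum_t h(\boldsymbol{y}_t)\leq\sum_t\tfrac{1}{2}\log(2\pi e\,\mathbb{E}[\boldsymbol{y}_t^2])$ via the Gaussian maximum-entropy principle. Exploiting independence of $\boldsymbol{h}_{t,i}$ along $i$ (so the cross moments collapse to products of means), a direct expansion gives
\begin{eqnarray*}
\mathbb{E}[\boldsymbol{y}_t^2] = \mathbb{E}\Big[\Big(\sum_{j=0}^{k} c_j \boldsymbol{x}_{t-j}\Big)^2\Big] + \sum_{j=0}^{k}\tfrac{r_j^2}{3}\,\mathbb{E}[\boldsymbol{x}_{t-j}^2] + 1,
\end{eqnarray*}
and the elementary inequality $\big(\sum_{j=0}^{k}a_j\big)^2 \leq (k+1)\sum_{j} a_j^2$ upgrades this to $\mathbb{E}[\boldsymbol{y}_t^2]\leq 1 + (k+1)\sum_{j}(c_j^2+r_j^2/3)\,\mathbb{E}[\boldsymbol{x}_{t-j}^2]$. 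Summing in $t$, swapping the order of summation, and invoking $\sum_{s}\mathbb{E}[\boldsymbol{x}_s^2]\leq nP$ bounds the per-coordinate average by $1 + (k+1)(\|\underline{c}\|_2^2+\|\underline{r}\|_2^2/3)\,nP/m$, after which concavity of $\log$ gives the desired estimate on $\sum_t h(\boldsymbol{y}_t)$.

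Combining the two estimates cancels the $\frac{m}{2}\log(2\pi e)$ factor and, together with Fano, yields for every $n$
\begin{eqnarray*}
R + \tfrac{1}{n 2^{nR}}\sum_{i=1}^{2^{nR}}\sum_{t=1}^{m}\tfrac{1}{2}\log\Big(1+\tfrac{2}{\pi e}\sum_{j=0}^{k} r_j^2 x_{n,i,t-j}^2\Big)\;\leq\;\tfrac{m}{2n}\log\Big(1+(k+1)\Big(\|\underline{c}\|_2^2+\tfrac{\|\underline{r}\|_2^2}{3}\Big)\tfrac{nP}{m}\Big) + o(1).
\end{eqnarray*}
Passing to $\liminf$ on both sides and noting that the right-hand side converges to $\frac{1}{2}\log\big(1+(k+1)(\|\underline{c}\|_2^2+\|\underline{r}\|_2^2/3)P\big)$ (since $m/n\to 1$ and $nP/m\to P$) yields the bound via $\liminf(a_n-b_n)\leq \lim a_n - \liminf b_n$. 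The main delicacy I anticipate lies in the EPI step when some $x_{t-j}$ vanish and the corresponding uniform degenerates to a point mass: those summands must be dropped from the EPI sum, which is legitimate because the strictly positive entropy power of $\boldsymbol{z}_t$ alone already makes the resulting inequality meaningful; beyond that, everything reduces to careful bookkeeping and the limit argument just described.
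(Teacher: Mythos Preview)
Your proposal is correct and follows the same skeleton as the paper: Fano's inequality, the decomposition $I(\underline{\boldsymbol{x}};\underline{\boldsymbol{y}})=h(\underline{\boldsymbol{y}})-h(\underline{\boldsymbol{y}}|\underline{\boldsymbol{x}})$, the entropy power inequality for the conditional term (identical to the paper's argument, including the subtraction of $H_c\underline{x}$), and a Gaussian maximum-entropy bound for $h(\underline{\boldsymbol{y}})$.

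The one genuine technical difference is in how $h(\underline{\boldsymbol{y}})$ is bounded. The paper applies the vector maximum-entropy lemma to get $\frac{1}{2}\log\det(I_m+\mathbb{E}[\boldsymbol{H}Q\boldsymbol{H}^T])$, then uses arithmetic--geometric on the determinant and bounds the diagonal entries $[\boldsymbol{H}Q\boldsymbol{H}^T]_{t,t}$ pathwise via the triangle inequality, the PSD minor bound $|Q_{u,v}|\le\sqrt{Q_{u,u}Q_{v,v}}$, and Cauchy--Schwarz, taking expectations only at the very end. You instead use subadditivity $h(\underline{\boldsymbol{y}})\le\sum_t h(\boldsymbol{y}_t)$, the scalar Gaussian bound, and a direct computation of $\mathbb{E}[\boldsymbol{y}_t^2]$ that exploits independence of the taps along $i$ to kill the cross terms; Jensen then plays the role that AM--GM plays in the paper. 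Both routes land on the same inequality $h(\underline{\boldsymbol{y}})\le\frac{m}{2}\log(2\pi e)+\frac{m}{2}\log\big(1+(k+1)(\|\underline{c}\|_2^2+\|\underline{r}\|_2^2/3)\frac{nP}{m}\big)$. Your route is arguably more elementary and makes the independence hypothesis do more work; the paper's route needs only the marginal second moments of the taps and would survive correlation among $\boldsymbol{h}_{t,0},\ldots,\boldsymbol{h}_{t,k}$.
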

\begin{proof}
See Appendix~A.
\end{proof}
By Theorem~2, if the size of each symbol in every codeword is larger than a constant $x_{\min}>0$, then $\kappa\geq \frac{1}{2}\log(1+\frac{2}{\pi e}\|\underline{r}\|_2^2 x^2_{\min})$ and every achievable rate $R$ satisfies
\begin{eqnarray}
\label{erft}
R\leq\frac{1}{2}\log\frac{1+(k+1)\big(\|\underline{c}\|_2^2+\|\underline{r}\|_2^2/3\big)P}{1+\frac{2}{\pi e}\|\underline{r}\|_2^2\,x_{\min}^2}.
\end{eqnarray}
 If the codewords are constructed over an alphabet that is included inside $(-\infty,-a\sqrt{P}]\cup [a\sqrt{P},\infty)$ for some constant $a>0$, then $x_{\min}\geq a\sqrt{P}$ and the expression on the right side of (\ref{erft}) does not scale with $P$. It is shown in the next section that the lower bound $C^*_{LB}$ in Theorem~1 also saturates at a positive level as $P$ increases regardless of $k\geq 1$. We emphasize that $C^*_{LB}$ is achievable under any joint distribution (not just the I.I.D. model) for the random array $\boldsymbol{h}_{t,i}\in[c_i-r_i,c_i+r_i]$. 

\section{Exploring The Lower Bound in Theorem~1 }
The matrix $H_c^TH_c$ is a symmetric Toeplitz matrix whose $(i,j)$-entry is given by 
\begin{eqnarray}
[H_c^TH_c]_{i,j}=\left\{\begin{array}{cc}
     \sum_{l=0}^{k-|i-j|}c_lc_{l+|i-j|} & |i-j|\leq k   \\
   0   &  \mathrm{otherwise} 
\end{array}\right..
\end{eqnarray}
Let $H_c^TH_c=U\Lambda U^T$ be the eigenvalue decomposition of $H_c^TH_c$ where $U$ is an $n\times n$ orthogonal matrix and the diagonal matrix $\Lambda$ carries the eigenvalues 
\begin{eqnarray}
\lambda_{n,1}\leq \cdots\leq\lambda_{n,n}
\end{eqnarray}
of $H_c^TH_c$ on its diagonal. We choose  
\begin{eqnarray}
\Sigma_n=U^TDU,
\end{eqnarray} 
where 
\begin{eqnarray}
D=\mathrm{diag}(d_{n,1},\cdots, d_{n,n}),
\end{eqnarray}
and $d_{n,1},\cdots, d_{n,n}$ satisfy 
\begin{eqnarray}
\label{vgt_11}
\sum_{i=1}^n d_{n,i}\leq nP
\end{eqnarray}
and 
\begin{eqnarray}
\label{vgt_22}
\lim_{n\to\infty}\frac{1}{n}\max_{1\le i\le n} d_{n,i}=0.
\end{eqnarray}
The conditions in (\ref{vgt_11}) and (\ref{vgt_22}) are the ones in (\ref{sx_1}) and (\ref{sx_2}), respectively.  Then $C^*_{LB}$ can be written as  
\begin{eqnarray}
\label{khand}
C^*_{LB}=\liminf_{n\to\infty}\Big(\frac{1}{2n}\sum_{i=1}^n\log(1+\lambda_{n,i}d_{n,i})-\log\Big(1+(k+1)\|\underline{r}\|_2^2\frac{\sum_{i=1}^nd_{n,i}}{m+n}\Big)-\delta_n\Big).
\end{eqnarray}
Since $\lambda_{n,i}$ is the coefficient of $d_{n,i}$ in $\sum_{i=1}^n\log(1+\lambda_{n,i}d_{n,i})$, we let
\begin{eqnarray}
\label{vgt_33}
0<d_{n,1}\leq d_{n,2}\leq \cdots\leq d_{n,n},
\end{eqnarray}
to ensure $C^*_{LB}$ is the largest. As mentioned earlier, we require $\Sigma_n$ be positive definite and hence, all eigenvalues of $\Sigma_n$ must be positive. The term $\delta_n$ in (\ref{kool}) is given by
\begin{eqnarray}
\label{pookl}
\delta_n&=&-\frac{1}{2}\log\Big(1-\frac{r_s(r_s+2\beta)d_{n,n}}{1+\alpha^2d_{n,1}}\Big)\notag\\
&&+\frac{1}{2\ln2}\Big(1-\Big(1-r_s(r_s+2\beta)\frac{\sum_{i=1}^nd_{n,i}}{m}\Big)^+\frac{1}{1+r_s(r_s+2\beta)d_{n,n}}\Big).
\end{eqnarray}
The lower bound in (\ref{khand}) can be maximized over all choices of $d_{n,1},\cdots, d_{n,n}$ that satisfy the conditions in (\ref{vgt_11}), (\ref{vgt_22}) and (\ref{vgt_33}). For simplicity, we will look into the problem of maximizing $\frac{1}{2n}\sum_{i=1}^n\log(1+\lambda_{n,i}d_{n,i})-\log(1+(k+1)\|\underline{r}\|_2^2\frac{\sum_{i=1}^nd_{n,i}}{m+n})$, i.e., we do not involve $\delta_n$ in the optimization process. The first order KKT conditions are investigated in Appendix~B where the next proposition is established.
\begin{proposition}
\label{prop_11}
Let $\Theta_1$ be the solution to 
\begin{eqnarray}
\label{pre_ziba_1}
 \frac{1}{2\pi}\int_0^{2\pi}(\Theta-|f(\omega)|^{-2})^+\mathrm{d}\omega=P
\end{eqnarray}
and $\Theta_2$ be the solution to 
\begin{eqnarray}
\label{pre_ziba_2}
\frac{1}{2\pi}\int_0^{2\pi}(\Theta-|f(\omega)|^{-2})^+\mathrm{d}\omega=2\Theta-\frac{2}{k+1}\frac{1}{\|\underline{r}\|_2^2}.
\end{eqnarray}
For $i=1,2$ define 
\begin{eqnarray}
I_i:=\frac{1}{2\pi}\int_0^{2\pi}(\Theta_i-|f(\omega)|^{-2})^+\mathrm{d}\omega=\left\{\begin{array}{cc}
   P   &  i=1  \\
   2\Theta_2-\frac{2}{k+1}\frac{1}{\|\underline{r}\|_2^2}   &   i=2
\end{array}\right.,
\end{eqnarray}
\begin{eqnarray}
d_{\min,i}:=(\Theta_i-\alpha^{-2})^+,\,\,\,\,\,d_{\max,i}:=(\Theta_i-\beta^{-2})^+,\,\,\,\,\,
\end{eqnarray}
\begin{eqnarray}
\label{eleven}
\delta_i:=-\frac{1}{2}\log\Big(1-\frac{r_s(r_s+2\beta)d_{\max,i}}{1+\alpha^2d_{\min,i}}\Big)+\frac{1}{2\ln2}\Big(1-\frac{(1-r_s(r_s+2\beta)I_i)^+}{1+r_s(r_s+2\beta)d_{\max,i}}\Big)
\end{eqnarray}
and
\begin{eqnarray}
\label{sang_po}
C^*_{LB,i}:=\frac{1}{4\pi}\int_0^{2\pi}\log[\max(\Theta_i|f(\omega)|^2,1)]\mathrm{d}\omega-\log\Big(1+\frac{k+1}{2}\|\underline{r}\|_2^2I_i\Big)-\delta_i.
\end{eqnarray} 

 Then $C^*$ is bounded from below by $C^*_{LB,1}$. If $I_1\geq I_2$, then $C^*_{LB,2}$ is also a lower bound on $C^*$. 

\end{proposition}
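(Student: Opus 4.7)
The plan is to evaluate the lower bound $C^*_{LB}$ in its diagonalized form~(\ref{khand}) at an explicit water-filling choice of $(d_{n,1},\ldots,d_{n,n})$ and pass to the limit $n\to\infty$ via Szeg\H{o}'s theorem applied to the $n\times n$ symmetric Toeplitz matrix $H_c^TH_c$, whose symbol is $|f(\omega)|^2$. As the excerpt indicates, $\delta_n$ is excluded from the optimization; the relevant concave objective is
\begin{equation*}
\mathcal{F}_n(\underline{d}_n):=\frac{1}{2n}\sum_{i=1}^n\log(1+\lambda_{n,i}d_{n,i})-\log\Big(1+\tfrac{(k+1)\|\underline{r}\|_2^2}{m+n}\sum_{i=1}^n d_{n,i}\Big),
\end{equation*}
to be maximized subject to $\sum_i d_{n,i}\leq nP$ and $d_{n,i}>0$; constraint~(\ref{vgt_22}) and the positive-definiteness of $\Sigma_n$ are automatic once the entries $d_{n,i}$ are strictly positive and uniformly bounded in $n$.

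Forming the Lagrangian and differentiating in $d_{n,i}$ produces the familiar water-filling form $d_{n,i}=(\Theta_n-\lambda_{n,i}^{-1})^+$ for a common level $\Theta_n$ depending on both the power multiplier $\mu_n\geq 0$ and the derivative of the penalty-like second term in $\mathcal{F}_n$. Two regimes arise. If $\mu_n>0$, complementary slackness forces $\sum_i d_{n,i}=nP$ and the limit of $\Theta_n$ is the level $\Theta_1$ defined by~(\ref{pre_ziba_1}); this delivers $C^*_{LB,1}$. If $\mu_n=0$, the stationarity condition, after substituting the asymptotics $(m+n)/n\to 2$ and $\sum_i d_{n,i}/n\to g(\Theta):=\tfrac{1}{2\pi}\int_0^{2\pi}(\Theta-|f(\omega)|^{-2})^+d\omega$, reduces to the self-consistent equation $g(\Theta_2)=2\Theta_2-\tfrac{2}{(k+1)\|\underline{r}\|_2^2}$, which is~(\ref{pre_ziba_2}); this interior solution is feasible precisely when $I_2\leq I_1$, in which case this choice of water level delivers $C^*_{LB,2}$.

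For either admissible choice $\Theta=\Theta_i$, taking $d_{n,i}=(\Theta_i-\lambda_{n,i}^{-1})^+ + \varepsilon_n$ with any $\varepsilon_n\downarrow 0$ that preserves~(\ref{vgt_11}) and enforces the strict positivity~(\ref{vgt_33}), Szeg\H{o}'s theorem yields
\begin{equation*}
\frac{1}{n}\sum_{i=1}^n\log\max(\Theta_i\lambda_{n,i},1)\longrightarrow\frac{1}{2\pi}\int_0^{2\pi}\log\max(\Theta_i|f(\omega)|^2,1)\,d\omega,
\end{equation*}
and $\sum_i d_{n,i}/(m+n)\to I_i/2$, producing the first two terms of $C^*_{LB,i}$. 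For the third term, I would invoke the classical fact that, for Hermitian Toeplitz matrices with a continuous symbol, the extremal eigenvalues converge to the extrema of the symbol, so $\lambda_{n,1}\to\alpha^2$ and $\lambda_{n,n}\to\beta^2$; combined with $d_{n,1}\to d_{\min,i}$ and $d_{n,n}\to d_{\max,i}$, this drives $\phi_{1,n},\phi_{2,n},\phi_{3,n}$ in~(\ref{pookl}) to their expected limits, so that $\delta_n\to\delta_i$ as given in~(\ref{eleven}). The main obstacle is precisely this edge-of-spectrum convergence: Szeg\H{o}'s theorem in its bulk form does not supply it, so the appendix must explicitly invoke the extremal variant for Hermitian Toeplitz matrices; the hypothesis $\phi_{1,n}<1$ of Theorem~1 must also be verified for all large $n$, which follows from $\lambda_{n,n}\to\beta^2$ together with the assumed smallness of $r_s$.
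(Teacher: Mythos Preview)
Your proposal is correct and follows essentially the same route as the paper's Appendix~B: optimize the concave objective $\mathcal{F}_n$ via KKT conditions to obtain the water-filling allocation $d_{n,i}=\max(\Theta-\lambda_{n,i}^{-1},\varepsilon)$, distinguish the active-constraint case ($\sum_i d_{n,i}=nP$, yielding $\Theta_1$) from the interior case ($\mu=0$, yielding $\Theta_2$, valid only when the resulting trace does not exceed $nP$, i.e.\ $I_2\leq I_1$), pass to the limit via Szeg\H{o}'s theorem for the bulk terms, and invoke the extremal-eigenvalue convergence (Corollary~4.2 in Gray) to handle $\delta_n\to\delta_i$. The only cosmetic difference is your additive perturbation $\varepsilon_n\downarrow 0$ versus the paper's lower-bound constraint $d_{n,i}\geq\varepsilon$ followed by $\varepsilon\to 0^+$; both serve the same purpose of enforcing positive-definiteness of $\Sigma_n$.
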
  
\begin{proof}
See Appendix~B.
\end{proof}
There is a slight abuse of notation here. The index $i$ in $\delta_i$ in (\ref{eleven}) only takes on $1,2$ and must not be mistaken with the index $n$ in $\delta_n$ in (\ref{pookl}). We see that if $r_0=r_1=\cdots=r_k=0$, the lower bound $C^*_{LB,1}$ reduces to the capacity $C^*_0$ in (\ref{known_1}). An immediate corollary of this proposition is an upper bound on the reduction in the channel capacity compared to when  all radii $r_i$ are zeros.  
\begin{coro}
We have 
\begin{eqnarray}
\label{omran}
C^*_0-C^*\leq \log\Big(1+\frac{k+1}{2}\|\underline{r}\|_2^2P\Big)+\delta_1,
\end{eqnarray}
where $\delta_1$ is given in (\ref{eleven}).
\end{coro}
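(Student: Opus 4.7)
The plan is to read off the corollary directly from Proposition~1 by matching the definitions of $\Theta_1$, $I_1$, and the first term of $C^*_{LB,1}$ to the corresponding objects in the deterministic capacity formula $C^*_0$.

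First I would observe that the defining equation (\ref{pre_ziba_1}) for $\Theta_1$ is identical, character for character, to the water-filling equation (\ref{faghat}) that determines the parameter $\Theta$ in the known formula (\ref{known_1}) for $C^*_0$. Hence $\Theta_1=\Theta$, and then $I_1=P$ by the very definition of $I_i$ in Proposition~1 at $i=1$.

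Next I would substitute $\Theta_1=\Theta$ into the first term of $C^*_{LB,1}$ in (\ref{sang_po}) and invoke (\ref{known_1}) to identify that term with $C^*_0$. Combining this with $I_1=P$, the expression in (\ref{sang_po}) collapses~to
\begin{equation*}
C^*_{LB,1}=C^*_0-\log\Big(1+\tfrac{k+1}{2}\|\underline{r}\|_2^2\,P\Big)-\delta_1.
\end{equation*}
Proposition~1 asserts $C^*\geq C^*_{LB,1}$ for arbitrary $\underline{c},\underline{r},P$ (no hypothesis $I_1\geq I_2$ is needed for the first lower bound), so rearranging this inequality gives exactly (\ref{omran}).

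There is essentially no obstacle here; the corollary is a bookkeeping consequence of Proposition~1, and the only thing worth double-checking is that the water-filling parameter and the total-power integral in Proposition~1 at $i=1$ coincide with those in the classical deterministic ISI result, which they do by inspection of the two equations. Consequently the whole proof is a one-line substitution followed by a rearrangement.
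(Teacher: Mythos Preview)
Your proposal is correct and mirrors the paper's own proof essentially line for line: both identify $\Theta_1$ with the classical water-filling parameter from (\ref{faghat}), recognize the first integral in (\ref{sang_po}) as $C^*_0$, substitute $I_1=P$, and then rearrange $C^*\ge C^*_{LB,1}$ into (\ref{omran}).
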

\begin{proof}
Since $\Theta_1$ solves the equation in (\ref{faghat}), then $C^*_0=\frac{1}{4\pi}\int_0^{2\pi}\log[\max(\Theta_1|f(\omega)|^2,1)]\mathrm{d}\omega$. Hence, the lower bound $C^*_{LB,1}$ in (\ref{sang_po}) can be written as $C^*_{LB,1}=C^*_0-\log(1+\frac{k+1}{2}\|\underline{r}\|_2^2I_1)-\delta_1$. This verifies (\ref{omran}).
\end{proof}
We note that $\delta_1$ depends on the radii $r_i$ only through their sum $r_s$. Since $\|\underline{r}\|_2^2\leq (\sum_{i=0}^k r_i)^2=r^2_s$, one can loosen the upper bound on $C^*_0-C^*$ in (\ref{omran}) in order to write it entirely in terms of $r_s$ as  
\begin{eqnarray}
\label{pillow}
C^*_0-C^*&\leq& \log\Big(1+\frac{k+1}{2}r^2_sP\Big)-\frac{1}{2}\log\Big(1-\frac{r_s(r_s+2\beta)d_{\max,1}}{1+\alpha^2d_{\min,1}}\Big)\notag\\
&&\hskip1.5cm+\frac{1}{2\ln2}\Big(1-\frac{(1-r_s(r_s+2\beta)P)^+}{1+r_s(r_s+2\beta)d_{\max,1}}\Big).
\end{eqnarray} 
\begin{figure}[t]
\centering
  \includegraphics[scale=0.45]{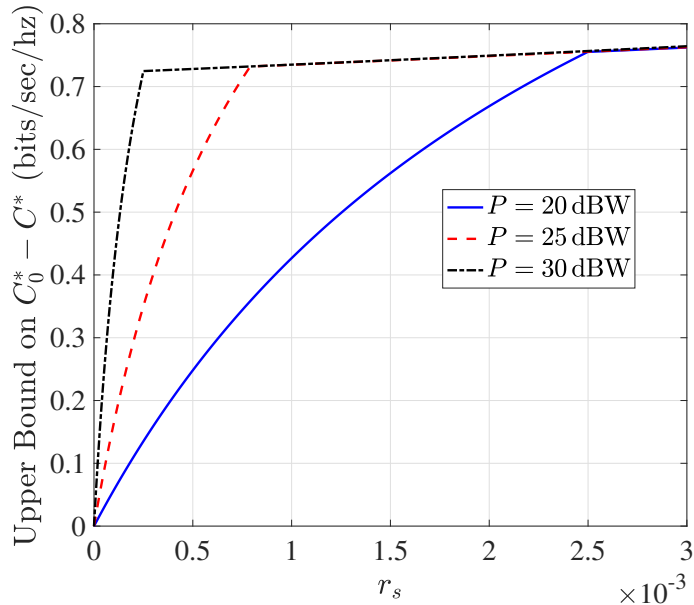}
  \caption{Plots of the upper bound in (\ref{pillow}) on the capacity loss $C_0^*-C^*$ for an ISI channel with $k=2$, $c_0=1$ and $c_1=c_2=0.5$. }
  \label{fig1}
\end{figure}
Figure~\ref{fig1} presents plots of the upper bound on $C^*_0-C^*$ given in (\ref{pillow}) in terms of $r_s$ for several values of $P$. The channel parameters are $k=2$, $c_0=1$, $c_1=c_2=0.5$. The third term on the right side of (\ref{pillow}) saturates at $\frac{1}{2\ln2}\approx0.7213$ bits/sec/hz as soon as $r_s$ is large enough such that $r_s(r_s+2\beta)P=1$. This value of $r_s$ reduces as $P$ increases. The first two terms on the right side of (\ref{pillow}) have a slower growth rate than the third term. This is why the given upper bound on $C^*_0-C^*$ stays around $\frac{1}{2\ln 2}$ for a while as $r_s$ keeps increasing. 

Next, we look into the solutions $\Theta_1$ and $\Theta_2$ to the equations (\ref{pre_ziba_1}) and (\ref{pre_ziba_2}), respectively. Let us define 
\begin{eqnarray}
J:=\frac{1}{2\pi}\int_0^{2\pi}|f(\omega)|^{-2}\mathrm{d}\omega. 
\end{eqnarray}
 It is easy to find $\Theta_1$ when $P$ is sufficiently large. To see this, assume $\Theta_1\geq\max_{0\leq \omega\leq 2\pi}|f(\omega)|^{-2}=\frac{1}{\alpha^2}$. Then $\frac{1}{2\pi}\int_0^{2\pi}(\Theta_1-|f(\omega)|^{-2})^+\mathrm{d}\omega=\Theta_1-J$ and (\ref{pre_ziba_1}) gives $\Theta_1=P+J$. If this value happens to be larger than or equal to $\frac{1}{\alpha^2}$,  it must be the unique solution to (\ref{pre_ziba_1}), i.e., 
\begin{eqnarray}
\label{set_wer}
 P\geq \frac{1}{\alpha^2}-J \implies \Theta_1=P+J.
\end{eqnarray}
Then the lower bound $C^*_{LB,1}$ is written as
\begin{eqnarray}
\label{lake}
C^*_{LB,1}=\frac{1}{2\pi}\int_0^{2\pi}\log|f(\omega)|\mathrm{d}\omega+\frac{1}{2}\log(P+J)-\log\Big(1+\frac{k+1}{2}\|\underline{r}\|_2^2P\Big)-\delta_1.
\end{eqnarray}
 It is also easy to find  $\Theta_2$ when $\|\underline{r}\|_2^2$ is sufficiently small. A similar argument that led to~(\ref{set_wer})~gives  
 \begin{eqnarray}
\frac{2}{k+1}\frac{1}{\|\underline{r}\|_2^2}\geq \frac{1}{\alpha^2}+J\implies \Theta_2=\frac{2}{k+1}\frac{1}{\|\underline{r}\|_2^2}-J.
\end{eqnarray}
Then the lower bound $C^*_{LB,2}$ is written as 
 \begin{eqnarray}
 \label{lake2}
C^*_{LB,2}=\frac{1}{2\pi}\int_0^{2\pi}\log|f(\omega)|\mathrm{d}\omega-\log((k+1)\|\underline{r}\|_2^2)-\frac{1}{2}\log\Big(\frac{2}{k+1}\frac{1}{\|\underline{r}\|_2^2}-J\Big)-\delta_2.
\end{eqnarray}
The lower bound $C^*_{LB,1}$ reaches a maximum and eventually decreases as $P$ increases. The lower bound $C^*_{LB,2}$ does not depend on $P$. In fact, if one maximizes $\frac{1}{2}\log(P+J)-\log(1+\frac{k+1}{2}\|\underline{r}\|_2^2P)$ as a function of~$P$ in the given expression for $C^*_{LB,1}$ in (\ref{lake}), the maximum occurs at $P=\frac{2}{k+1}\frac{1}{\|\underline{r}\|_2^2}-2J$ and the resulting maximum value is precisely $-\log((k+1)\|\underline{r}\|_2^2)-\frac{1}{2}\log(\frac{2}{k+1}\frac{1}{\|\underline{r}\|_2^2}-J)$ which appears in the given expression for $C^*_{LB,2}$ in (\ref{lake2}). Moreover, recall that $C^*_{LB,2}$ is a lower bound on $C^*$ only when $I_1\ge I_2$. This inequality gives $P\geq 2(\frac{2}{k+1}\frac{1}{\|\underline{r}\|_2^2}-J)-\frac{2}{k+1}\frac{1}{\|\underline{r}\|_2^2}=\frac{2}{k+1}\frac{1}{\|\underline{r}\|_2^2}-2J$. Therefore, $\frac{2}{k+1}\frac{1}{\|\underline{r}\|_2^2}-2J$ is also the smallest value for $P$ beyond which $C^*_{LB,2}$ is a valid lower bound on $C^*$. We refer to this value of $P$ as the saturation power and denote it by $P_{sat}$, i.e., 
\begin{eqnarray}
\label{sat_p}
P_{sat}:=\frac{2}{k+1}\frac{1}{\|\underline{r}\|_2^2}-2J.
\end{eqnarray}  
\begin{figure}[t]
\centering
  \includegraphics[scale=0.5]{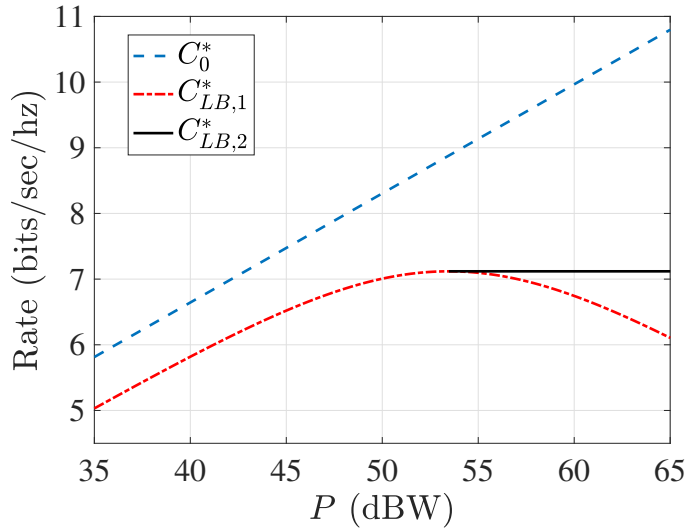}
  \caption{Plots of $C^*_0$ and the lower bounds $C^*_{LB,1}$ and $C^*_{LB,2}$ for an ISI channel with $k=2$, $c_0=1$, $c_1=c_2=0.5$ and $r_0=r_1=r_2=0.001$. The saturation power is $P_{sat}= 53.47$ dBW. }
  \label{fig2}
\end{figure}
As an example, consider an ISI channel with parameters $k=2$, $c_0=1$, $c_1=c_2=0.5$ and $r_0=r_1=r_2=0.001$. Then $P_{sat}\approx 53.47\,\mathrm{dBW}$. Figure~\ref{fig2} presents plots of the capacity $C^*_{0}$ and the lower bounds $C^*_{LB,1}$ and $C^*_{LB,2}$ in terms of $P$. The gap in between $C^*_0$ and $C^*_{LB,1}$ is less than $1$ bit/sec/hz for values of $P$ as large as $45\,\mathrm{dBW}$. In general, we have the next result which gives an upper bound on $C_0^*-C^*$ before saturation occurs. 
\begin{coro}
Assume $\frac{2}{k+1}\frac{1}{\|\underline{r}\|_2^2}\geq \frac{1}{\alpha^2}+J$. Then regardless of the value of $P\leq P_{sat}$, the capacity loss is bounded as 
\begin{eqnarray}
\label{omran_poi}
C_0^*-C^*\leq 1+\frac{1}{2\ln 2}-\frac{1}{2}\log\Big(1-\frac{r_s(r_s+2\beta)}{\alpha^2}\Big).
\end{eqnarray} 
\end{coro}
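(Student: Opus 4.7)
The plan is to invoke Corollary 1 and bound its two constituent terms separately. Corollary 1 gives
\begin{equation*}
C^*_0-C^*\leq \log\Big(1+\tfrac{k+1}{2}\|\underline{r}\|_2^2P\Big)+\delta_1,
\end{equation*}
with $\delta_1$ defined in (\ref{eleven}) through $\Theta_1$, $d_{\min,1}=(\Theta_1-\alpha^{-2})^+$, $d_{\max,1}=(\Theta_1-\beta^{-2})^+$, and $I_1=P$. I would show that under $P\le P_{sat}$ the first summand is at most $1$, and that $\delta_1\le \frac{1}{2\ln 2}-\frac{1}{2}\log\bigl(1-\frac{r_s(r_s+2\beta)}{\alpha^2}\bigr)$; adding these produces (\ref{omran_poi}).

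For the first summand, substituting the definition $P_{sat}=\frac{2}{(k+1)\|\underline{r}\|_2^2}-2J$ gives $\frac{k+1}{2}\|\underline{r}\|_2^2 P\le \frac{k+1}{2}\|\underline{r}\|_2^2 P_{sat}=1-(k+1)\|\underline{r}\|_2^2 J\le 1$, so the logarithm is at most $\log 2=1$. No water-filling analysis is needed here; this step is algebraic manipulation of the definition of $P_{sat}$.

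For $\delta_1$, the second term in (\ref{eleven}), namely $\frac{1}{2\ln 2}\bigl(1-\frac{(1-r_s(r_s+2\beta)I_1)^+}{1+r_s(r_s+2\beta)d_{\max,1}}\bigr)$, is at most $\frac{1}{2\ln 2}$ upon discarding the non-negative fraction. The delicate piece is the remaining term, which reduces to establishing
\begin{equation*}
\frac{d_{\max,1}}{1+\alpha^2 d_{\min,1}}\le \frac{1}{\alpha^2},
\end{equation*}
since this immediately yields $-\frac{1}{2}\log\bigl(1-\frac{r_s(r_s+2\beta)d_{\max,1}}{1+\alpha^2 d_{\min,1}}\bigr)\le -\frac{1}{2}\log\bigl(1-\frac{r_s(r_s+2\beta)}{\alpha^2}\bigr)$ by monotonicity of $-\log(1-x)$. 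I would prove the ratio bound by splitting on the location of the water level $\Theta_1$ relative to $\alpha^{-2}$. If $\Theta_1\le \alpha^{-2}$, then $d_{\min,1}=0$ and $d_{\max,1}\le \Theta_1\le \alpha^{-2}$, so the ratio is at most $\alpha^{-2}$. If $\Theta_1>\alpha^{-2}$, the denominator collapses to $1+\alpha^2(\Theta_1-\alpha^{-2})=\alpha^2\Theta_1$, while $d_{\max,1}=\Theta_1-\beta^{-2}\le \Theta_1$ cancels the $\Theta_1$ in the numerator, again yielding $\alpha^{-2}$.

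The main obstacle is spotting the algebraic simplification $1+\alpha^2 d_{\min,1}=\alpha^2\Theta_1$ valid whenever $\Theta_1\ge \alpha^{-2}$, which is what drives the ratio bound in the non-trivial case; everything else is mechanical. It is worth noting that the standing hypothesis $\frac{2}{(k+1)\|\underline{r}\|_2^2}\ge \alpha^{-2}+J$ is not used in either of the two bounds above; it enters only indirectly, through the observation $J\le \alpha^{-2}$, to guarantee $P_{sat}\ge \alpha^{-2}-J\ge 0$ so that the range $P\le P_{sat}$ in the statement is non-empty.
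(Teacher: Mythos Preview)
Your proof is correct and follows essentially the same approach as the paper: invoke Corollary~1, bound the first summand by $1$ via $P\le P_{sat}$, bound the second piece of $\delta_1$ by $\frac{1}{2\ln 2}$, and establish the key ratio bound $\frac{d_{\max,1}}{1+\alpha^2 d_{\min,1}}\le \alpha^{-2}$. Your case split on $\Theta_1$ versus $\alpha^{-2}$ is slightly cleaner than the paper's split on $P$ versus $\alpha^{-2}-J$, since you avoid invoking the explicit water-filling formula $\Theta_1=P+J$ and the monotonicity of $\Theta_1$ in $P$; the paper uses the hypothesis $\frac{2}{(k+1)\|\underline{r}\|_2^2}\ge \alpha^{-2}+J$ to ensure $P_{sat}\ge \alpha^{-2}-J$ so that its case split is exhaustive, whereas your direct argument on $\Theta_1$ makes this unnecessary.
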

\begin{proof}
By (\ref{omran}) in Corollary~1, 
\begin{eqnarray}
\label{prod_1}
C^*_0-C^*&\leq& \log\Big(1+\frac{k+1}{2}\|\underline{r}\|_2^2P\Big)-\frac{1}{2}\log\Big(1-\frac{r_s(r_s+2\beta)d_{\max,1}}{1+\alpha^2d_{\min,1}}\Big)\notag\\
&&\hskip1.5cm+\frac{1}{2\ln2}\Big(1-\frac{(1-r_s(r_s+2\beta)P)^+}{1+r_s(r_s+2\beta)d_{\max,1}}\Big).
\end{eqnarray} 
Since $P\leq P_{sat}=\frac{2}{k+1}\frac{1}{\|\underline{r}\|_2^2}-2J$, then 
\begin{eqnarray}
\log\Big(1+\frac{k+1}{2}\|\underline{r}\|_2^2P\Big)\leq \log\Big(1+\frac{k+1}{2}\|\underline{r}\|_2^2P_{sat}\Big)=\log(2-(k+1)\|\underline{r}\|_2^2J)\leq \log 2=1.
\end{eqnarray}
The third term on the right side of (\ref{prod_1}) is never larger than $\frac{1}{2\ln 2}$. Thus,
\begin{eqnarray}
\label{perfect_1}
C_0^*-C^*\leq 1+\frac{1}{2\ln 2}-\frac{1}{2}\log\Big(1-\frac{r_s(r_s+2\beta)d_{\max,1}}{1+\alpha^2d_{\min,1}}\Big).
\end{eqnarray}
The condition $\frac{2}{k+1}\frac{1}{\|\underline{r}\|_2^2}\geq \frac{1}{\alpha^2}+J$ gives $P_{sat}\geq \frac{1}{\alpha^2}-J$. We study the cases $ \frac{1}{\alpha^2}-J\leq P\leq P_{sat}$ and $P\leq \frac{1}{\alpha^2}-J$~separately. 
\begin{enumerate}
  \item First, let $P\geq\frac{1}{\alpha^2}-J$.  Then $\Theta_1=P+J$ by (\ref{set_wer}) and we have $d_{\min,1}=(P+J-\alpha^{-2})^+=P+J-\alpha^{-2}$ and $d_{\max,1}=(P+J-\beta^{-2})^+=P+J-\beta^{-2}$. Hence, 
  \begin{eqnarray}
  \label{perfect_2}
\frac{d_{\max,1}}{1+\alpha^2d_{\min,1}}&=&\frac{P+J-\beta^{-2}}{1+\alpha^2(P+J-\alpha^{-2})}\notag\\
&=&\frac{P+J-\beta^{-2}}{\alpha^2(P+J)}\notag\\
&=&\frac{1}{\alpha^2}\Big(1-\frac{\beta^{-2}}{P+J}\Big)\leq\frac{1}{\alpha^2}.
\end{eqnarray}
Then (\ref{perfect_1}) and (\ref{perfect_2}) give the desired result.
  \item Next, let $P\leq \frac{1}{\alpha^2}-J$. We note that the solution $\Theta_1$ to (\ref{pre_ziba_1}) is nondecreasing in terms of $P$. Since the condition $P\geq \frac{1}{\alpha^2}-J$ gives $\Theta_1=P+J$, then the condition $P\leq \frac{1}{\alpha^2}-J$ implies $\Theta_1\leq P+J$. Hence, $d_{\min,1}=(\Theta_1-\alpha^{-2})^+\leq (P+J-\alpha^{-2})^+=0$ and $d_{\max,1}=(\Theta_1-\beta^{-2})^+\leq P+J-\beta^{-2}\leq \alpha^{-2}-\beta^{-2}$. Hence, once again the inequality in (\ref{perfect_2}) holds and the desired result follows. 
\end{enumerate}
\end{proof}
As an example, consider the channel with $k=2$, $c_0=1$, $c_1=c_2=0.5$ and $r_0=r_1=r_2=0.001$.  Then $r_s=0.003$, $\alpha=0.4677$, $\beta=2$ and Corollary~2 guarantees that $C_0^*-C^*\leq 1.7621$ bits/sec/hz regardless of $P\leq P_{sat}= 53.47\,\mathrm{dBW}$.

\section{The Decoder Structure}
The proposed decoder applies Gaussian joint-typicality decoding tuned to the matrix $H_c$ defined in~(\ref{vrty}). Denote the codewords by the independent random vectors $\underline{\boldsymbol{x}}_i\sim \mathrm{N}(\underline{0}_n,\Sigma_n)$ for $i=1,2,\cdots, 2^{nR}$ where $\Sigma_n$ is a positive-definite matrix. 
 Recall the definition of Gaussian typical sets in~(\ref{fgb_111}). The Gaussian joint-typicality decoder looks for the unique index $i=1,\cdots,2^{ nR}$ such that
\begin{eqnarray}
\label{e1}
\underline{\boldsymbol{x}}_i\in \mathcal{T}_{\varepsilon}^{(n)}(\Sigma_n),\,\,\,\underline{\boldsymbol{w}}_i\in \mathcal{T}_{\eta}^{(m+n)}(\Xi_n),\,\,\,
\end{eqnarray}
 where  $\varepsilon,\eta>0$ are constants and $\underline{\boldsymbol{w}}_i$ and $\Xi_n$ are defined by
\begin{eqnarray}
\label{wiii}
\underline{\boldsymbol{w}}_i:=\big[
    \underline{\boldsymbol{x}}^{T}_i\,\,\,\,\,  \underline{\boldsymbol{y}}^{\,T}\big]^T,\,\,\,\,\,\,\,\,\,\Xi_n:=\begin{bmatrix}
    \Sigma_n  & \Sigma_n H_c^T   \\
    H_c\Sigma_n  & I_m+H_c\Sigma_n H_c^T
\end{bmatrix}.
\end{eqnarray}
%
The matrix $\Xi_n$ is the covariance matrix for a vector $[\underline{\widetilde{\boldsymbol{x}}}^T\,\,\,\underline{\boldsymbol{\widetilde{y}}}^T]^T$ where $\underline{\boldsymbol{\widetilde{y}}}=H_c\,\underline{\boldsymbol{\widetilde{x}}}+\underline{\boldsymbol{\widetilde{z}}}$ and $\underline{\boldsymbol{\widetilde{x}}}$ and $\underline{\boldsymbol{\widetilde{z}}}$ are independent $\mathrm{N}(\underline{0}_n,\Sigma_n)$ and $\mathrm{N}(\underline{0}_m,I_m)$ random vectors, respectively. To construct the set $\mathcal{T}_\eta^{(m+n)}(\Xi_n)$, one needs to compute $\Xi_n^{-1}$. It is easy to see that $\Xi_n^{-1}$ admits a closed form given by 
\begin{eqnarray}
\label{sig_inv}
\Xi_n^{-1}=\begin{bmatrix}
    \Sigma_n^{-1}+H_c^TH_c & -H_c^T   \\
    -H_c & I_m 
\end{bmatrix}.
\end{eqnarray}
It will be useful in the course of our computations to note that 
\begin{eqnarray}
\label{det_11}
\det(\Xi_n)=\det(\Sigma_n).
\end{eqnarray}
\section{Error Analysis}
Two types of error are distinguished: 
\begin{enumerate}[(i)]
  \item The transmitted codeword does not satisfy the decoding rule in~(\ref{e1}). We refer to this as the \textit{type~I~error} and denote it by $\mathcal{E}^{(I)}$.
  \item  A codeword different from the transmitted codeword satisfies (\ref{e1}). We refer to this as the \textit{type~II~error} and denote it by $\mathcal{E}^{(II)}$.
\end{enumerate}
   Then 
\begin{eqnarray}
\label{erri_1}
 p_{e,n}\leq \Pr(\mathcal{E}^{(I)})+\Pr(\mathcal{E}^{(II)}).
\end{eqnarray}
 In the following we examine the two terms on the right side of (\ref{erri_1}) separately.
\subsection{The probability of the type~I error}
Without loss of generality, assume $\underline{\boldsymbol{x}}_1$ is the transmitted codeword, i.e., $\underline{\boldsymbol{y}}=\boldsymbol{H}\underline{\boldsymbol{x}}_1+\underline{\boldsymbol{z}}$.  Then
\begin{eqnarray}
\label{ha_1}
\Pr(\mathcal{E}^{(I)})\leq\Pr\big(\underline{\boldsymbol{x}}_1\notin \mathcal{T}_{\varepsilon}^{(n)}(\Sigma_n)\big)+\Pr\big(\underline{\boldsymbol{w}}_1\notin \mathcal{T}_{\eta}^{(m+n)}(\Xi_n)\big),
\end{eqnarray}
 For every $\varepsilon>0$, the first term on the right side of (\ref{ha_1}) tends to zero when $n$ grows due to Theorem~5 in~\cite{Cover-Pombra}. The second term on the right side~(\ref{ha_1}) is studied in the next proposition: 
\begin{proposition}
\label{prop_1}
 Assume 
\begin{eqnarray}
\label{mir_111}
\eta>\eta_n:=(k+1)\|\underline{r}\|_2^2\,\frac{\mathrm{tr}(\Sigma_n)}{m+n}.
\end{eqnarray} 
 Then 
  \begin{eqnarray}
 \label{gbm}
\Pr\big( \underline{\boldsymbol{w}}_1\notin \mathcal{T}_{\eta}^{(m+n)}(\Xi_n)\big)\leq C_n/((m+n)^2(\eta-\eta_n)^2),
\end{eqnarray}
where 
\begin{eqnarray}
\label{cons_C}
C_n:=2m+2n+8(k+1)nP\|\underline{r}\|_2^2+2nPr_s^4\lambda_{\max}(\Sigma_n).
\end{eqnarray}
\end{proposition}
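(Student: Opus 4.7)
The plan is to apply Chebyshev's inequality to a carefully centered version of the quadratic form $\underline{\boldsymbol{w}}_1^T\Xi_n^{-1}\underline{\boldsymbol{w}}_1$. I would first substitute $\underline{\boldsymbol{y}}=\boldsymbol{H}\underline{\boldsymbol{x}}_1+\underline{\boldsymbol{z}}=H_c\underline{\boldsymbol{x}}_1+\boldsymbol{\Delta}\underline{\boldsymbol{x}}_1+\underline{\boldsymbol{z}}$ with $\boldsymbol{\Delta}:=\boldsymbol{H}-H_c$ into the block form of $\Xi_n^{-1}$ in (\ref{sig_inv}). A direct expansion with cancellation of the $H_c$ cross-terms yields
\[
\underline{\boldsymbol{w}}_1^T\Xi_n^{-1}\underline{\boldsymbol{w}}_1=\underline{\boldsymbol{x}}_1^T\Sigma_n^{-1}\underline{\boldsymbol{x}}_1+\|\boldsymbol{\Delta}\underline{\boldsymbol{x}}_1\|_2^2+2\underline{\boldsymbol{x}}_1^T\boldsymbol{\Delta}^T\underline{\boldsymbol{z}}+\|\underline{\boldsymbol{z}}\|_2^2,
\]
whose four summands have expectations $n$, $\mathbb{E}[\mathrm{tr}(\boldsymbol{\Delta}^T\boldsymbol{\Delta}\Sigma_n)]$, $0$ and $m$ respectively. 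This decomposition is the algebraic reason the joint-typicality decoder tuned to $H_c$ behaves well despite the unknown $\boldsymbol{\Delta}$.

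Next I would establish the deterministic sandwich
\[
0\leq \mathrm{tr}(\boldsymbol{\Delta}^T\boldsymbol{\Delta}\Sigma_n)\leq (k+1)\|\underline{r}\|_2^2\,\mathrm{tr}(\Sigma_n)=(m+n)\eta_n.
\]
The upper bound follows from the row-wise Cauchy-Schwarz estimate $\|\boldsymbol{\Delta}\underline{x}\|_2^2\leq (k+1)\|\underline{r}\|_2^2\|\underline{x}\|_2^2$, which is valid pointwise in $\underline{x}$ and $\boldsymbol{\Delta}$ because each row of $\boldsymbol{\Delta}$ has at most $k+1$ nonzero entries bounded by $r_0,\ldots,r_k$, followed by averaging against an $\mathrm{N}(\underline{0}_n,\Sigma_n)$ density. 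Setting
\[
T:=\underline{\boldsymbol{w}}_1^T\Xi_n^{-1}\underline{\boldsymbol{w}}_1-(m+n)-\mathrm{tr}(\boldsymbol{\Delta}^T\boldsymbol{\Delta}\Sigma_n),
\]
one has $\mathbb{E}[T]=0$, and the sandwich implies
\[
\{\underline{\boldsymbol{w}}_1\notin\mathcal{T}_\eta^{(m+n)}(\Xi_n)\}\subseteq\{|T|\geq (m+n)(\eta-\eta_n)\},
\]
so by Chebyshev the proof reduces to establishing $\mathrm{Var}(T)\leq C_n$.

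For the variance bookkeeping I would write $T=(X_1-n)+Y+X_3+(X_4-m)$ with $X_1=\underline{\boldsymbol{x}}_1^T\Sigma_n^{-1}\underline{\boldsymbol{x}}_1$, $Y=\|\boldsymbol{\Delta}\underline{\boldsymbol{x}}_1\|_2^2-\mathrm{tr}(\boldsymbol{\Delta}^T\boldsymbol{\Delta}\Sigma_n)$, $X_3=2\underline{\boldsymbol{x}}_1^T\boldsymbol{\Delta}^T\underline{\boldsymbol{z}}$, $X_4=\|\underline{\boldsymbol{z}}\|_2^2$. Conditioning on $(\underline{\boldsymbol{x}}_1,\boldsymbol{\Delta})$ and invoking the Gaussian odd-moment identities $\mathbb{E}[\underline{\boldsymbol{z}}]=\underline{0}_m$ and $\mathbb{E}[\underline{\boldsymbol{z}}_i\underline{\boldsymbol{z}}_j^2]=0$, every pairwise covariance vanishes except $\mathrm{Cov}(X_1,Y)$. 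Standard facts about Gaussian quadratic forms give $\mathrm{Var}(X_1)=2n$, $\mathrm{Var}(X_4)=2m$, $\mathrm{Var}(X_3)=4\mathbb{E}[\|\boldsymbol{\Delta}\underline{\boldsymbol{x}}_1\|_2^2]\leq 4(k+1)nP\|\underline{r}\|_2^2$, and (via total variance, since $\mathbb{E}[Y\mid\boldsymbol{\Delta}]=0$) $\mathrm{Var}(Y)=2\mathbb{E}[\mathrm{tr}((\boldsymbol{\Delta}^T\boldsymbol{\Delta}\Sigma_n)^2)]$. The latter I would bound via $\mathrm{tr}(M^2)\leq\|M\|\mathrm{tr}(M)$ applied to $M=\Sigma_n^{1/2}\boldsymbol{\Delta}^T\boldsymbol{\Delta}\Sigma_n^{1/2}$, together with the operator-norm estimate $\|\boldsymbol{\Delta}\|\leq r_s$ (provable by a Young-type convolution inequality on the banded Toeplitz structure), yielding $\mathrm{Var}(Y)\leq 2nPr_s^4\lambda_{\max}(\Sigma_n)$. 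The remaining covariance, computed from the quartic identity $\mathbb{E}[(\underline{x}^TA\underline{x})(\underline{x}^TB\underline{x})]=\mathrm{tr}(A\Sigma)\mathrm{tr}(B\Sigma)+2\mathrm{tr}(A\Sigma B\Sigma)$ with $A=\Sigma_n^{-1}$ and $B=\boldsymbol{\Delta}^T\boldsymbol{\Delta}$, evaluates to $\mathrm{Cov}(X_1,Y)=2\mathbb{E}[\mathrm{tr}(\boldsymbol{\Delta}^T\boldsymbol{\Delta}\Sigma_n)]\leq 2(k+1)nP\|\underline{r}\|_2^2$. Summing the four variances together with twice this covariance reproduces $C_n$ exactly.

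The main obstacle I anticipate is the choice of centering. Subtracting the \emph{random} quantity $\mathrm{tr}(\boldsymbol{\Delta}^T\boldsymbol{\Delta}\Sigma_n)$ inside $T$, rather than its expectation $\mathbb{E}[\|\boldsymbol{\Delta}\underline{\boldsymbol{x}}_1\|_2^2]$, is what eliminates the $\mathrm{Var}(\mathbb{E}[\|\boldsymbol{\Delta}\underline{\boldsymbol{x}}_1\|_2^2\mid\boldsymbol{\Delta}])$ contribution that would otherwise be uncontrollable without any assumption on the joint law of $\boldsymbol{H}$. The fact that $\mathrm{tr}(\boldsymbol{\Delta}^T\boldsymbol{\Delta}\Sigma_n)\leq (m+n)\eta_n$ holds pointwise in every realization of $\boldsymbol{H}$ is precisely what permits this centering trick to work universally over the unknown $\mathcal{L}_{\boldsymbol{H}}$.
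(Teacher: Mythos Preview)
Your proposal is correct and is essentially the same argument as the paper's, just organized differently. The paper whitens via $\underline{\boldsymbol{w}}_1=\boldsymbol{A}\underline{\boldsymbol{\nu}}$ with $\underline{\boldsymbol{\nu}}\sim\mathrm{N}(\underline{0}_{m+n},I_{m+n})$, conditions on $\boldsymbol{H}=H$, and then applies Chebyshev using $\mathrm{Var}(\underline{\boldsymbol{\nu}}^T\Phi\underline{\boldsymbol{\nu}})=2\,\mathrm{tr}(\Phi^2)$; Appendix~C expands $\mathrm{tr}(\Phi^2)=m+n+4\,\mathrm{tr}(E^TE\Sigma_n)+\mathrm{tr}((\Sigma_n^{1/2}E^TE\Sigma_n^{1/2})^2)$ and bounds each piece. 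Your four-term decomposition $X_1+\|\boldsymbol{\Delta}\underline{\boldsymbol{x}}_1\|_2^2+X_3+X_4$ is exactly the scalar expansion of $\underline{\boldsymbol{\nu}}^T\Phi\underline{\boldsymbol{\nu}}$, your ``random centering'' by $\mathrm{tr}(\boldsymbol{\Delta}^T\boldsymbol{\Delta}\Sigma_n)$ is precisely the paper's conditional-mean subtraction, and your term-by-term variance/covariance bookkeeping reproduces the Appendix~C trace identity line for line (in particular $2\,\mathrm{Cov}(X_1,Y)+\mathrm{Var}(X_3)=8\,\mathrm{tr}(E^TE\Sigma_n)$ and $\mathrm{Var}(Y)=2\,\mathrm{tr}((\Sigma_n^{1/2}E^TE\Sigma_n^{1/2})^2)$). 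The only cosmetic differences are that the paper derives the key inequality $\mathrm{tr}(E^TE\Sigma_n)\le(k+1)\|\underline{r}\|_2^2\,\mathrm{tr}(\Sigma_n)$ by an entrywise argument rather than your cleaner row-wise Cauchy--Schwarz, and bounds $\mathrm{tr}((\Sigma_n^{1/2}E^TE\Sigma_n^{1/2})^2)$ via Lemma~\ref{lem_1} rather than your $\mathrm{tr}(M^2)\le\|M\|\,\mathrm{tr}(M)$; both routes land on the same constants.
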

\begin{proof}
The event $\underline{\boldsymbol{w}}_1\notin \mathcal{T}_{\eta}^{(m+n)}(\Xi_n)$ is equivalent to $|\frac{1}{m+n}\underline{\boldsymbol{w}}_1^T \Xi_{n}^{-1}\underline{\boldsymbol{w}}_1-1|\geq\eta$. Let us study the term $\underline{\boldsymbol{w}}_1^T \Xi_{n}^{-1}\underline{\boldsymbol{w}}_1$. We write
\begin{eqnarray}
\label{hal_1}
\underline{\boldsymbol{w}}_1=\boldsymbol{A}\underline{\boldsymbol{\nu}},
\end{eqnarray}
where $\boldsymbol{A}$ and $\underline{\boldsymbol{\nu}}\sim \mathrm{N}(\underline{0}_{m+n},I_{m+n})$ are given by\footnote{Here, $\Sigma_n^{1/2}$ is a positive-definite matrix whose square is $\Sigma_n$. }
\begin{eqnarray}
\label{pure_1}
\boldsymbol{A}:=\begin{bmatrix}
   \Sigma_n^{1/2}   & 0_{n,m}   \\
   \boldsymbol{H}\Sigma_n^{1/2}  & I_m 
\end{bmatrix},\,\,\,\,\,\,\,\,\,\underline{\boldsymbol{\nu}}:=\begin{bmatrix}
     \Sigma_n^{-1/2}\underline{\boldsymbol{x}}^{T}_1 & \underline{\boldsymbol{z}}^T
\end{bmatrix}^T.
\end{eqnarray}
Then
\begin{eqnarray}
\label{hal_2}
\underline{\boldsymbol{w}}^T_1\Xi_n^{-1}\underline{\boldsymbol{w}}_1=\underline{\boldsymbol{\nu}}^T\boldsymbol{\Phi}\,\underline{\boldsymbol{\nu}},
\end{eqnarray}
 where $\boldsymbol{\Phi}$ is given by
\begin{eqnarray}
\label{b_b1}
\boldsymbol{\Phi}:=\boldsymbol{A}^T\Xi_n^{-1}\boldsymbol{A}.
\end{eqnarray}
Substituting the expression given for $\Xi_n^{-1}$ in (\ref{sig_inv}) and performing simple algebra, we find that
\begin{eqnarray}
\label{goli_111}
\boldsymbol{\Phi}=\begin{bmatrix}
   I_n+\Sigma_n^{1/2}\boldsymbol{E}^T\boldsymbol{E}\Sigma_n^{1/2}  & \Sigma_n^{1/2}\boldsymbol{E}^T  \\
\boldsymbol{E}     \Sigma_n^{1/2}&  I_{m}
\end{bmatrix},
\end{eqnarray}
where the ``\textit{error matrix}'' $\boldsymbol{E}$ is defined by
\begin{eqnarray}
\label{err_111}
\boldsymbol{E}:=\boldsymbol{H}-H_c,
\end{eqnarray}
i.e., $\boldsymbol{E}$ is the difference between the actual channel matrix $\boldsymbol{H}$ and $H_c$. By (\ref{hal_2})
 \begin{eqnarray}
 \label{behem_1}
\Pr\big(\underline{\boldsymbol{w}}_1\notin \mathcal{T}_{\eta}^{(m+n)}(\Xi_n )\big)=\Pr\Big(\Big|\frac{1}{m+n}\underline{\boldsymbol{\nu}}^{\,T}\boldsymbol{\Phi}\,\underline{\boldsymbol{\nu}}-1\Big|\ge\eta\,\Big).
\end{eqnarray}
  We continue to further bound the term on the right side of (\ref{behem_1}). Recall the range for the random matrix~$\boldsymbol{H}$ was denoted by $\mathcal{H}$ in Section~II. For $H\in \mathcal{H}$, we denote the corresponding realizations for $\boldsymbol{E}$ and $\boldsymbol{\Phi}$ by $E$ and $\Phi$, respectively.   We write
\begin{eqnarray}
\label{doos11_11_D}
&&\Pr\big(|\underline{\boldsymbol{\nu}}^{\,T}\boldsymbol{\Phi}\underline{\boldsymbol{\nu}}-(m+n)|\ge(m+n)\eta\big)\notag\\
&&\hskip1cm\stackrel{}{=}\int_{\mathcal{H}}\Pr\big(|\underline{\boldsymbol{\nu}}^{\,T}\boldsymbol{\Phi}\underline{\boldsymbol{\nu}}-(m+n)|\ge(m+n)\eta\,|\,\boldsymbol{H}=H\big)\mathrm{d}\mathcal{L}_{\boldsymbol{H}}(H),
\end{eqnarray}
where $\mathcal{L}_{\boldsymbol{H}}(\cdot)$ is the probability law for $\boldsymbol{H}$. We have
\begin{eqnarray}
\label{fgki}
&&\Pr\big(|\underline{\boldsymbol{\nu}}^{\,T}\boldsymbol{\Phi}\underline{\boldsymbol{\nu}}-(m+n)|\ge(m+n)\eta\,|\,\boldsymbol{H}=H\big)\notag\\
&&\hskip1cm\stackrel{(a)}{\leq}\Pr\big(|\underline{\boldsymbol{\nu}}^{\,T}\Phi\underline{\boldsymbol{\nu}}-(m+n)|\ge(m+n)\eta\big)\notag\\
&&\hskip1cm\stackrel{(b)}{\leq}\Pr\Big(|\underline{\boldsymbol{\nu}}^{\,T}\Phi\underline{\boldsymbol{\nu}}-\mathbb{E}[\underline{\boldsymbol{\nu}}^{\,T}\Phi\underline{\boldsymbol{\nu}}]|+|\mathbb{E}[\underline{\boldsymbol{\nu}}^{\,T}\Phi\underline{\boldsymbol{\nu}}]-(m+n)|\ge(m+n)\eta\Big)\notag\\
&&\hskip1cm\stackrel{}{=}\Pr\Big(|\underline{\boldsymbol{\nu}}^{\,T}\Phi\underline{\boldsymbol{\nu}}-\mathbb{E}[\underline{\boldsymbol{\nu}}^{\,T}\Phi\underline{\boldsymbol{\nu}}]|\ge(m+n)\big(\eta-|\mathbb{E}[\underline{\boldsymbol{\nu}}^{\,T}\Phi\underline{\boldsymbol{\nu}}]/(m+n)-1|\big)\Big)\notag\\
&&\hskip1cm\stackrel{(c)}{=}\Pr\Big(|\underline{\boldsymbol{\nu}}^{\,T}\Phi\underline{\boldsymbol{\nu}}-\mathbb{E}[\underline{\boldsymbol{\nu}}^{\,T}\Phi\underline{\boldsymbol{\nu}}]|\ge(m+n)\big(\eta-\mathrm{tr}(\Sigma_n^{1/2}E^TE\Sigma_n^{1/2})/(m+n)\big)\Big),
\end{eqnarray} 
where $(a)$ is due to independence of $\boldsymbol{H}$ and $\underline{\boldsymbol{\nu}}$, $(b)$ follows by adding and subtracting $\mathbb{E}[\underline{\boldsymbol{\nu}}^{\,T}\Phi\underline{\boldsymbol{\nu}}]$ and applying the triangle inequality and $(c)$ is due~to 
\begin{eqnarray}
\label{focus_11_D}
\mathbb{E}[\underline{\boldsymbol{\nu}}^{\,T}\Phi\underline{\boldsymbol{\nu}}]=\mathrm{tr}(\Phi)=m+n+\mathrm{tr}(\Sigma_n^{1/2}E^TE\Sigma_n^{1/2}).
\end{eqnarray}
 But, 
\begin{eqnarray}
\label{bill_111_D}
\mathrm{tr}(\Sigma_n^{1/2}E^TE\Sigma_n^{1/2})&\stackrel{(a)}{=}&\mathrm{tr}(E^TE\Sigma_n)\notag\\
&\stackrel{(b)}{=}&|\mathrm{tr}(E^TE\Sigma_n)|\notag\\
&=&\Big|\sum_{i=1}^n[E^TE\Sigma_n]_{i,i}\Big|\notag\\
&=&\Big|\sum_{i=1}^n\sum_{j=1}^n[E^TE]_{i,j}[\Sigma_n]_{j,i}\Big|\notag\\
&\stackrel{(c)}{\leq}&\sum_{i=1}^n\sum_{j=1}^n\big|[E^TE]_{i,j}\big|\big|\big[\Sigma_n]_{j,i}\big|\notag\\
&\stackrel{(d)}{\leq}&\sum_{i=1}^n\sum_{j=1}^n\big|[E^TE]_{i,j}\big|\sqrt{[\Sigma_n]_{i,i}[\Sigma_n]_{j,j}},
\end{eqnarray}
where $(a)$ is due to the identity $\mathrm{tr}(M_1M_2)=\mathrm{tr}(M_2M_1)$, $(b)$ is due to $\mathrm{tr}(E^TE\Sigma_n)\geq 0$, $(c)$ is due to the triangle inequality and $(d)$  is due to the fact that $\Sigma_n$ is a positive definite matrix and hence, each of its central $2\times 2$ minors are nonnegative, i.e., $[\Sigma_n]_{i,i}[\Sigma_n]_{j,j}-[\Sigma_n]^2_{j,i}\geq 0$. Moreover, 
\begin{eqnarray}
\label{tir_14}
\big|[E^TE]_{i,j}\big|&=&\Big|\sum_{l=1}^mE_{l,i}E_{l,j}\Big|\notag\\&\stackrel{(a)}{\leq}& \sum_{l=1}^m|E_{l,i}||E_{l,j}|\notag\\&\stackrel{(b)}{=}&\sum_{\substack{l=1\\0\leq l-i,l-j\leq k}}^m|E_{l,i}||E_{l,j}|\notag\\&\stackrel{(c)}{\leq}& \sum_{\substack{l=1\\0\leq l-i,l-j\leq k}}^mr_{l-i}r_{l-j},
\end{eqnarray}
where $(a)$ is due to the triangle inequality, $(b)$ is due to $E_{l,i}=0$ for $l-i<0$ and $l-i>k$ and $(c)$ is due to $|E_{l,i}|\leq r_{l-i}$ for $0\leq l-i\leq k$. By (\ref{bill_111_D}) and (\ref{tir_14}),
\begin{eqnarray}
\label{bill_111_D_111}
\mathrm{tr}(\Sigma_n^{1/2}E^TE\Sigma_n^{1/2})&\leq& \sum_{i=1}^n\sum_{j=1}^n\sum_{\substack{l=1\\0\leq l-i,l-j\leq k}}^mr_{l-i}r_{l-j}\sqrt{[\Sigma_n]_{i,i}[\Sigma_n]_{j,j}}\notag\\
&\stackrel{(a)}{=}&\sum_{l=1}^m\sum_{\substack{i=1\\ 0\leq l-i\leq k}}^n\,\sum_{\substack{j=1\\ 0\leq l-j\leq k}}^nr_{l-i}r_{l-j}\sqrt{[\Sigma_n]_{i,i}[\Sigma_n]_{j,j}}\notag\\
&=&\sum_{l=1}^m\Big(\sum_{\substack{i=1\\ 0\leq l-i\leq k}}^nr_{l-i}\sqrt{[\Sigma_n]_{i,i}}\,\Big)^2\notag\\
&\stackrel{(b)}{\leq}&\sum_{l=1}^m\Big(\sum_{\substack{i=1\\ 0\leq l-i\leq k}}^nr_{l-i}^2\Big)\Big(\sum_{\substack{i=1\\ 0\leq l-i\leq k}}^n[\Sigma_n]_{i,i}\Big)\notag\\
&\stackrel{(c)}{\leq }&\|\underline{r}\|_2^2\sum_{l=1}^m\sum_{\substack{i=1\\ 0\leq l-i\leq k}}^n[\Sigma_n]_{i,i}\notag\\
&\stackrel{(d)}{= }&\|\underline{r}\|_2^2\sum_{i=1}^n[\Sigma_n]_{i,i}\sum_{\substack{l=1\\ 0\leq l-i\leq k}}^m1\notag\\
&\stackrel{(e)}{\leq}&(k+1)\|\underline{r}\|_2^2\sum_{i=1}^n[\Sigma_n]_{i,i}\notag\\
&=&(k+1)\|\underline{r}\|_2^2\,\mathrm{tr}(\Sigma_n),
\end{eqnarray}
where in $(a)$ and $(d)$ we have changed the order of summations, $(b)$ is due to Cauchy-Schwarz  inequality, 
 $(c)$ is due to $\sum_{\substack{i=1\\ 0\leq l-i\leq k}}^nr_{l-i}^2\leq \|\underline{r}\|_2^2$ and $(e)$ is due to $\sum_{\substack{l=1\\ 0\leq l-i\leq k}}^m1\leq k+1$.
 By~(\ref{fgki}) and (\ref{bill_111_D_111}) and the definition of $\eta_n$ in (\ref{mir_111}),   
\begin{eqnarray}
\label{paolo}
&&\Pr\big(|\underline{\boldsymbol{\nu}}^{\,T}\boldsymbol{\Phi}\underline{\boldsymbol{\nu}}-(m+n)|\ge(m+n)\eta\,|\,\boldsymbol{H}=H\big)\notag\\
&&\hskip1cm\leq \Pr\big(|\underline{\boldsymbol{\nu}}^{\,T}\Phi\underline{\boldsymbol{\nu}}-\mathbb{E}[\underline{\boldsymbol{\nu}}^{\,T}\Phi\underline{\boldsymbol{\nu}}]|\ge(m+n)\big(\eta-\eta_n\big)\big)\notag\\
&&\hskip1cm\stackrel{(a)}{\leq} \frac{\mathrm{Var}(\boldsymbol{\underline{\nu}}^T\Phi\underline{\boldsymbol{\nu}})}{(m+n)^2(\eta-\eta_n)^2}\notag\\
&&\hskip1cm\stackrel{(b)}{=} \frac{2\mathrm{tr}(\Phi^2)}{(m+n)^2(\eta-\eta_n)^2},
\end{eqnarray}
where $(a)$ is due to $\eta>\eta_n$ and Chebyshev's inequality and $(b)$ is due to $\mathrm{Var}(\boldsymbol{\underline{\nu}}^T\Phi\underline{\boldsymbol{\nu}})=2\mathrm{tr}(\Phi^2)$.  It is shown in Appendix~C that 
\begin{eqnarray}
\label{c_n}
\mathrm{tr}(\Phi^2)\leq C_n/2,
\end{eqnarray}
 where $C_n$ is given in (\ref{cons_C}). Therefore, 
 \begin{eqnarray}
 \label{brad_pit}
\Pr\big(|\underline{\boldsymbol{\nu}}^{\,T}\boldsymbol{\Phi}\underline{\boldsymbol{\nu}}-(m+n)|\ge(m+n)\eta\,|\,\boldsymbol{H}=H\big)\leq \frac{C_n}{(m+n)^2(\eta-\eta_n)^2}.
\end{eqnarray}
The right side of (\ref{brad_pit}) does not depend on the choice of $H\in \mathcal{H}$. By (\ref{behem_1}), (\ref{doos11_11_D}) and (\ref{brad_pit}), we arrive at the promised bound in (\ref{gbm}).  
\end{proof}

By the previous proposition, we see that if $\lim_{n\to\infty}\frac{1}{n}\lambda_{\max}(\Sigma_n)=0$ and $\eta>\limsup_{n\to\infty}\eta_n$, then 
\begin{eqnarray}
\lim_{n\to\infty}\Pr\big(\underline{\boldsymbol{w}}_1\notin \mathcal{T}_{\eta}^{(m+n)}(\Xi_n)\big)=0.
\end{eqnarray}
\subsection{The probability of the type~II error}
Let $\underline{\boldsymbol{x}}_1$ be the transmitted codeword. Define
\begin{eqnarray}
\label{omega_c}
\Omega_c:=I_m+H_c\Sigma_n H_c^T,
\end{eqnarray}
where $H_c$ is defined in (\ref{vrty}).
  Let $\eta'>0$ to be determined. The probability of the type~II error can be bounded as  
\begin{eqnarray}
\label{pish_11_11}
\Pr(\mathcal{E}^{(II)})&=&\Pr\big(\exists\,i\neq 1,\,\,\,  \underline{\boldsymbol{x}}_i\in \mathcal{T}_{\varepsilon}^{(n)}(\Sigma_n),\underline{\boldsymbol{w}}_i\in  \mathcal{T}_{\eta}^{(m+n)}(\Xi_n)\big)\notag\\&\le&\Pr\big(\underline{\boldsymbol{y}}\notin \mathcal{T}_{\eta'}^{(m)}(\Omega_c)\big)\notag\\
&&\hskip1cm+\Pr\big(\exists\,i\neq 1,\,\,\,\underline{\boldsymbol{x}}_i\in \mathcal{T}_{\varepsilon}^{(n)}(\Sigma_n), \underline{\boldsymbol{w}}_i\in \mathcal{T}_{\eta}^{(m+n)}(\Xi_n),  \underline{\boldsymbol{y}}\in \mathcal{T}_{\eta'}^{(m)}(\Omega_c) \big)\notag\\
&\le&\Pr\big(\underline{\boldsymbol{y}}\notin \mathcal{T}_{\eta'}^{(m)}(\Omega_c)\big)\notag\\
&&\hskip1cm+\sum_{i=2}^{ 2^{nR}}\Pr\big( \underline{\boldsymbol{x}}_i\in \mathcal{T}_{\varepsilon}^{(n)}(\Sigma_n), \underline{\boldsymbol{w}}_i\in \mathcal{T}_{\eta}^{(m+n)}(\Xi_n),  \underline{\boldsymbol{y}}\in \mathcal{T}_{\eta'}^{(m)}(\Omega_c) \big),
\end{eqnarray} 
where we have applied the union bound in the last step. The next proposition studies the first term on the right side of~(\ref{pish_11_11}):
\begin{proposition}
\label{prop_2}
Assume 
\begin{eqnarray}
\label{berf_111}
\eta'>\eta'_n:=r_s(r_s+2\beta)\frac{\mathrm{tr}(\Sigma_n)}{m}.
\end{eqnarray} 
 Then 
\begin{eqnarray}
\label{peter_111}
\Pr\big(\underline{\boldsymbol{y}}\notin \mathcal{T}_{\eta'}^{(m)}(\Omega_c)\big)\leq C_n'/(m^2(\eta'-\eta'_n)^2),
\end{eqnarray}
where 
\begin{eqnarray}
\label{cons_CP}
C'_n:=2m+4(\beta+r_s)^2nP+2nP(\beta+r_s)^4\lambda_{\max}(\Sigma_n).
\end{eqnarray}
\end{proposition}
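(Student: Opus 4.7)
The plan is to follow the template of Proposition~\ref{prop_1}. Conditional on $\boldsymbol{H}=H$, the output vector $\underline{\boldsymbol{y}}=H\underline{\boldsymbol{x}}_1+\underline{\boldsymbol{z}}$ is zero-mean Gaussian with covariance $M:=I_m+H\Sigma_n H^T$. Writing $\underline{\boldsymbol{y}}=C\underline{\boldsymbol{\mu}}$ where $CC^T=M$ and $\underline{\boldsymbol{\mu}}\sim\mathrm{N}(\underline{0}_m,I_m)$, the event $\underline{\boldsymbol{y}}\notin\mathcal{T}_{\eta'}^{(m)}(\Omega_c)$ becomes a concentration statement for the quadratic form $\underline{\boldsymbol{\mu}}^T\Psi\underline{\boldsymbol{\mu}}$ with $\Psi:=C^T\Omega_c^{-1}C$. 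Splitting $|\underline{\boldsymbol{\mu}}^T\Psi\underline{\boldsymbol{\mu}}-m|\leq |\underline{\boldsymbol{\mu}}^T\Psi\underline{\boldsymbol{\mu}}-\mathrm{tr}(\Psi)|+|\mathrm{tr}(\Psi)-m|$ by the triangle inequality (exactly as in (\ref{fgki})) reduces the task to (i) bounding the mean bias $|\mathrm{tr}(\Psi)-m|$ by $m\eta'_n$, and (ii) bounding the variance $2\mathrm{tr}(\Psi^2)$ by $C'_n$. Chebyshev's inequality, followed by integration over $\mathcal{H}$ as in (\ref{doos11_11_D}), then yields (\ref{peter_111}).

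For the mean bias, $\mathrm{tr}(\Psi)-m=\mathrm{tr}(\Omega_c^{-1}(M-\Omega_c))=\mathrm{tr}(\Omega_c^{-1}\Delta)$, where $\Delta:=H\Sigma_n H^T-H_c\Sigma_n H_c^T=H_c\Sigma_n E^T+E\Sigma_n H_c^T+E\Sigma_n E^T$ and $E:=H-H_c$. Two structural facts drive the estimate: $\|\Omega_c^{-1}\|\leq 1$ since $\Omega_c\succeq I_m$, together with the operator-norm bounds $\|H_c\|\leq \beta$ and $\|E\|\leq r_s$. The latter follows by decomposing $E=\sum_{i=0}^{k} E^{(i)}$ into its $k+1$ subdiagonals, each carrying at most one nonzero entry per row and column with magnitude at most $r_i$; the former is the standard Toeplitz symbol bound. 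Combining the Cauchy-Schwarz estimate $|\mathrm{tr}(\Omega_c^{-1}H_c\Sigma_n E^T)|\leq \sqrt{\mathrm{tr}(E^TE\Sigma_n)\,\mathrm{tr}(H_c^T\Omega_c^{-2}H_c\Sigma_n)}\leq r_s\beta\,\mathrm{tr}(\Sigma_n)$ with $\mathrm{tr}(\Omega_c^{-1}E\Sigma_n E^T)\leq \mathrm{tr}(E^TE\Sigma_n)\leq r_s^2\mathrm{tr}(\Sigma_n)$ produces $|\mathrm{tr}(\Omega_c^{-1}\Delta)|\leq r_s(r_s+2\beta)\mathrm{tr}(\Sigma_n)$, i.e., a bias of at most $\eta'_n$ after dividing by $m$.

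The main technical obstacle is the variance bound. Expanding $\Omega_c^{-1}M=I_m+\Omega_c^{-1}\Delta$ and using cyclicity gives $\mathrm{tr}(\Psi^2)=\mathrm{tr}((\Omega_c^{-1}M)^2)=m+2\mathrm{tr}(\Omega_c^{-1}\Delta)+\mathrm{tr}((\Omega_c^{-1}\Delta)^2)$. The linear terms are already controlled by $m+2(\beta+r_s)^2 nP$ via the previous paragraph together with the trivial inequality $r_s(r_s+2\beta)\leq (\beta+r_s)^2$. For the quadratic remainder, the inequality $|\mathrm{tr}(A^2)|\leq \|A\|_F^2$ (Cauchy-Schwarz in the Frobenius inner product, since $\mathrm{tr}(A^2)=\langle A^T,A\rangle_F$), combined with $\Omega_c^{-2}\preceq I_m$, yields $|\mathrm{tr}((\Omega_c^{-1}\Delta)^2)|\leq \|\Omega_c^{-1}\Delta\|_F^2=\mathrm{tr}(\Omega_c^{-2}\Delta^2)\leq \|\Delta\|_F^2$. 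The submultiplicative inequality $\|AB\|_F\leq \|A\|\,\|B\|_F$ applied term-by-term to the three pieces of $\Delta$, together with $\|\Sigma_n\|_F^2\leq \lambda_{\max}(\Sigma_n)\,\mathrm{tr}(\Sigma_n)\leq \lambda_{\max}(\Sigma_n)\,nP$, then gives $\|\Delta\|_F\leq r_s(r_s+2\beta)\sqrt{\lambda_{\max}(\Sigma_n)\,nP}\leq (\beta+r_s)^2\sqrt{\lambda_{\max}(\Sigma_n)\,nP}$, so that $\mathrm{tr}((\Omega_c^{-1}\Delta)^2)\leq (\beta+r_s)^4\,nP\,\lambda_{\max}(\Sigma_n)$. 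Summing the three pieces matches the target $\mathrm{tr}(\Psi^2)\leq C'_n/2$ with $C'_n$ as in (\ref{cons_CP}), and Chebyshev together with uniformity in $H\in\mathcal{H}$ completes the argument.
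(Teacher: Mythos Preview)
Your proof is correct and follows the same overall template as the paper: condition on $\boldsymbol{H}=H$, express $\underline{\boldsymbol{y}}^T\Omega_c^{-1}\underline{\boldsymbol{y}}$ as a quadratic form in a standard Gaussian, bound the mean bias $|\mathrm{tr}(\Psi)-m|$ by $m\eta'_n$ using $\|\Omega_c^{-1}\|\leq 1$, $\|H_c\|\leq\beta$, $\|E\|\leq r_s$, bound the variance $2\,\mathrm{tr}(\Psi^2)$ by $C'_n$, and finish with Chebyshev uniformly in $H\in\mathcal{H}$.

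The two organizational differences are worth noting. First, you work with an $m$-dimensional whitening $\underline{\boldsymbol{y}}=C\underline{\boldsymbol{\mu}}$ with $CC^T=I_m+H\Sigma_nH^T$, whereas the paper uses the $(m+n)$-dimensional representation $\underline{\boldsymbol{y}}=\boldsymbol{B}\underline{\boldsymbol{\nu}}$ with $\boldsymbol{B}=[H\Sigma_n^{1/2}\ \ I_m]$; since $BB^T=CC^T$, the resulting traces $\mathrm{tr}(\Psi)$ and $\mathrm{tr}(\Psi^2)$ coincide, so this is purely cosmetic. Second, for the variance you expand perturbatively, $\mathrm{tr}(\Psi^2)=\mathrm{tr}((I_m+\Omega_c^{-1}\Delta)^2)=m+2\,\mathrm{tr}(\Omega_c^{-1}\Delta)+\mathrm{tr}((\Omega_c^{-1}\Delta)^2)$, recycling the bias estimate for the linear term and controlling the quadratic term via $\|\Omega_c^{-1}\Delta\|_F^2\leq\|\Delta\|_F^2\leq [r_s(r_s+2\beta)]^2\|\Sigma_n\|_F^2$. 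The paper instead expands the block form of $\Psi$ (Appendix~G) and bounds each piece directly through $\|H\|\leq\beta+r_s$. Both routes land on the same $C'_n/2$; yours has the minor advantage of reusing the bias computation and making the perturbative structure in $\Delta=H\Sigma_nH^T-H_c\Sigma_nH_c^T$ explicit.
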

\begin{proof} 
Let us write
\begin{eqnarray}
\underline{\boldsymbol{y}}=\boldsymbol{B}\underline{\boldsymbol{\nu}},
\end{eqnarray}
where 
\begin{eqnarray}
\boldsymbol{B}:=\big[
     \boldsymbol{H}\Sigma_n^{1/2}\,\,\, I_m\big]
\end{eqnarray}
 and $\boldsymbol{\underline{\nu}}$ is given in (\ref{pure_1}).
Then 
\begin{eqnarray}
\label{al_1}
\underline{\boldsymbol{y}}^T\Omega_c^{-1}\underline{\boldsymbol{y}}=\underline{\boldsymbol{\nu}}^T\boldsymbol{\Psi}\boldsymbol{\underline{\nu}},
\end{eqnarray} 
where $\boldsymbol{\Psi}$ is defined by
\begin{eqnarray}
\label{psi_09}
\boldsymbol{\Psi}:=\boldsymbol{B}^T\Omega_c^{-1}\boldsymbol{B}=\begin{bmatrix}
      \Sigma_n^{1/2}\boldsymbol{H}^T \Omega_c^{-1}\boldsymbol{H}\Sigma_n^{1/2}  & \Sigma_n^{1/2}\boldsymbol{H}^T \Omega_c^{-1} \\
        \Omega_c^{-1}\boldsymbol{H}\Sigma_n^{1/2} & \Omega_c^{-1}
\end{bmatrix}.
\end{eqnarray}
By (\ref{al_1}), 
\begin{eqnarray}
\label{yellp_1}
\Pr\big(\underline{\boldsymbol{y}}\notin \mathcal{T}_{\eta'}^{(m)}(\Omega_c)\big)&=&\Pr\Big(\Big|\frac{1}{m}\underline{\boldsymbol{y}}^{\,T}\Omega_c^{-1}\,\underline{\boldsymbol{y}}-1\Big|\ge\eta'\,\Big)\notag\\&=&\Pr\big(|\underline{\boldsymbol{\nu}}^T\boldsymbol{\Psi}\boldsymbol{\underline{\nu}}-m|\geq m\eta'\big).
\end{eqnarray}
Recall the range for the random matrix~$\boldsymbol{H}$ was denoted by $\mathcal{H}$ in Section~II. For $H\in \mathcal{H}$, we denote the corresponding realizations for $\boldsymbol{E}$ (the error matrix defined in (\ref{err_111})) and $\boldsymbol{\Psi}$ by $E$ and $\Psi$, respectively.   We write
\begin{eqnarray}
\label{doos11_11_D_11}
\Pr\big(|\underline{\boldsymbol{\nu}}^{\,T}\boldsymbol{\Psi}\underline{\boldsymbol{\nu}}-m|\ge m\eta'\big)\stackrel{}{=}\int_{\mathcal{H}}\Pr\big(|\underline{\boldsymbol{\nu}}^{\,T}\boldsymbol{\Psi}\underline{\boldsymbol{\nu}}-m|\ge m\eta'\,|\,\boldsymbol{H}=H\big)\mathrm{d}\mathcal{L}_{\boldsymbol{H}}(H),
\end{eqnarray}
where $\mathcal{L}_{\boldsymbol{H}}(\cdot)$ is the probability law for $\boldsymbol{H}$. Following similar lines of reasoning as in (\ref{fgki}),
\begin{eqnarray}
\label{alot_11_DD}
\Pr\big(|\underline{\boldsymbol{\nu}}^{\,T}\boldsymbol{\Psi}\underline{\boldsymbol{\nu}}-m|\ge m\eta'\,|\,\boldsymbol{H}=H\big)= \Pr\Big(|\underline{\boldsymbol{\nu}}^{\,T}\Psi\underline{\boldsymbol{\nu}}-\mathbb{E}[\underline{\boldsymbol{\nu}}^{\,T}\Psi\underline{\boldsymbol{\nu}}]|\ge m\big(\eta'-|\mathbb{E}[\underline{\boldsymbol{\nu}}^{\,T}\Psi\underline{\boldsymbol{\nu}}]/m-1|\big)\Big).
\end{eqnarray}
We note that
\begin{eqnarray}
\label{bolk_11_DD}
\mathbb{E}[\boldsymbol{\underline{\nu}}^T\Psi\boldsymbol{\underline{\nu}}]&\stackrel{}{=}&\mathrm{tr}(\Psi)\notag\\
&\stackrel{(a)}{=}&\mathrm{tr}(\Omega_c^{-1}+\Sigma_n^{1/2}H^T \Omega_c^{-1}H\Sigma_n^{1/2} )\notag\\
&\stackrel{(b)}{=}&\mathrm{tr}(\Omega_c^{-1}+ \Omega_c^{-1}H\Sigma_nH^T)\notag\\
&=&\mathrm{tr}\big(\Omega_c^{-1}(I_m+H\Sigma_nH^T )\big )\notag\\
&\stackrel{(c)}{=}&\mathrm{tr}\big(\Omega_c^{-1}(\underbrace{I_m+H_c\Sigma_n H_c^T}_{=\Omega_c}+E\Sigma_n H_c^T+H_c\Sigma_nE^T+E\Sigma_nE^T)\big)\notag\\
&\stackrel{(d)}{=}&m+2\mathrm{tr}(\Omega_c^{-1}H_c\Sigma_nE^T)+\mathrm{tr}(\Omega_c^{-1}E\Sigma_nE^T),
\end{eqnarray}
where $(a)$ follows by the expression of $\Psi$ in (\ref{psi_09}), $(b)$ uses the identity $\mathrm{tr}(M_1M_2)=\mathrm{tr}(M_2M_1)$ twice to write $\mathrm{tr}(\Sigma_n^{1/2}H^T \Omega_c^{-1}H\Sigma_n^{1/2})=\mathrm{tr}(\Omega_c^{-1}H\Sigma_nH^T)$,  the error matrix $E$ in $(c)$ is defined in (\ref{err_111}) and $(d)$ uses the same trace identity used in $(b)$ together with $\mathrm{tr}(M)=\mathrm{tr}(M^T)$.\footnote{We have $\mathrm{tr}(\Omega_c^{-1}E\Sigma_nH_c^T)=\mathrm{tr}((\Omega_c^{-1}E\Sigma_nH_c^T)^T)=\mathrm{tr}((H_c\Sigma_nE^T\Omega_c^{-1})=\mathrm{tr}(\Omega_c^{-1}H_c\Sigma_nE^T)$.}
Then 
\begin{eqnarray}
\label{ashegh_1}
|\mathbb{E}[\underline{\boldsymbol{\nu}}^{\,T}\Psi\underline{\boldsymbol{\nu}}]/m-1|&=&\big|2\mathrm{tr}(\Omega_c^{-1}H_c\Sigma_nE^T)+\mathrm{tr}(\Omega_c^{-1}E\Sigma_nE^T)\big|\notag\\
&\stackrel{}{\leq}&2\big|\mathrm{tr}(\Omega_c^{-1}H_c\Sigma_nE^T)\big|+\big|\mathrm{tr}(\Omega_c^{-1}E\Sigma_nE^T)\big|,
\end{eqnarray}
where we have used the triangle inequality. Next we compute upper bounds on the two terms on the right side~of~(\ref{ashegh_1}). We need the following key lemma: 
\begin{lem}
\label{lem_1}
Let $M_1$ and $M_2$ be matrices of sizes $m\times n$ and $n\times m$, respectively. Then 
\begin{eqnarray}
\label{yadam_1}
\|M_1M_2\|_2\leq \|M_1\|\|M_2\|_2
\end{eqnarray}
and 
\begin{eqnarray}
\label{yadam_2}
\|M_1M_2\|_2\leq \|M_2\|\|M_1\|_2
\end{eqnarray}
\begin{proof}
This is Problem 5.6.P20 in \cite{Horn}. A proof is provided in Appendix~D for completeness.  
\end{proof}
\end{lem} We have the thread of inequalities
\begin{eqnarray}
\label{babel_1}
|\mathrm{tr}(\Omega_c^{-1}H_c\Sigma_nE^T)|&=&|\mathrm{tr}(\Omega_c^{-1}H_c\Sigma^{1/2}_n\,\Sigma_n^{1/2}E^T)|\notag\\
&\stackrel{(a)}{\le}&\|\Omega_c^{-1}H_c\Sigma_n^{1/2}\|_2\|\Sigma_n^{1/2}E\|_2\notag\\
&\stackrel{(b)}{\le}&\|\Omega_c^{-1}H_c\|\|\Sigma_n^{1/2}\|_2\|E\|\|\Sigma_n^{1/2}\|_2\notag\\
&=&\|\Omega_c^{-1}H_c\|\,\|E\|\,\|\Sigma_n^{1/2}\|^2_2\notag\\
&\stackrel{(c)}{\leq}&\|\Omega_c^{-1}\|\|H_c\|\,\|E\|\,\|\Sigma_n^{1/2}\|^2_2\notag\\
&\stackrel{(d)}{\leq}&r_s \beta\mathrm{tr}(\Sigma_n),
\end{eqnarray}
where $(a)$ is due to Cauchy-Schwarz inequality $|\mathrm{tr}(XY^T)|\leq \|X\|_2\|Y\|_2$, $(b)$ applies Lemma~\ref{lem_1} twice, $(c)$ is due to $\|\cdot\|$ being sub-multiplicative and $(d)$ holds thanks to $\|\Sigma_n^{1/2}\|_2^2=\mathrm{tr}(\Sigma_n)$ and the three inequalities 
\begin{eqnarray}
\label{mahtab}
\|\Omega_c^{-1}\|\leq 1,\,\,\,\,\,\|H_c\|\leq \beta,\,\,\,\,\,\|E\|\leq r_s,
\end{eqnarray} 
where $r_s$ and $\beta$ are defined in (\ref{ehj_1}) and (\ref{min_max}), respectively. The first inequality $\|\Omega_c^{-1}\|\leq 1$ is due to $\|\Omega_c^{-1}\|^2=\lambda_{\max}(\Omega_c^{-2})$ and the fact that none of eigenvalues of $\Omega_c^{-2}=(I_m+H_c\Sigma_nH_c^T)^{-2}$ are larger than~$1$, the second inequality $\|H_c\|\leq \beta$
 is proved in Appendix~E and the third inequality $\|E\|\leq r_s$ is proved in Appendix~F.  Similarly,
\begin{eqnarray}
\label{babel_11}
\big|\mathrm{tr}(\Omega_c^{-1}E\Sigma_nE^T)\big|&\leq &\|\Omega_c^{-1}E\Sigma_n^{1/2}\|_2\|\Sigma_n^{1/2}E^T\|_2\notag\\
&\leq &\|\Omega_c^{-1}E\|\|\Sigma_n^{1/2}\|_2\|E^T\|\|\Sigma_n^{1/2}\|_2\notag\\
&\leq&\|\Omega_c^{-1}\|\|E\|\|E^T\|\|\Sigma_n^{1/2}\|^2_2\notag\\
&=&\|\Omega_c^{-1}\|\,\|E\|^2\|\Sigma_n^{1/2}\|^2_2\notag\\
&\leq& r_s^2\mathrm{tr}(\Sigma_n).
\end{eqnarray} 
Now, (\ref{alot_11_DD}), (\ref{ashegh_1}),  (\ref{babel_1}) and (\ref{babel_11}) yield
\begin{eqnarray}
\Pr\big(|\underline{\boldsymbol{\nu}}^{\,T}\boldsymbol{\Psi}\underline{\boldsymbol{\nu}}-m|\ge m\eta'\,|\,\boldsymbol{H}=H\big)&\leq&\Pr\big(|\underline{\boldsymbol{\nu}}^{\,T}\Psi\underline{\boldsymbol{\nu}}-\mathbb{E}[\underline{\boldsymbol{\nu}}^{\,T}\Psi\underline{\boldsymbol{\nu}}]|\ge m(\eta'-\eta'_n)\big)\notag\\
&\stackrel{(a)}{\leq}&\frac{\mathrm{Var}(\underline{\boldsymbol{\nu}}^{\,T}\Psi\underline{\boldsymbol{\nu}})}{m^2(\eta'-\eta'_n)^2}\notag\\
&\stackrel{(b)}{=}&\frac{2\mathrm{tr}(\Psi^2)}{m^2(\eta'-\eta'_n)^2},
\end{eqnarray}
where $\eta'_n$ is defined in (\ref{berf_111}), $(a)$ is due to the assumption $\eta'>\eta'_n$ and Chebyshev's inequality and $(b)$ is due to $\mathrm{Var}(\underline{\boldsymbol{\nu}}^{\,T}\Psi\underline{\boldsymbol{\nu}})=2\mathrm{tr}(\Psi^2)$.  
It is shown in Appendix~G that 
\begin{eqnarray}
\label{c_p_n}
\mathrm{tr}(\Psi^2)\leq C'_n/2,
\end{eqnarray}
where $C'_n$ is defined in (\ref{cons_CP}). Then 
\begin{eqnarray}
\label{ahaj}
\Pr\big(|\underline{\boldsymbol{\nu}}^{\,T}\boldsymbol{\Psi}\underline{\boldsymbol{\nu}}-m|\ge m\eta'\,|\,\boldsymbol{H}=H\big)\leq \frac{C'_n}{m^2(\eta'-\eta'_n)^2}.
\end{eqnarray}
The right side of (\ref{ahaj}) does not depend on the choice of $H\in\mathcal{H}$. Then (\ref{yellp_1}), (\ref{doos11_11_D_11}) and (\ref{ahaj}) complete the proof. 
\end{proof}
By the previous proposition, we see that if $\lim_{n\to\infty}\frac{1}{n}\lambda_{\max}(\Sigma_n)=0$ and $\eta'>\limsup_{n\to\infty}\eta'_n$, then 
\begin{eqnarray}
\lim_{n\to\infty}\Pr\big(\underline{\boldsymbol{y}}\notin \mathcal{T}_{\eta'}^{(m)}(\Omega_c)\big)=0.
\end{eqnarray} 
Next, we concentrate on the second term on the right side of (\ref{pish_11_11}).  We write
\begin{eqnarray}
\label{comb_1}
&&\Pr\big(\underline{\boldsymbol{x}}_i\in \mathcal{T}_{\varepsilon}^{(n)}(\Sigma_n),\,\underline{\boldsymbol{w}}_i\in \mathcal{T}_{\eta}^{(m+n)}(\Xi_n),\, \underline{\boldsymbol{y}}\in \mathcal{T}_{\eta'}^{(m)}(\Omega_c)\big)\notag\\
&&\hskip0.5cm\stackrel{(a)}{=}\int_{ \mathcal{T}_{\eta}^{(m+n)}(\Xi_n)\bigcap \big(\mathcal{T}_{\varepsilon}^{(n)}(\Sigma_n)\times  \mathcal{T}_{\eta'}^{(m)}(\Omega_c)\big)}p_{\underline{\boldsymbol{x}}_i}(\underline{x})p_{\underline{\boldsymbol{y}}}(\underline{y})\mathrm{d}\underline{x}\,\mathrm{d}\underline{y}\notag\\
&&\hskip0.5cm\stackrel{(b)}{\leq}2^{-h_{\mathsf{G}}(\Sigma_n)+\frac{1}{2\ln 2}n\varepsilon}\times\sup_{\underline{y}\,\in \mathcal{T}_{\eta'}^{(m)}(\Omega_c)}p_{\underline{\boldsymbol{y}}}(\underline{y})\times\mathrm{Vol}\big( \mathcal{T}_{\eta}^{(m+n)}(\Xi_n)\big),
\end{eqnarray}
where $(a)$ is due to independence of $\underline{\boldsymbol{x}}_i$ and $\underline{\boldsymbol{y}}$ for $i\neq 1$ and $(b)$ is due to $p_{\underline{\boldsymbol{x}}_i}(\underline{x})=p_{\mathsf{G}}(\underline{x};\Sigma_n)\leq 2^{-h_{\mathsf{G}}(\Sigma_n)+\frac{1}{2\ln 2}n\varepsilon}$ for $\underline{x}\in \mathcal{T}_{\varepsilon}^{(n)}(\Sigma_n)$. Note that $p_{\boldsymbol{\underline{y}}}(\underline{y})\neq p_{\mathsf{G}}(\underline{y};\Omega_c)$ and hence, we can not bound $p_{\underline{\boldsymbol{y}}}(\underline{y})$ for $\underline{y}\in \mathcal{T}_{\eta'}^{(m)}(\Omega_c)$ in a similar fashion as we bounded $p_{\underline{\boldsymbol{x}}_i}(\underline{x})$ for $\underline{x}\,\in \mathcal{T}_{\varepsilon}^{(n)}(\Sigma_n)$. To proceed, we need to find upper bounds on the second term and the third term  on the right side of (\ref{comb_1}). Let us begin with the third term. Reference~\cite{Cover-Pombra} provides the upper bound $2^{h_{\mathsf{G}}(\Xi_n)+\frac{m+n}{2\ln2}\eta}$ on the volume of the Gaussian typical set~$\mathcal{T}_\eta^{(m+n)}(\Xi_n)$. This upper bound turns out to be quite loose for our purpose. The parameter $\eta$ will eventually be replaced by $\eta_n$ in (\ref{mir_111}) which scales with $P$. A tighter upper bound is provided by the next lemma.
\newpage
\begin{lem}
\label{seal}
Let $\Sigma$ be an $n\times n$ positive-definite matrix and $\eta>0$. Then
\begin{eqnarray}
\label{sial_1}
\mathrm{Vol}(\mathcal{T}_\eta^{(n)}(\Sigma))\leq 2^{h_{\mathsf{G}}(\Sigma)+\frac{n}{2}\log(1+\eta)}.
\end{eqnarray}   
If $\eta\geq 1$, then this upper bound is tight in the sense that for every sequence of $n\times n$ positive-definite matrices $\Sigma_n$, 
 \begin{eqnarray}
 \label{typ_asy}
\lim_{n\to\infty}\Big(\frac{\mathrm{Vol}(\mathcal{T}_\eta^{(n)}(\Sigma_n))}{2^{h_{\mathsf{G}}(\Sigma_n)}}\Big)^{\frac{2}{n}}=1+\eta.
\end{eqnarray}
\end{lem}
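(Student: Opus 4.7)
The plan is to reduce the volume computation to one involving an ordinary Euclidean ball via the linear change of variables $\underline{b}:=\Sigma^{-1/2}\underline{a}$, whose Jacobian determinant is $(\det\Sigma)^{-1/2}$. Under this map, the defining condition $|\frac{1}{n}\underline{a}^T\Sigma^{-1}\underline{a}-1|<\eta$ of $\mathcal{T}_\eta^{(n)}(\Sigma)$ in (\ref{fgb_111}) is transformed to $n(1-\eta)^+<\|\underline{b}\|_2^2<n(1+\eta)$, so the image is a spherical shell in $\mathbb{R}^n$, which collapses to the full ball of squared radius $n(1+\eta)$ exactly when $\eta\geq 1$. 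Using the classical formula $V_n(r)=\pi^{n/2}r^n/\Gamma(n/2+1)$ for the volume of the Euclidean $n$-ball, this immediately yields
\begin{equation}
\mathrm{Vol}\big(\mathcal{T}_\eta^{(n)}(\Sigma)\big)\leq \sqrt{\det(\Sigma)}\cdot\frac{\pi^{n/2}\big(n(1+\eta)\big)^{n/2}}{\Gamma(n/2+1)},
\end{equation}
with equality whenever $\eta\geq 1$.

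To obtain (\ref{sial_1}), I substitute the identity $2^{h_{\mathsf{G}}(\Sigma)}=(2\pi e)^{n/2}\sqrt{\det(\Sigma)}$ into the target bound; after cancelling the common factors $\sqrt{\det(\Sigma)}$ and $(1+\eta)^{n/2}$, the inequality reduces to the purely numerical statement $(n/(2e))^{n/2}\leq\Gamma(n/2+1)$, which follows at once from Stirling's lower bound $\Gamma(x+1)\geq \sqrt{2\pi x}(x/e)^x$ applied at $x=n/2$. For the tightness claim (\ref{typ_asy}), the equality case $\eta\geq 1$ of the shell formula gives
\begin{equation}
\Big(\frac{\mathrm{Vol}(\mathcal{T}_\eta^{(n)}(\Sigma_n))}{2^{h_{\mathsf{G}}(\Sigma_n)}}\Big)^{2/n}=\frac{n(1+\eta)}{2e}\cdot\Gamma(n/2+1)^{-2/n},
\end{equation}
and the right-hand side is independent of $\Sigma_n$. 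Invoking Stirling's asymptotic $\Gamma(n/2+1)\sim\sqrt{\pi n}\,(n/(2e))^{n/2}$ one gets $\Gamma(n/2+1)^{2/n}\sim(n/(2e))(\pi n)^{1/n}$, and since $(\pi n)^{1/n}\to 1$, the displayed expression tends to $1+\eta$, as required.

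Conceptually, the main insight is that the change of variables $\underline{b}=\Sigma^{-1/2}\underline{a}$ simultaneously removes the dependence on $\Sigma$ and turns the quadratic-form condition into a radial one; once this is done, everything reduces to ball-volume estimates and there is no substantive obstacle. The only routine technical ingredient is choosing a Stirling bound sharp enough both for the finite-$n$ inequality and for the leading-order asymptotic, and the uniformity of the limit over sequences $\Sigma_n$ is automatic because the ratio on the left of (\ref{typ_asy}) no longer involves $\Sigma_n$.
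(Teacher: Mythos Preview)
Your proof is correct and follows essentially the same route as the paper's own argument in Appendix~H: both reduce the typical set to an ellipsoid (the paper via the orthogonal map $U^T$ from the spectral decomposition, you via $\Sigma^{-1/2}$, which is a cosmetically different but equivalent normalization), compute the volume via the standard ball/ellipsoid formula $\pi^{n/2}r^n/\Gamma(n/2+1)$, and then invoke a Stirling-type lower bound on $\Gamma(x+1)$ to obtain~(\ref{sial_1}) and a matching upper bound (or the full asymptotic) to obtain~(\ref{typ_asy}). The only minor difference is in the specific Gamma-function inequalities cited (the paper uses $\Gamma(x+1)\ge x^x e^{-x}$ and $\Gamma(x+1)\le x^x e^{-x}\sqrt{2\pi(x+1)}$ from Batir's paper, while you use the classical Stirling bounds), but both choices suffice.
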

\begin{proof}
The proof is provided in Appendix~H. 
\end{proof}
Applying (\ref{sial_1}) to the $(m+n)\times (m+n)$ matrix $\Xi_n$, 
\begin{eqnarray}
\label{property_1}
\mathrm{Vol}(\mathcal{T}_\eta^{(m+n)}(\Xi_n))\leq 2^{h_{\mathsf{G}}(\Xi_n)+\frac{m+n}{2}\log(1+\eta)}.
\end{eqnarray}
  By (\ref{comb_1}) and (\ref{property_1}), 
 \begin{eqnarray}
 \label{shape}
&&\Pr\big(\underline{\boldsymbol{x}}_i\in \mathcal{T}_{\varepsilon}^{(n)}(\Sigma_n),\,\underline{\boldsymbol{w}}_i\in \mathcal{T}_{\eta}^{(m+n)}(\Xi_n),\, \underline{\boldsymbol{y}}\in \mathcal{T}_{\eta'}^{(m)}(\Omega_c)\big)\notag\\&&\hskip3
cm\leq 2^{h_{\mathsf{G}}(\Xi_n)-h_{\mathsf{G}}(\Sigma_n)+\frac{m+n}{2}\log(1+\eta)+\frac{1}{2\ln 2}n\varepsilon}\sup_{\underline{y}\,\in \mathcal{T}_{\eta'}^{(m)}(\Omega_c)}p_{\underline{\boldsymbol{y}}}(\underline{y}).
\end{eqnarray}
By (\ref{det_11}),
\begin{eqnarray}
h_{\mathsf{G}}(\Xi_n)-h_{\mathsf{G}}(\Sigma_n)=\frac{1}{2}\log\big((2\pi e)^{m+n}\det(\Xi_n)\big)-\frac{1}{2}\log((2\pi e)^n\det(\Sigma_n))=\frac{m}{2}\log(2\pi e).
\end{eqnarray}
 Hence, 
  \begin{eqnarray}
 \label{hi_opk}
&&\Pr\big(\underline{\boldsymbol{x}}_i\in \mathcal{T}_{\varepsilon}^{(n)}(\Sigma_n),\,\underline{\boldsymbol{w}}_i\in \mathcal{T}_{\eta}^{(m+n)}(\Xi_n),\, \underline{\boldsymbol{y}}\in \mathcal{T}_{\eta'}^{(m)}(\Omega_c)\big)\notag\\&&\hskip3cm\leq (2\pi e)^{\frac{m}{2}}2^{\frac{m+n}{2}\log(1+\eta)+\frac{1}{2\ln 2}n\varepsilon}\sup_{\underline{y}\,\in \mathcal{T}_{\eta'}^{(m)}(\Omega_c)}p_{\underline{\boldsymbol{y}}}(\underline{y}).
\end{eqnarray}
   Next, we present an upper bound on the supremum on the right side of (\ref{hi_opk}). We write 
\begin{eqnarray}
\label{denir_11}
p_{\boldsymbol{\underline{y}}}(\underline{y})&=&\int_{\mathcal{H}}p_{\boldsymbol{\underline{y}}}(\underline{y}\,|\,\boldsymbol{H}=H)\mathrm{d}\mathcal{L}_{\boldsymbol{H}}(H)\notag\\
&=&\int_{\mathcal{H}}\frac{1}{(2\pi)^{\frac{m}{2}}\big(\det(I_m+H\Sigma_n H^T)\big)^{\frac{1}{2}}}\exp\Big(-\frac{1}{2}\underline{y}^T(I_m+H\Sigma_n H^T)^{-1}\underline{y}\Big)\mathrm{d}\mathcal{L}_{\boldsymbol{H}}(H)\notag\\
&\leq&(2\pi)^{-\frac{m}{2}}\big(\inf_{H\in \mathcal{H}}\det(\Omega_{H})\big)^{-\frac{1}{2}}\exp\Big(-\frac{1}{2}\inf_{H\in \mathcal{H}}\underline{y}^T\Omega_H^{-1}\underline{y}\Big),
\end{eqnarray}
where $\mathcal{L}_{\boldsymbol{H}}$ is the probability law of $\boldsymbol{H}$ and we have defined
\begin{eqnarray}
\Omega_H:=I_m+H\Sigma_n H^T.
\end{eqnarray} 
 Then 
\begin{eqnarray}
\label{peok_111}
\sup_{\underline{y}\in \mathcal{T}^{(m)}_{\eta'}(\Omega_c)}p_{\boldsymbol{\underline{y}}}(\underline{y})\leq (2\pi)^{-\frac{m}{2}}\big(\inf_{H\in \mathcal{H}}\det(\Omega_{H})\big)^{-\frac{1}{2}}\exp\Big(-\frac{1}{2}\inf_{H\in \mathcal{H}}\inf_{\underline{y}\in \mathcal{T}^{(m)}_{\eta'}(\Omega_c)}\underline{y}^T\Omega_H^{-1}\underline{y}\Big).
\end{eqnarray}
The next two lemmas provide lower bounds on the infimums on the right side of~(\ref{peok_111}): 
\begin{lem}
\label{lem_55}
Recall $\Omega_c$ in (\ref{omega_c}) and $\phi_{1,n}$ in (\ref{joda_22}). Then
\begin{eqnarray}
\label{zcb_11}
\inf_{H\in \mathcal{H}}\det(\Omega_{H})\geq (1-\phi_{1,n} )^m\det(\Omega_c).
\end{eqnarray}
\end{lem}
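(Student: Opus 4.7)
The plan is to reduce the $m\times m$ determinant ratio to an $n\times n$ one via the Sylvester determinant identity $\det(I+XY)=\det(I+YX)$, so that the base matrix becomes strictly positive definite with an effective lower bound depending on $\alpha^2$ (this is what produces the $1+\alpha^2\lambda_{\min}(\Sigma_n)$ denominator in $\phi_{1,n}$; the bound is unavailable if one stays in $m\times m$, because $H_c\Sigma_nH_c^T$ is singular and $\|\Omega_c^{-1}\|=1$). Applying Sylvester,
\begin{eqnarray*}
\det(\Omega_H)=\det(I_n+\Sigma_n^{1/2}H^TH\Sigma_n^{1/2}),\qquad \det(\Omega_c)=\det(I_n+\Sigma_n^{1/2}H_c^TH_c\Sigma_n^{1/2}).
\end{eqnarray*}
Substituting $H=H_c+E$ and setting
\begin{eqnarray*}
A:=I_n+\Sigma_n^{1/2}H_c^TH_c\Sigma_n^{1/2},\qquad B:=\Sigma_n^{1/2}\bigl(H_c^TE+E^TH_c+E^TE\bigr)\Sigma_n^{1/2},
\end{eqnarray*}
I would write $\det(\Omega_H)=\det(A+B)=\det(\Omega_c)\det\bigl(I_n+A^{-1/2}BA^{-1/2}\bigr)$, symmetrizing the perturbation with one more application of $\det(I+UV)=\det(I+VU)$.

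Now let $N:=A^{-1/2}BA^{-1/2}$, which is symmetric $n\times n$. Since by Weyl $\lambda_i(N)\geq -\|N\|$ for every $i$, if $\|N\|<1$ then
\begin{eqnarray*}
\det(I_n+N)=\prod_{i=1}^{n}(1+\lambda_i(N))\geq (1-\|N\|)^{n}.
\end{eqnarray*}
The core task is therefore $\|N\|\leq\phi_{1,n}$. By sub-multiplicativity, $\|N\|\leq\|A^{-1}\|\,\|B\|$. For $\|A^{-1}\|$ I would combine the elementary fact $\lambda_{\min}(\Sigma_n^{1/2}H_c^TH_c\Sigma_n^{1/2})\geq\lambda_{\min}(\Sigma_n)\lambda_{\min}(H_c^TH_c)$ (from $v^T\Sigma_n^{1/2}H_c^TH_c\Sigma_n^{1/2}v\geq \lambda_{\min}(H_c^TH_c)\|\Sigma_n^{1/2}v\|_2^2$) with the standard Toeplitz lower bound $\lambda_{\min}(H_c^TH_c)\geq\alpha^2$; the latter is the Parseval-type argument underlying Appendix~E applied to the nonnegative symbol $|f(\omega)|^2$. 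This gives $\|A^{-1}\|\leq 1/\bigl(1+\alpha^2\lambda_{\min}(\Sigma_n)\bigr)$. For $\|B\|$, the triangle inequality together with $\|H_c\|\leq\beta$ (Appendix~E) and $\|E\|\leq r_s$ (Appendix~F) yields $\|H_c^TE+E^TH_c+E^TE\|\leq 2\beta r_s+r_s^2=r_s(r_s+2\beta)$, hence $\|B\|\leq\lambda_{\max}(\Sigma_n)\,r_s(r_s+2\beta)$. Multiplying the two estimates gives $\|N\|\leq\phi_{1,n}$, strictly less than $1$ under the hypothesis of Theorem~1.

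Combining the displays, $\det(\Omega_H)\geq (1-\phi_{1,n})^{n}\det(\Omega_c)\geq (1-\phi_{1,n})^{m}\det(\Omega_c)$, where the last step only uses $m\geq n$ and $0<1-\phi_{1,n}<1$. Since every bound above is uniform in $H\in\mathcal{H}$, taking the infimum proves the lemma. The only point that demands real care is the Toeplitz lower bound $\lambda_{\min}(H_c^TH_c)\geq\alpha^2$: without it one would be forced to use $\|A^{-1}\|\leq 1$, producing the much weaker estimate $(1-r_s(r_s+2\beta)\lambda_{\max}(\Sigma_n))^{m}$ in place of the claimed $(1-\phi_{1,n})^{m}$.
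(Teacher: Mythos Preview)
Your proof is correct and follows the same overall architecture as the paper's Appendix~I: reduce to $n\times n$ via the Sylvester identity, control the determinant ratio by a Weyl-type eigenvalue perturbation, and invoke the same operator-norm estimates $\|H_c\|\leq\beta$, $\|E\|\leq r_s$, and $\lambda_{\min}(H_c^TH_c)\geq\alpha^2$. The execution differs in one useful way. The paper works with the \emph{non-symmetric} products $H^TH\Sigma_n$ and $H_c^TH_c\Sigma_n$, compares their eigenvalues term-by-term via Weyl (which, as stated there, is for symmetric matrices, so a tacit similarity to $\Sigma_n^{1/2}H^TH\Sigma_n^{1/2}$ is implicitly required), and bounds $\lambda_{\min}(H_c^TH_c\Sigma_n)$ by passing to inverses and invoking $\lambda_{\max}(AB)\leq\lambda_{\max}(A)\lambda_{\max}(B)$ for positive-semidefinite $A,B$. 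You instead symmetrize up front with $\Sigma_n^{1/2}$, factor $\det(A+B)=\det(A)\det(I_n+A^{-1/2}BA^{-1/2})$, and read off $\|A^{-1}\|$ directly from the Rayleigh quotient. This is slightly cleaner: every matrix you manipulate is symmetric, so Weyl applies without caveat, and the $\lambda_{\min}$ step needs nothing beyond the variational characterization. Both routes land on $(1-\phi_{1,n})^n$, which you then correctly weaken to $(1-\phi_{1,n})^m$ as in the lemma's statement.
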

\begin{proof}
The proof relies on Weyl's inequality on perturbation of eigenvalues of a symmetric matrix. See Appendix~I for the details. 
\end{proof}
 \begin{lem}
 \label{lem_66}
 Recall $\phi_{3,n}$ in (\ref{nmkio}). Then
 \begin{eqnarray}
 \label{zcb_22}
\inf_{H\in \mathcal{H}}\inf_{\underline{y}\in \mathcal{T}^{(m)}_{\eta'}(\Omega_c)}\underline{y}^T\Omega_H^{-1}\underline{y}\geq m(1-\eta')^+\phi_{3,n}.
\end{eqnarray}
 \end{lem}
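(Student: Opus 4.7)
\textbf{Proof proposal for Lemma~\ref{lem_66}.}

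The plan is to reduce the problem to a single pointwise matrix comparison of the form $\Omega_H^{-1} \succeq \phi_{3,n}\,\Omega_c^{-1}$, valid for every $H\in\mathcal{H}$, and then invoke the typicality constraint on the quadratic form $\underline{y}^T\Omega_c^{-1}\underline{y}$. Once this matrix inequality is established, combining it with the second characterisation of $\mathcal{T}^{(m)}_{\eta'}(\Omega_c)$ in (\ref{fgb_111}) -- which forces $\underline{y}^T\Omega_c^{-1}\underline{y}\ge m(1-\eta')$ and hence $\underline{y}^T\Omega_c^{-1}\underline{y}\ge m(1-\eta')^+$ for every $\underline{y}$ in the typical set -- immediately yields $\underline{y}^T\Omega_H^{-1}\underline{y}\ge \phi_{3,n}\,\underline{y}^T\Omega_c^{-1}\underline{y}\ge m(1-\eta')^+\phi_{3,n}$, which is the claim.

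To obtain $\Omega_H^{-1}\succeq \phi_{3,n}\,\Omega_c^{-1}$, I would use the standard fact that for positive-definite matrices $A,B$, $A\preceq cB$ is equivalent to $B^{-1}\preceq c^{-1}A^{-1}$, so that it suffices to prove $\Omega_H\preceq \phi_{3,n}^{-1}\Omega_c$. Writing $H=H_c+E$ with $E$ the error matrix defined in (\ref{err_111}), a direct expansion yields
\[
\Omega_H=\Omega_c+\Delta,\qquad \Delta:=H_c\Sigma_n E^T+E\Sigma_n H_c^T+E\Sigma_n E^T,
\]
and $\Delta$ is symmetric. Conjugating by $\Omega_c^{-1/2}$, the desired inequality becomes $\Omega_c^{-1/2}\Delta\,\Omega_c^{-1/2}\preceq (\phi_{3,n}^{-1}-1)I_m$; since the left-hand side is symmetric, this is equivalent to the scalar bound
\[
\lambda_{\max}\bigl(\Omega_c^{-1/2}\Delta\,\Omega_c^{-1/2}\bigr)\le \phi_{3,n}^{-1}-1=r_s(r_s+2\beta)\lambda_{\max}(\Sigma_n).
\]

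The scalar bound is then obtained from $\lambda_{\max}(\Omega_c^{-1/2}\Delta\,\Omega_c^{-1/2})\le \|\Omega_c^{-1/2}\Delta\,\Omega_c^{-1/2}\|\le \|\Omega_c^{-1}\|\cdot\|\Delta\|$, where $\|\Omega_c^{-1}\|\le 1$ because $\Omega_c=I_m+H_c\Sigma_n H_c^T\succeq I_m$, and the triangle inequality together with sub-multiplicativity of the spectral norm gives
\[
\|\Delta\|\le 2\|H_c\|\,\|\Sigma_n\|\,\|E\|+\|E\|^2\|\Sigma_n\|\le (2\beta r_s+r_s^2)\lambda_{\max}(\Sigma_n),
\]
using $\|H_c\|\le \beta$ (Appendix~E), $\|E\|\le r_s$ (Appendix~F), and $\|\Sigma_n\|=\lambda_{\max}(\Sigma_n)$. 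The only nontrivial idea is the framing step -- transferring a quadratic-form lower bound from the reference covariance $\Omega_c^{-1}$ to the true (unknown) covariance $\Omega_H^{-1}$ by a uniform positive-definite comparison -- after which the remaining norm estimates are routine and are already supplied by the appendices.
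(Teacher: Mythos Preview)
Your proof is correct and reaches the same bound as the paper, but the framing differs from Appendix~J. The paper treats the inner infimum as a quadratically constrained quadratic program: it fixes the level set $\underline{y}^T\Omega_c^{-1}\underline{y}=t$, uses Lagrange multipliers to identify the minimum of $\underline{y}^T\Omega_H^{-1}\underline{y}$ on that level set as $t\,\lambda_{\min}(\Omega_c\Omega_H^{-1})=t/\lambda_{\max}(\Omega_H\Omega_c^{-1})$, and then minimises over $t\in(m(1-\eta'),m(1+\eta'))$. You instead establish the Loewner comparison $\Omega_H\preceq \phi_{3,n}^{-1}\Omega_c$ directly, which via monotonicity of inversion gives $\Omega_H^{-1}\succeq \phi_{3,n}\Omega_c^{-1}$ and hence the pointwise inequality on quadratic forms without any optimisation. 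The two routes are equivalent at the core, since $\lambda_{\max}(\Omega_c^{-1/2}\Omega_H\Omega_c^{-1/2})=\lambda_{\max}(\Omega_H\Omega_c^{-1})$ by similarity and both are then bounded by the same norm estimate $\|\Delta\|\le r_s(r_s+2\beta)\lambda_{\max}(\Sigma_n)$; your path is somewhat more elementary in that it avoids the Lagrange-multiplier/eigenvector step, while the paper's route makes the role of the generalised eigenvalue $\lambda_{\min}(\Omega_c\Omega_H^{-1})$ more explicit.
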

 \begin{proof}
 The inner optimization is a quadratically constrained quadratic convex program. See Appendix~J for the details.
 \end{proof}
 By (\ref{peok_111}), (\ref{zcb_11}) and (\ref{zcb_22}),
 \begin{eqnarray}
 \label{vio_111}
\sup_{\underline{y}\in \mathcal{T}^{(m)}_{\eta'}(\Omega_c)}p_{\boldsymbol{\underline{y}}}(\underline{y})\leq (2\pi)^{-\frac{m}{2}}2^{-\frac{1}{2}\log\det(\Omega_c)}2^{-\frac{m}{2}\log(1-\phi_{1,n})}\exp\big(-\frac{m}{2}(1-\eta')^+\phi_{3,n}\big),
\end{eqnarray}
Putting (\ref{hi_opk}) and (\ref{vio_111}) together, 
\begin{eqnarray}
\label{khoonam_1}
\sum_{i=2}^{2^{nR}}\Pr\big(\underline{\boldsymbol{x}}_i\in \mathcal{T}_{\varepsilon}^{(n)}(\Sigma_n),\,\underline{\boldsymbol{w}}_i\in \mathcal{T}_{\eta}^{(m+n)}(\Xi_n),\, \underline{\boldsymbol{y}}\in \mathcal{T}_{\eta'}^{(m)}(\Omega_c)\big)\leq 2^{n(R-\frac{1}{2n}\log\det(\Omega_c)+\frac{m+n}{2n}\log(1+\eta)+\tilde{\delta}_{n})},
\end{eqnarray}
where  $\tilde{\delta}_{n}$ is given by
\begin{eqnarray}
\label{jo_1}
\tilde{\delta}_{n}=-\frac{m}{2n}\log(1-\phi_{1,n})+\frac{m}{2n\ln 2}(1-(1-\eta')^+\phi_{3,n})+\frac{\varepsilon}{2\ln 2},
\end{eqnarray}  
and we have the conditions $\eta'>\eta'_n$ and $\eta>\eta_n$. Letting $\eta$, $\eta'$ and $\varepsilon$ approach $\eta_n$, $\eta'_n$ and $0$, from above, respectively, we see that the right side of (\ref{khoonam_1}) vanishes as~$n$ grows large if 
\begin{eqnarray}
\label{nagoo_0}
R<\lim_{n\to\infty}\Big(\frac{1}{2n}\log\det(\Omega_c)-\log(1+\eta_n)-\delta_n\Big),
\end{eqnarray}
where $\delta_n$ is given in (\ref{kool}).

\section{Summary and concluding remarks}
We studied a stochastic and time varying Gaussian ISI channel where the $i^{th}$ tap during time slot~$t$ is a random variable whose probability law is supported over an interval of radius $r_i$ centred at $c_i$. The joint distribution as well as realizations of the array of channel taps are unknown to both ends of communication. A lower bound was derived on the channel capacity by carefully working out the details of error analysis for a decoder which functions based on Gaussian joint-typicality decoding tuned to the matrix~$H_c$. The proposed lower bound saturates at a positive value as $P$ increases beyond the saturation power $P_{sat}$ given in (\ref{sat_p}). A partial converse result was presented that shows for a sequence of codebooks with vanishingly small probability of error, if the size of each symbol in every codeword is bounded away from zero by an amount that is proportional to $\sqrt{P}$, then the rate of the codebooks does not scale with $P$. This converse result holds in the worst-case scenario where the channel taps are independent and uniformly distributed.  

In view of this converse result, a question rises naturally whether one can avoid saturation of rates by inserting zero symbols in the codewords. In the context of fast fading channels where the transmitter and the receiver have not access to the channel state information, reference \cite{abou-faycal} proves that the capacity-achieving input distribution has a mass point at zero. Even though the channel model in~\cite{abou-faycal} and the channel model adopted here are quite different, the idea of randomly generating codewords according to a distribution that has a mass point at zero is worth investigating towards the possibility of achieving rates that scale (do not saturate) with the maximum average input power $P$.

An observation made in the paper which may find other applications is the result in Lemma~\ref{seal} regarding the volume of Gaussian typical sets. It shows that 
\begin{enumerate}
  \item For every $\eta>0$, $\mathrm{Vol}(\mathcal{T}_\eta^{(n)}(\Sigma))\leq 2^{h_{\mathsf{G}}(\Sigma)+\frac{n}{2}\log(1+\eta)}$.
  \item For every sequence of positive definite matrices $\Sigma_n$ and every $\eta\geq1$, the upper bound in above is tight in the sense that 
\begin{eqnarray}
\lim_{n\to\infty}\Big(\frac{\mathrm{Vol}(\mathcal{T}^{(n)}_\eta(\Sigma_n))}{2^{h_{\mathsf{G}}(\Sigma_n)}}\Big)^{\frac{2}{n}}=1+\eta.
\end{eqnarray}
\end{enumerate}

We close this section by mentioning that the decoding rule adopted in this paper is reminiscent of the notion of ``mismatched decoding'' which is extensively studied for memoryless channels~\cite{scarlet}. Characterizing fundamental limits of mismatched decoding for time-varying channels with memory is another direction for investigation.   

\section*{Appendix~A; Proof of Theorem~2}
Let $\mathcal{C}_n=\{\underline{x}_{n,1},\cdots, \underline{x}_{n,2^{nR}}\}$ be  a sequence of codebooks with rate $R$ and vanishingly small probability of error. Throughout this appendix, we will drop the codebook index $n$ and denote $\underline{x}_{n,i}$ by $\underline{x}_i$ for notational simplicity. The $j^{th}$ symbol of $\underline{x}_{i}$ is denoted by $x_{i,j}$. The codebook satisfies the average transmission power constraint in (\ref{ghor_1}), i.e., 
\begin{eqnarray}
\frac{1}{2^{nR}}\sum_{i=1}^{2^{nR}}\|\underline{x}_i\|_2^2\leq nP. 
\end{eqnarray}
\label{polt}
The transmitted vector $\boldsymbol{\underline{x}}$ in (\ref{jat_123}) is uniformly distributed over $\mathcal{C}_n$.  A standard application of Fano's inequality and the data processing inequality gives\footnote{See \cite{Cover-Thomas}.} 
 \begin{eqnarray}
 \label{fanot}
R\leq \liminf_{n\to\infty}\frac{1}{n}I(\underline{\boldsymbol{x}};\underline{\boldsymbol{y}}).
\end{eqnarray}
We write $I(\underline{\boldsymbol{x}};\underline{\boldsymbol{y}})= h(\underline{\boldsymbol{y}})- h(\underline{\boldsymbol{y}}|\underline{\boldsymbol{x}})$.  We find an upper bound and a lower bound on the terms $h(\underline{\boldsymbol{y}})$ and $h(\underline{\boldsymbol{y}}|\underline{\boldsymbol{x}})$, respectively. 
\subsection{Upper bound on $h(\underline{\boldsymbol{y}})$}
Applying the maximum entropy lemma~\cite{Cover-Thomas}, 
\begin{eqnarray}
\label{lll_1}
h(\underline{\boldsymbol{y}})\le\frac{1}{2}\log\big((2\pi e)^m\mathrm{det}(\mathrm{Cov}(\underline{\boldsymbol{y}}))\big).
\end{eqnarray}
Define 
\begin{eqnarray}
\label{ghol_00}
\underline{\mu}:=\mathbb{E}[\underline{\boldsymbol{x}}]=\frac{1}{2^{nR}}\sum_{i=1}^{2^{nR}}\underline{x}_i,\,\,\,\,\,Q:=\mathbb{E}[\underline{\boldsymbol{x}}\,\underline{\boldsymbol{x}}^T]=\frac{1}{2^{nR}}\sum_{i=1}^{2^{nR}}\underline{x}_i\underline{x}^T_i.
\end{eqnarray}
A simple computation shows that 
\begin{eqnarray}
\label{fol_1}
\mathrm{Cov}(\underline{\boldsymbol{y}})=I_m+\mathbb{E}[\boldsymbol{H}Q\boldsymbol{H}^T]-\mathbb{E}[\boldsymbol{H}]\underline{\mu}\,\underline{\mu}^T\mathbb{E}[\boldsymbol{H}^T]
\end{eqnarray}
By (\ref{lll_1}) and (\ref{fol_1}) and using the fact that $\log\det(\cdot)$ is nondecreasing along the cone of positive-semidefinite matrices\footnote{We have $\mathrm{Cov}(\underline{\boldsymbol{y}})\preceq I_m+\mathbb{E}[\boldsymbol{H}Q\boldsymbol{H}^T]$ by (\ref{fol_1}).},  
\begin{eqnarray}
\label{gub}
h(\underline{\boldsymbol{y}})\le\frac{1}{2}\log\big((2\pi e)^m\mathrm{det}(I_m+\mathbb{E}[\boldsymbol{H}Q\boldsymbol{H}^T])\big).
\end{eqnarray}
Applying the arithmetic-geometric inequality $\mathrm{det}(M)\leq(\frac{\mathrm{tr}(M)}{m})^m$ for an $m\times m$ positive semidefinite matrix~$M$,
\begin{eqnarray}
\label{expo_11}
h(\underline{\boldsymbol{y}})\leq \frac{m}{2}\log(2\pi e)+\frac{1}{2}m\log\Big(1+\frac{1}{m}\sum_{t=1}^m\mathbb{E}\big[\big[\boldsymbol{H}Q\boldsymbol{H}^T\big]_{t,t}\big]\Big).
\end{eqnarray} 
Next, we find an upper bound on the diagonal entries of the $m\times m$ matrix $\mathbb{E}[\boldsymbol{H}Q\boldsymbol{H}^T]$. For $1\leq i\leq m$, 
\begin{eqnarray}
\label{expo_11_11}
[\boldsymbol{H}Q\boldsymbol{H}^T]_{t,t}&=&\sum_{u,v=1}^n\boldsymbol{H}_{t,u}Q_{u,v}\boldsymbol{H}_{t,v}\notag\\
&\stackrel{(a)}{=}&\Big|\sum_{u,v=1}^n\boldsymbol{H}_{t,u}Q_{u,v}\boldsymbol{H}_{t,v}\Big|\notag\\
&\stackrel{(b)}{\leq}&\sum_{u,v=1}^n|\boldsymbol{H}_{t,u}||\boldsymbol{H}_{t,v}||Q_{u,v}|\notag\\
&\stackrel{(c)}{\leq}&\sum_{u,v=1}^n|\boldsymbol{H}_{t,u}||\boldsymbol{H}_{t,v}|\sqrt{Q_{u,u}Q_{v,v}}\notag\\
&\stackrel{}{=}&\sum_{\substack{u,v=1\\
0\leq t-u,t-v\leq k}}^n|\boldsymbol{h}_{t,t-u}||\boldsymbol{h}_{t,t-v}|\sqrt{Q_{u,u}Q_{v,v}}\notag\\
&\stackrel{}{=}&\Big(\sum_{\substack{u=1\\
0\leq t-u\leq k}}^n|\boldsymbol{h}_{t,t-u}|\sqrt{Q_{u,u}}\Big)^2\notag\\
&\stackrel{(d)}{\leq}&\Big(\sum_{\substack{u=1\\
0\leq t-u\leq k}}^n|\boldsymbol{h}_{t,t-u}|^2\Big)\Big(\sum_{\substack{u=1\\
0\leq t-u\leq k}}^nQ_{u,u}\Big),
\end{eqnarray}
where $(a)$ is due $[\boldsymbol{H}Q\boldsymbol{H}^T]_{t,t}$ being nonnegative, $(b)$ is due to the triangle inequality, $(c)$ is due to the fact that $Q$ is a positive semidefinite matrix and hence, each of its central $2\times 2$ minors are nonnegative, i.e., $Q_{u,u}Q_{v,v}-Q^2_{u,v}\geq 0$ and $(d)$ is due to Cauchy-Schwarz inequality. 
Taking expectations of both sides of (\ref{expo_11_11}), 
\begin{eqnarray}
\label{expo_9080}
\mathbb{E}[[\boldsymbol{H}Q\boldsymbol{H}^T]_{t,t}]&\leq&\Big(\sum_{\substack{u=1\\
0\leq t-u\leq k}}^n\mathbb{E}[|\boldsymbol{h}_{t,t-u}|^2]\Big)\Big(\sum_{\substack{u=1\\
0\leq t-u\leq k}}^nQ_{u,u}\Big)\notag\\
&=&\Big(\sum_{\substack{u=1\\
0\leq t-u\leq k}}^n(c_{t-u}^2+r_{t-u}^2/3)\Big)\Big(\sum_{\substack{u=1\\
0\leq t-u\leq k}}^nQ_{u,u}\Big)\notag\\
&\leq&\big(\|\underline{c}\|_2^2+\|\underline{r}\|_2^2/3\big)\sum_{\substack{u=1\\
0\leq t-u\leq k}}^nQ_{u,u}
\end{eqnarray}
where the penultimate step is due to $\boldsymbol{h}_{i,i-u}$ being uniformly distributed on an interval of radius $r_{i-u}$ centred at~$c_{i-u}$ and the last step is due to $\sum_{\substack{u=1\\
0\leq t-u\leq k}}^n(c_{t-u}^2+r_{t-u}^2/3)\leq \|\underline{c}\|_2^2+\|\underline{r}\|_2^2/3$. By (\ref{expo_11}) and (\ref{expo_9080}), 
\begin{eqnarray}
h(\underline{\boldsymbol{y}})\leq\frac{m}{2}\log(2\pi e)+\frac{1}{2}m\log\bigg(1+\frac{1}{m}\big(\|\underline{c}\|_2^2+\|\underline{r}\|_2^2/3\big)\sum_{t=1}^m\sum_{\substack{u=1\\
0\leq t-u\leq k}}^nQ_{u,u}\bigg).
\end{eqnarray}
Changing the order of summations show
\begin{eqnarray}
\sum_{t=1}^m\sum_{\substack{u=1\\
0\leq t-u\leq k}}^nQ_{u,u}&=&\sum_{u=1}^n\sum_{\substack{t=1\\
0\leq t-u\leq k}}^mQ_{u,u}\notag\\
&=&\sum_{u=1}^nQ_{u,u}\sum_{\substack{t=1\\
0\leq t-u\leq k}}^m1\notag\\
&\leq&(k+1)nP,
\end{eqnarray} 
where the last step is due to $\sum_{\substack{t=1\\
0\leq t-u\leq k}}^m1\leq k+1$ and $\mathrm{tr}(Q)\leq nP$. We have shown that 
\begin{eqnarray}
\label{beheshtp_1}
h(\underline{\boldsymbol{y}})\leq\frac{m}{2}\log(2\pi e)+\frac{1}{2}m\log\Big(1+(k+1)\big(\|\underline{c}\|_2^2+\|\underline{r}\|_2^2/3\big)\frac{nP}{m}\Big).
\end{eqnarray}

\subsection{Lower bound on $h(\underline{\boldsymbol{y}}|\underline{\boldsymbol{x}}) $}

We have
\begin{eqnarray}
\label{goosh1}
h(\underline{\boldsymbol{y}}|\underline{\boldsymbol{x}}) =\frac{1}{2^{nR}}\sum_{i=1}^{2^{nR}}h(\underline{\boldsymbol{y}}|\underline{\boldsymbol{x}}=\underline{x}_i). 
\end{eqnarray}  
To develop a lower bound on $h(\underline{\boldsymbol{y}}|\underline{\boldsymbol{x}}=\underline{x}_i)$, we write
\begin{eqnarray}
\label{cvb_111}
h(\underline{\boldsymbol{y}}|\underline{\boldsymbol{x}}=\underline{x}_i)&=&h(\boldsymbol{H}\underline{\boldsymbol{x}}+\underline{\boldsymbol{z}}|\underline{\boldsymbol{x}}=\underline{x}_i)\notag\\
&\stackrel{(a)}{=}&h(\boldsymbol{H}\underline{x}_i+\underline{\boldsymbol{z}})\notag\\
&\stackrel{(b)}{=}&h(\boldsymbol{H}\underline{x}_i-H_c\,\underline{x}_i+\underline{\boldsymbol{z}})\notag\\
&\stackrel{(c)}{=}&h(\boldsymbol{E}\,\underline{x}_i+\underline{\boldsymbol{z}})\notag\\
&\stackrel{(d)}{=}&\sum_{t=1}^m h([\boldsymbol{E}\,\underline{x}_i]_t+\boldsymbol{z}_t),
\end{eqnarray}
where $(a)$ is due to independence of $\underline{\boldsymbol{x}}$ and the pair $(\boldsymbol{H},\underline{\boldsymbol{z}})$, $(b)$ is due to $H_c\,\underline{x}$ being a deterministic vector, the matrix $\boldsymbol{E}=\boldsymbol{H}-H_c$ in $(c)$ was originally defined in (\ref{err_111}) and $(d)$ follows the independence of the entries of the vector $\boldsymbol{E}\,\underline{x}+\underline{\boldsymbol{z}}$. Moreover, 
\begin{eqnarray}
\label{cvb_222}
h([\boldsymbol{E}\,\underline{x}_i]_t+\boldsymbol{z}_t)&\stackrel{}{=}&h\Big(\sum_{j=1}^n[\boldsymbol{E}]_{t,j}\,[\underline{x}_{i}]_{j}+\boldsymbol{z}_t\Big)\notag\\&\stackrel{(a)}{=}&h\Big(\sum_{\substack{j=1\\ 0\le t-j\le k}}^{n}[\boldsymbol{E}]_{t,j}\,x_{i,j}+\boldsymbol{z}_t\Big)\notag\\
&\stackrel{(b)}{\geq}&\frac{1}{2}\log\Big(\sum_{\substack{j=1\\ 0\le t-j\le k}}^{n}2^{2h([\boldsymbol{E}]_{t,j}x_{i,j})}+2^{2h(\boldsymbol{z}_t)}\Big)\notag\\
&\stackrel{(c)}{=}&\frac{1}{2}\log\Big(\sum_{\substack{j=1\\ 0\le t-j\le k}}^{n}2^{2\log(2r_{t-j}|x_{i,j}|)}+2\pi e\Big)\notag\\
&=&\frac{1}{2}\log\Big(\sum_{\substack{j=1\\ 0\le t-j\le k}}^{n}4r^2_{t-j}x^2_{i,j}+2\pi e\Big)\notag\\
&\stackrel{(d)}{=}&\frac{1}{2}\log\Big(\sum_{j=0}^{k}4r^2_{j}x^2_{i,t-j}+2\pi e\Big),
\end{eqnarray}
where in $(a)$ we have denoted the $j^{th}$ entry of the codeword $\underline{x}_i$ by $x_{i,j}$, $(b)$ is due to the entropy power inequality~\cite{Cover-Thomas} and the independence of the random variables $[\boldsymbol{E}]_{t,j}$ for $j=t-k,\cdots, t$ and $\boldsymbol{z}_t$, $(c)$ is due to $h(\boldsymbol{z}_t)=\frac{1}{2}\log(2\pi e)$ and the fact that $[\boldsymbol{E}]_{t,j}x_{i,j}=(\boldsymbol{h}_{t,t-j}-c_{t-j})x_{i,j}$ is uniformly distributed over the interval $\big[-r_{t-j}|x_{i,j}|,r_{t-j}|x_{i,j}|\,\big]$ for every $t,j$ with $0\leq t-j\leq k$ and hence, $h([\boldsymbol{E}]_{t,j}x_{i,j})=\log(2r_{t-j}|x_{i,j}|)$ and $(d)$ follows after changing the index $j$ to $t-j$ where it is understood that $x_{i,t-j}=0$ for $t-j<1$ or $t-j>n$. By~(\ref{goosh1}), (\ref{cvb_111}) and (\ref{cvb_222}),

\begin{eqnarray}
\label{bert0}
h(\underline{\boldsymbol{y}}|\underline{\boldsymbol{x}})\geq\frac{1}{2}m\log(2\pi e)+\frac{1}{2^{nR}}\sum_{i=1}^{2^{nR}}\sum_{t=1}^m \frac{1}{2}\log\Big(1+\frac{2}{\pi e}\sum_{j=0}^{k}r^2_{j}x^2_{i,t-j}\Big).
\end{eqnarray}
Putting together the upper bound on $h(\underline{\boldsymbol{y}})$ in (\ref{beheshtp_1}) and the lower bound on $h(\underline{\boldsymbol{y}}|\underline{\boldsymbol{x}})$ in (\ref{bert0}),  
\begin{eqnarray}
\label{expit_1}
I(\underline{\boldsymbol{x}};\underline{\boldsymbol{y}})&\leq&\frac{1}{2}m\log\Big(1+(k+1)\big(\|\underline{c}\|_2^2+\|\underline{r}\|_2^2/3\big)\frac{nP}{m}\Big).\notag\\&&\hskip1.5cm-\frac{1}{2^{nR}}\sum_{i=1}^{2^{nR}}\sum_{t=1}^m \frac{1}{2}\log\Big(1+\frac{2}{\pi e}\sum_{j=0}^{k}r^2_{j}x^2_{i,t-j}\Big).
\end{eqnarray}
 Dividing both sides by $n$ and letting $n$ grow to infinity, we arrive at the bound in (\ref{ub_ub}).

\section*{Appendix~B; Proof of Proposition~1}
Fix $\varepsilon>0$. We maximize $\frac{1}{2n}\sum_{i=1}^n\log(1+\lambda_{n,i}d_{n,i})-\log(1+(k+1)\|\underline{r}\|_2^2\frac{\sum_{i=1}^nd_{n,i}}{m+n})$ subject to the conditions $d_{n,i}\geq \varepsilon$ and $\sum_{i=1}^n d_{n,i}\leq nP$. The Lagrangian is given by
\begin{eqnarray}
L&=&\frac{1}{2n}\sum_{i=1}^n\log(1+\lambda_{n,i}d_{n,i})-\log\Big(1+(k+1)\|\underline{r}\|_2^2\frac{\sum_{i=1}^nd_{n,i}}{m+n}\Big)\notag\\
&&\hskip5cm+\sum_{i=1}^n\mu_i(d_i-\varepsilon)+\mu\big(nP-\sum_{i=1}^n d_{n,i}\big),
\end{eqnarray}
where the Lagrange multipliers $\mu_1,\cdots, \mu_n,\mu\geq 0$ and the complementary slackness conditions
\begin{eqnarray}
\forall\, i, \,\,\,\mu_i(d_{n,i}-\varepsilon)=0,\,\,\,\mu\big(nP-\sum_{i=1}^n d_{n,i}\big)=0
\end{eqnarray}  
hold.  The first order necessary conditions $\frac{\partial L}{\partial d_{n,i}}=0$ give 
\begin{eqnarray}
\label{piy}
\frac{\frac{1}{2\ln 2}\frac{1}{n}}{\frac{1}{\lambda_{n,i}}+d_{n,i}}-\frac{\frac{1}{\ln 2}\frac{k+1}{m+n}\|\underline{r}\|_2^2}{1+(k+1)\|\underline{r}\|_2^2\frac{\sum_{i=1}^nd_{n,i}}{m+n}}+\mu_i-\mu=0.
\end{eqnarray}
For all $i$ such that $d_{n,i}>\varepsilon$, we must have $\mu_i=0$ and then (\ref{piy}) shows that $\frac{1}{\lambda_{n,i}}+d_{n,i}$ does not depend on the index $i$, i.e., it is a constant $\Theta$ and we get
\begin{eqnarray}
\label{iran}
d_{n,i}=\max\Big(\Theta-\frac{1}{\lambda_{n,i}},\varepsilon\Big)
\end{eqnarray} 
Two situations can happen as we describe next: 
\begin{enumerate}
  \item Assume $\sum_{i=1}^n d_{n,i}=nP$. Then $\Theta$ solves the equation $\sum_{i=1}^n\max(\Theta-\frac{1}{\lambda_{n,i}},\varepsilon)=nP$. Dividing both sides by $n$ and letting $n$ grow to infinity, Szeg\"o's Theorem gives
  \begin{eqnarray}
  \label{ziba_1}
  \frac{1}{2\pi}\int_0^{2\pi}\max\big(\Theta-|f(\omega)|^{-2},\varepsilon\big)\mathrm{d}\omega=P.
\end{eqnarray}
  \item Assume $\sum_{i=1}^n d_{n,i}<nP$. Then $\mu=0$ and (\ref{piy}) gives 
  \begin{eqnarray}
  \label{piy_2}
\frac{\frac{1}{2\ln 2}\frac{1}{n}}{\frac{1}{\lambda_{n,i}}+d_{n,i}}-\frac{\frac{1}{\ln 2}\frac{k+1}{m+n}\|\underline{r}\|_2^2}{1+(k+1)\|\underline{r}\|_2^2\frac{\sum_{i=1}^nd_{n,i}}{m+n}}+\mu_i=0.
\end{eqnarray}
If there is at least one $i$ such that $d_{n,i}>\varepsilon$, then $\mu_i=0$ and (\ref{piy_2}) further simplifies to   
\begin{eqnarray}
\sum_{i=1}^n\max\Big(\Theta-\frac{1}{\lambda_{n,i}},\varepsilon\Big)=2n\Theta-\frac{m+n}{(k+1)}\frac{1}{\|\underline{r}\|_2^2}.
\end{eqnarray}
Dividing both sides by $n$, letting $n$ grow to infinity and invoking Szeg\"o's Theorem, 
\begin{eqnarray}
\label{ziba_2}
  \frac{1}{2\pi}\int_0^{2\pi}\max\big(\Theta-|f(\omega)|^{-2},\varepsilon\big)\mathrm{d}\omega=2\Theta-\frac{2}{(k+1)}\frac{1}{\|\underline{r}\|_2^2}.
\end{eqnarray}
\end{enumerate}
We also need to determine $\lim_{n\to\infty}\delta_n$. This requires computing the limiting values for $d_{n,1}=\max(\Theta-\frac{1}{\lambda_{n,1}},\varepsilon)$ and $d_{n,n}=\max(\Theta-\frac{1}{\lambda_{n,n}},\varepsilon)$. We have $\lambda_{n,1}=\lambda_{\min}(H_c^TH_c)$ and $\lambda_{n,n}=\lambda_{\max}(H_c^TH_c)$. The spectral function for the Toeplitz matrix $H_c^TH_c$ is $|f(\omega)|^2$.  Then Corollary~4.2 on page~58 in \cite{Gray} gives 
\begin{eqnarray}
\lim_{n\to\infty}\lambda_{n,1}=\alpha^2,\,\,\,\,\,\lim_{n\to\infty}\lambda_{n,n}=\beta^2.
\end{eqnarray}
It is also clear that the condition in (\ref{vgt_22}) holds. In fact,
\begin{eqnarray}
\lim_{n\to\infty}\frac{1}{n}\lambda_{\max}(\Sigma_n)=\lim_{n\to\infty}\frac{d_{n,n}}{n}=\lim_{n\to\infty}\frac{1}{n}\max(\Theta-\beta^{-2},\varepsilon)=0.
\end{eqnarray}
  Finally, letting $\varepsilon$ approach zero from above, the proof of Proposition~1 is complete.

\section*{Appendix~C; Proof of (\ref{c_n})}
For a square block matrix $M=\begin{bmatrix}
   A  &  B  \\
    C  & D 
\end{bmatrix}$, we have $\mathrm{tr}(M^2)=\mathrm{tr}(A^2+BC+CB+D^2)$. Applying this to $\Phi$ in (\ref{goli_111}), we get 
\begin{eqnarray}
\mathrm{tr}(\Phi^2)&=&\mathrm{tr}((I_n+\Sigma_n^{1/2}E^TE\Sigma_n^{1/2})^2+\Sigma_n^{1/2}E^TE\Sigma_n^{1/2}+E\Sigma_nE^T+I_m)\notag\\
&=&\mathrm{tr}(I_n+2\Sigma_n^{1/2}E^TE\Sigma_n^{1/2}+(\Sigma_n^{1/2}E^TE\Sigma_n^{1/2})^2+\Sigma_n^{1/2}E^TE\Sigma_n^{1/2}+E\Sigma_nE^T+I_m).
\end{eqnarray} 
Using the identity $\mathrm{tr}(M_1M_2)=\mathrm{tr}(M_2M_1)$,  
\begin{eqnarray}
\mathrm{tr}(\Sigma_n^{1/2}E^TE\Sigma_n^{1/2})= \mathrm{tr}(E^TE\Sigma_n),\,\,\,\,\,\mathrm{tr}(E\Sigma_nE^T)=  \mathrm{tr}(E^TE\Sigma_n).
\end{eqnarray}
 Then 
\begin{eqnarray}
\mathrm{tr}(\Phi^2)=m+n+4\mathrm{tr}(E^TE\Sigma_n)]+\mathrm{tr}((\Sigma_n^{1/2}E^TE\Sigma_n^{1/2})^2).
\end{eqnarray} 
We saw in (\ref{bill_111_D_111}) that $\mathrm{tr}(E^TE\Sigma_n)\leq(k+1)\|\underline{r}\|_2^2\mathrm{tr}(\Sigma_n)\leq (k+1)nP\|\underline{r}\|_2^2$. The term $\mathrm{tr}((\Sigma_n^{1/2}E^TE\Sigma_n^{1/2})^2)$ can be bounded as
\begin{eqnarray}
\label{vc0}
\mathrm{tr}((\Sigma_n^{1/2}E^TE\Sigma_n^{1/2})^2)&\stackrel{(a)}{=}&\|\Sigma_n^{1/2}E^TE\Sigma_n^{1/2}\|_2^2\notag\\
&\stackrel{(b)}{\leq}&\|\Sigma_n^{1/2}E^TE\|^2\|\Sigma_n^{1/2}\|_2^2\notag\\
&\stackrel{(c)}{\leq}&\|E^TE\|^2\|\Sigma_n^{1/2}\|^2\|\Sigma_n^{1/2}\|_2^2\notag\\
&\stackrel{(d)}{=}&\|E^TE\|^2\lambda_{\max}(\Sigma_n)\mathrm{tr}(\Sigma_n)\notag\\
&\stackrel{(e)}{\leq}&nP \|E^TE\|^2\lambda_{\max}(\Sigma_n),
\end{eqnarray} 
where $(a)$ follows the definition of the norm $\|\cdot\|_2$, $(b)$ is due to Lemma~\ref{lem_1}, $(c)$ uses the fact that the matrix norm $\|\cdot\|$ is sub-multiplicative, $(d)$ is due to $\|\Sigma_n^{1/2}\|^2=\lambda_{\max}(\Sigma_n)$ and $\|\Sigma_n^{1/2}\|_2^2=\mathrm{tr}(\Sigma_n)$ and $(e)$ is due to $\mathrm{tr}(\Sigma_n)\leq nP$.  
We need the next lemma in order to find an upper bound on $\|E^TE\|^2$:
\begin{lem}
\label{lem_6}
For a matrix $M$, define the maximum row-sum norm by  
  \begin{eqnarray}
  \label{bopl_111}
\|M\|_{r}:=\max_{i}\sum_{j}|M_{i,j}|.
\end{eqnarray}
 Then $\|\cdot\|_{r}$ is a matrix norm. In particular, it is sub-multiplicative, i.e., 
 \begin{eqnarray}
 \label{sub_mul}
\|M_1M_2\|_{r}\leq \|M_1\|_{r}\|M_2\|_{r},
\end{eqnarray}
 for every two matrices $M_1$ and $M_2$ of proper sizes. Moreover, 
  \begin{eqnarray}
  \label{sonic}
\|M\|^2 \leq\|M^TM\|_{r}.
\end{eqnarray}
for every matrix $M$. 
\end{lem}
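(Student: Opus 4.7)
The plan is to verify the three claims in the order stated. First I would check the standard vector-space norm axioms for $\|\cdot\|_r$: non-negativity with $\|M\|_r=0$ iff $M=0$, absolute homogeneity $\|cM\|_r=|c|\|M\|_r$, and the triangle inequality $\|M_1+M_2\|_r\le\|M_1\|_r+\|M_2\|_r$. These are immediate from the corresponding properties of $|\cdot|$ and the fact that the maximum of a sum is bounded by the sum of maxima. This takes care of $\|\cdot\|_r$ being a norm.

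Next, for sub-multiplicativity, I would expand an entry of the product as $[M_1M_2]_{i,j}=\sum_{k}[M_1]_{i,k}[M_2]_{k,j}$ and then estimate
\begin{eqnarray}
\sum_j|[M_1M_2]_{i,j}|&\leq& \sum_k|[M_1]_{i,k}|\sum_j|[M_2]_{k,j}|\notag\\
&\leq& \|M_2\|_r\sum_k|[M_1]_{i,k}|\leq \|M_1\|_r\|M_2\|_r,\notag
\end{eqnarray}
after which taking the maximum over $i$ gives (\ref{sub_mul}). The key manipulations are the triangle inequality and an interchange of summations, both routine.

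The third claim $\|M\|^2\leq\|M^TM\|_r$ is the main substantive piece. Recalling $\|M\|^2=\lambda_{\max}(M^TM)$ from (\ref{opn_111}), it suffices to prove the general fact that for any square matrix $A$ the spectral radius satisfies $|\lambda|\leq\|A\|_r$ for every eigenvalue $\lambda$ of $A$. My plan is to fix an eigenpair $Av=\lambda v$ with $v\neq\underline{0}$, pick an index $i$ where $|v_i|=\max_j|v_j|>0$, and read off the $i^{th}$ row of the eigenvalue equation:
\begin{eqnarray}
|\lambda||v_i|=\Big|\sum_{j}A_{i,j}v_j\Big|\leq\sum_j|A_{i,j}||v_j|\leq|v_i|\sum_j|A_{i,j}|\leq|v_i|\|A\|_r,\notag
\end{eqnarray}
which yields $|\lambda|\leq\|A\|_r$ after dividing by $|v_i|$. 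Applying this to $A=M^TM$, which is positive semidefinite and hence has only nonnegative real eigenvalues, gives $\lambda_{\max}(M^TM)\leq\|M^TM\|_r$, which is (\ref{sonic}). I do not anticipate a genuine obstacle here; the only subtlety is remembering to normalize by the maximum coordinate of the eigenvector so that the inequality is sharp enough to extract $\|A\|_r$ on the right-hand side.
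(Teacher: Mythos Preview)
Your proposal is correct. The paper itself does not give a proof: it simply refers to \cite{Horn} for the norm and sub-multiplicativity claims, and for (\ref{sonic}) it invokes Theorem~5.6.9 of \cite{Horn}, which says that every eigenvalue of a square matrix is bounded in modulus by any matrix norm of that matrix. Your argument is a direct, self-contained verification of exactly these facts: the eigenvector-row argument you give for $|\lambda|\le\|A\|_r$ is the standard proof of the Gershgorin-type bound (equivalently, that $\|\cdot\|_r$ is the operator norm induced by $\ell^\infty$, hence dominates the spectral radius). So you supply the elementary content that the paper outsources to a reference; nothing is missing and there is no genuine methodological difference beyond citation versus direct proof.
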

\begin{proof}
See \cite{Horn}. The statement in (\ref{sonic}) follows from Theorem 5.6.9 in \cite{Horn} which states that the size of every eigenvalue of a square matrix is bounded from above by every matrix norm of that matrix.  
\end{proof}

We get 
\begin{eqnarray}
\label{vc1}
\|E^TE\|^2\stackrel{(a)}{\leq} \|(E^TE)^2\|_{r}\stackrel{(b)}{\leq} \|E\|^2_{r}\|E^T\|^2_{r},
\end{eqnarray}
where $(a)$ uses (\ref{sonic}) and $(b)$ is due to (\ref{sub_mul}).  The matrix $ E$ is banded, i.e., $ E_{i,j}=0$
 for $i-j<0$ or $i-j>k$ and $| E_{i,j}|\leq  r_{i-j}$
 for $0\le i-j\leq	 k$. It follows that  
 \begin{eqnarray}
 \label{ino_100}
\| E\|_{r}\leq \sum_{l=0}^kr_l=r_s,\,\,\,\| E^T\|_{r}\leq \sum_{l=0}^kr_l=r_s.
\end{eqnarray}
By (\ref{vc1}) and (\ref{ino_100}), 
\begin{eqnarray}
\label{vc2}
\|E^TE\|^2\leq r_s^4.
\end{eqnarray}
  Then (\ref{vc0}) and (\ref{vc2}) give 
\begin{eqnarray}
\mathrm{tr}((\Sigma_n^{1/2}E^TE\Sigma_n^{1/2})^2)\leq nPr_s^4\lambda_{\max}(\Sigma_n).
\end{eqnarray} 
We have shown that 
\begin{eqnarray}
\label{h00}
\mathrm{tr}(\Phi^2)\leq m+n+4(k+1)nP\|\underline{r}\|_2^2+nPr_s^4\lambda_{\max}(\Sigma_n).
\end{eqnarray}

\section*{Appendix~D; Proof of Lemma~\ref{lem_1}}
We verify (\ref{yadam_1}) first. We have 
\begin{eqnarray}
\label{tang_1}
\|M_1M_2\|^2_2=\mathrm{tr}(M_1M_2(M_1M_2)^T). 
\end{eqnarray}
Denote the columns of $M_2$ by $\underline{v}_1,\cdots, \underline{v}_m$. Then the columns of $M_1M_2$ are $M_1\underline{v}_1,\cdots, M_2\underline{v}_m$ and we get 
\begin{eqnarray}
\label{tang_2}
M_1M_2(M_1M_2)^T=\sum_{i=1}^m M_1\underline{v}_i(M_1\underline{v}_i)^T.
\end{eqnarray}
By (\ref{tang_1}) and (\ref{tang_2}), 
\begin{eqnarray}
\|M_1M_2\|^2_2&=&\mathrm{tr}\Big(\sum_{i=1}^m M_1\underline{v}_i(M_1\underline{v}_i)^T\Big)\notag\\
&=&\sum_{i=1}^m \mathrm{tr}\big(M_1\underline{v}_i(M_1\underline{v}_i)^T\big)\notag\\
&=&\sum_{i=1}^m \|M_1\underline{v}_i\|_2^2\notag\\
&\stackrel{(a)}{\leq} &\sum_{i=1}^m \|M_1\|^2\|\underline{v}_i\|_2^2\notag\\
&=&\|M_1\|^2 \sum_{i=1}^m \|\underline{v}_i\|_2^2\notag\\
&\stackrel{(b)}{=}&\|M_1\|^2\|M_2\|_2^2,
\end{eqnarray}
where $(a)$ is due to the definition of the operator norm and $(b)$ is due to $\|M_2\|_2^2=\sum_{i=1}^m\|\underline{v}_i\|_2^2$. This proves (\ref{yadam_1}). Using the facts that $\|M\|=\|M^T\|$ and $\|M\|_2=\|M^T\|_2$ for every matrix $M$, we get 
\begin{eqnarray}
\|M_1M_2\|_2= \|(M_1M_2)^T\|_2=\|M_2^TM_1^T\|_2\stackrel{(*)}{\leq}\|M_2^T\|\|M_1^T\|_2=\|M_2\|\|M_1\|_2,
\end{eqnarray}
where $(*)$ is due to (\ref{yadam_1}) we just verified. This completes the proof of (\ref{yadam_2}).

\section*{Appendix~E; Proof of $\|H_c\|\leq \beta$ in (\ref{mahtab})}
We need the following lemma which is a restatement of Lemma~4.1 in \cite{Gray}:
\begin{lem}
\label{lem_6_5}
Let $T$ be an $n\times n$ symmetric Toeplitz matrix with real entries given by $T_{i,j}=t_{i-j}$ where $t_{-l}=t_l$ for every $0\leq l\leq n-1$. Moreover, define the function $f_T(\cdot)$ by 
\begin{eqnarray}
f_T(\omega)=\sum_{l=-(n-1)}^{n-1}t_le^{\sqrt{-1}\,l\omega},\,\,\,0\leq \omega\leq 2\pi.
\end{eqnarray}
Then every eigenvalue of $T$ lies in the interval $[\min_{\omega}f_T(\omega), \max_{\omega}f_T(\omega)]$. 
\end{lem}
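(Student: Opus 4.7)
The plan is to prove the lemma by combining the Rayleigh-Ritz variational characterization of eigenvalues of a real symmetric matrix with a Fourier-integral representation of the associated quadratic form. Since $T$ is real symmetric, the Rayleigh-Ritz theorem guarantees that every eigenvalue of $T$ can be written as $\underline{v}^T T\underline{v}/\|\underline{v}\|_2^2$ for some nonzero $\underline{v}\in\mathbb{R}^n$, and the extreme eigenvalues equal the supremum and infimum of this Rayleigh quotient over nonzero $\underline{v}$. It therefore suffices to show that for every nonzero $\underline{v}\in\mathbb{R}^n$, the quantity $\underline{v}^T T\underline{v}/\|\underline{v}\|_2^2$ lies in $[\min_\omega f_T(\omega),\max_\omega f_T(\omega)]$.

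First I would invert the definition of $f_T$ using the orthogonality of the family $\{e^{\sqrt{-1}\,l\omega}\}_{l\in\mathbb{Z}}$ in $L^2([0,2\pi])$ with respect to the normalized measure $\mathrm{d}\omega/(2\pi)$: since $f_T$ is a trigonometric polynomial carrying only frequencies $l\in\{-(n-1),\ldots,n-1\}$, one obtains the Fourier-coefficient identity
\begin{eqnarray}
t_l=\frac{1}{2\pi}\int_0^{2\pi}f_T(\omega)e^{-\sqrt{-1}\,l\omega}\,\mathrm{d}\omega,\qquad |l|\le n-1.
\end{eqnarray}
Next, attaching to each $\underline{v}=(v_1,\ldots,v_n)^T$ the trigonometric polynomial $V(\omega):=\sum_{j=1}^n v_j e^{\sqrt{-1}\,j\omega}$ and substituting the above Fourier representation of $t_{i-j}$ inside the double sum $\underline{v}^T T\underline{v}=\sum_{i,j=1}^n v_iv_j t_{i-j}$, a straightforward interchange of sum and integral together with the reality of the $v_j$'s yields
\begin{eqnarray}
\underline{v}^T T\underline{v}=\frac{1}{2\pi}\int_0^{2\pi}f_T(\omega)|V(\omega)|^2\,\mathrm{d}\omega,
\end{eqnarray}
while Parseval's identity gives $\|\underline{v}\|_2^2=\frac{1}{2\pi}\int_0^{2\pi}|V(\omega)|^2\,\mathrm{d}\omega$.

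Finally, because $|V(\omega)|^2\ge 0$ and is not identically zero when $\underline{v}\neq\underline{0}_n$, the Rayleigh quotient $\underline{v}^T T\underline{v}/\|\underline{v}\|_2^2$ equals a weighted average of $f_T(\omega)$ against the nonnegative weight $|V(\omega)|^2$; it is therefore sandwiched between $\min_\omega f_T(\omega)$ and $\max_\omega f_T(\omega)$, which are attained by continuity of $f_T$ on the compact interval $[0,2\pi]$. Invoking Rayleigh-Ritz then pins every eigenvalue of $T$ to the interval $[\min_\omega f_T(\omega),\max_\omega f_T(\omega)]$. The main step requiring care is not conceptually deep but notationally delicate: one must verify that the double sum $\sum_{i,j}v_iv_j e^{-\sqrt{-1}(i-j)\omega}$ factorizes as $\overline{V(\omega)}V(\omega)=|V(\omega)|^2$, which is precisely the place where the reality of $\underline{v}$ (and hence $\overline{V(\omega)}=\sum_j v_j e^{-\sqrt{-1}\,j\omega}$) is used, along with the hypothesis $t_{-l}=t_l$ which makes $f_T$ real-valued so that the sandwich bound between $\min$ and $\max$ is meaningful.
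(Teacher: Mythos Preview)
Your proof is correct. The paper does not supply its own argument for this lemma; it simply cites it as a restatement of Lemma~4.1 in Gray's monograph on Toeplitz and circulant matrices. Your Rayleigh--Ritz plus Fourier-integral argument is exactly the standard proof of that result (and is essentially what appears in Gray), so there is nothing to compare: you have filled in what the paper leaves to a reference.
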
 
The matrix $H_c^TH_c$ is a symmetric Toeplitz matrix with entries $[H_c^TH_c]_{i,j}=\sum_{l=0}^{k-|i-j|}c_lc_{l+|i-j|}\mathds{1}_{|i-j|\leq k}$. Then $f_{H_c^TH_c}(\omega)=\big|\sum_{i=0}^{k} c_ie^{\sqrt{-1}\,i\omega}\big|^2$. By Lemma~\ref{lem_6_5}, every eigenvalue $\lambda$ of $H_c^TH_c$ satisfies 
\begin{eqnarray}
\label{laa_22}
\alpha^2\leq \lambda \leq \beta^2,
\end{eqnarray}
where $\alpha$ and $\beta$ are given in (\ref{min_max}). Since $\|H_c\|^2=\l\lambda_{\max}(H^T_cH_c)$, then (\ref{laa_22}) gives $\|H_c\|\leq \beta$.
\section*{Appendix~F; Proof of $\|E\|\leq r_s$ in (\ref{mahtab})}
The proof of $\|E\|\leq r_s$ follows similar lines of reasoning that led to $\|E^TE\|\leq r_s^2$ in (\ref{vc2}). We present the details here for completeness. By Lemma~\ref{lem_6}, 
\begin{eqnarray}
\label{goosheh}
\|E\|\leq (\|E\|_{r}\|E^T\|_{r})^{1/2},
\end{eqnarray}
where $\|\cdot\|_{r}$ is the maximum row-sum norm defined in (\ref{bopl_111}). The matrix $ E$ is banded, i.e., $ E_{i,j}=0$
 for $i-j<0$ or $i-j>k$ and $| E_{i,j}|\leq r_{i-j}$
 for $0\le i-j\leq	 k$. It follows that both $\|E\|_{r}$ and $\|E^T\|_{r}$ are not larger than $\sum_{l=0}^k r_l=r_s$. Then (\ref{goosheh}) implies that $\|E\|$ is also not larger than $r_s$. 

\section*{Appendix~G; Proof of (\ref{c_p_n})}
By (\ref{psi_09}),  
 \begin{eqnarray}
 \label{sard_0}
\mathrm{tr}(\Psi^2)=\mathrm{tr}((\Sigma_n^{1/2}H^T\Omega_c^{-1}H\Sigma_n^{1/2})^2)+2\mathrm{tr}(\Sigma_n^{1/2}H^T\Omega_c^{-2}H\Sigma_n^{1/2})+\mathrm{tr}(\Omega_c^{-2}).
\end{eqnarray}
Since each eigenvalue of $\Omega_c^{-1}$ is at most $1$, we have 
\begin{eqnarray}
\label{sard_1}
\mathrm{tr}(\Omega_c^{-2})\leq m.
\end{eqnarray}
Moreover, 
\begin{eqnarray}
\label{pahloo_1}
\mathrm{tr}((\Sigma_n^{1/2}H^T\Omega_c^{-1}H\Sigma_n^{1/2})^2)&=&\|\Sigma_n^{1/2}H^T\Omega_c^{-1}H\Sigma_n^{1/2}\|_2^2\notag\\
&\stackrel{(a)}{\leq}&\|\Sigma_n^{1/2}H^T\Omega_c^{-1}H\|^2\|\Sigma_n^{1/2}\|_2^2\notag\\
&\stackrel{(b)}{\leq}&\|H^T\|^2\,\|\Omega_c^{-1}\|^2\,\|H\|^2\|\, \|\Sigma_n^{1/2}\|^2\,\|\Sigma_n^{1/2}\|_2^2\notag\\
&\stackrel{(c)}{\leq}&\|H\|^4\|\, \lambda_{\max}(\Sigma_n)\mathrm{tr}(\Sigma_n)\notag\\
&\stackrel{(d)}{\leq}&nP(\beta+r_s)^4\lambda_{\max}(\Sigma_n),
\end{eqnarray}
where $(a)$ is due to Lemma~\ref{lem_1}, $(b)$ is due to $\|\cdot\|$ being sub-multiplicative, $(c)$ is due to $\|H\|=\|H^T\|$, $\lambda_{\max}(\Sigma_n)=\|\Sigma_n^{1/2}\|^2$ and $\mathrm{tr}(\Sigma_n)=\|\Sigma_n^{1/2}\|_2^2$ and  $(d)$ is due to $\mathrm{tr}(\Sigma_n)\leq nP$ and $\|H\|\leq \|H_c\|+\|E\|\leq \beta+r_s$. This last inequality itself is a consequence of the second and third inequalities in (\ref{mahtab}). Following similar lines of reasoning as in (\ref{pahloo_1}), 
\begin{eqnarray}
\label{pahloo_2}
\mathrm{tr}(\Sigma_n^{1/2}H^T\Omega_c^{-2}H\Sigma_n^{1/2})&=&\|\Sigma_n^{1/2}H^T\Omega_c^{-1}\|_2^2\notag\\
&\stackrel{}{\leq}& \|H^T\Omega_c^{-1}\|^2\|\Sigma_n^{1/2}\|_2^2\notag\\
&\leq& \|H^T\|^2\|\Omega_c^{-1}\|^2\,\|\Sigma_n^{1/2}\|_2^2\notag\\
&\leq&\|H\|^2\mathrm{tr}(\Sigma_n)\notag\\
&\leq&(\beta+r_s)^2nP.
\end{eqnarray}
By (\ref{sard_0}), (\ref{sard_1}), (\ref{pahloo_1}) and (\ref{pahloo_2}), 
\begin{eqnarray}
\mathrm{tr}(\Psi^2)\leq m+2(\beta+r_s)^2nP+nP(\beta+r_s)^4\lambda_{\max}(\Sigma_n).
\end{eqnarray}

\section*{Appendix~H; Proof of Lemma~\ref{seal} on the size of the Gaussian typical set }
For $r>0$, positive integer $n$ and real positive-definite matrix $\Sigma$,  consider the ellipsoid 
\begin{eqnarray}
\label{mehr_456}
\mathcal{E}_r:=\Big\{\underline{a}\in \mathbb{R}^{n}: \frac{1}{n}\underline{a}^T\Sigma^{-1}\underline{a}<r.\Big\}.
\end{eqnarray} 
Let $\Sigma=U\Lambda U^T$ be the spectral decomposition for $\Sigma$ where $U$ is a real orthogonal matrix and $\Lambda$ is a diagonal matrix whose diagonal entries $\lambda_1,\cdots, \lambda_n$ are the eigenvalues of $\Sigma$.  The set $U^T\mathcal{E}_r:=\{U^T\underline{a}: \underline{a}\in \mathcal{E}_r\}$ is an ellipsoid in standard from, i.e.,  
\begin{eqnarray}
U^T\mathcal{E}_r=\Big\{\underline{a}\in \mathbb{R}^n: \frac{1}{n}\sum_{i=1}^{n}\frac{a^2_i}{\lambda_i}<r\Big\}.
\end{eqnarray}
The volume of the standard ellipsoid $\big\{(a_1,\cdots, a_n): \sum_{i=1}^n \frac{a_i^2}{c_i^2}\le1\big\}$ is $\frac{\pi^{\frac{n}{2}}}{\Gamma(\frac{n}{2}+1)}\prod_{i=1}^n c_i$ where $\Gamma(\cdot)$ is the Gamma function. Hence,
\begin{eqnarray}
\label{vol_elip}
\mathrm{Vol}(\mathcal{E}_r)&\stackrel{(a)}{=}&\mathrm{Vol}(U^T\mathcal{E}_r)\notag\\
&=&\frac{\pi^{\frac{n}{2}}}{\Gamma(\frac{n}{2}+1)}\prod_{i=1}^{n}(nr\lambda_i)^{\frac{1}{2}}\notag\\&\stackrel{(b)}{=}&\frac{\pi^{\frac{n}{2}}n^{\frac{n}{2}}}{\Gamma(\frac{n}{2}+1)}\big(\det(\Sigma)\big)^{\frac{1}{2}}r^{\frac{n}{2}}\notag\\
&\stackrel{(c)}{\leq}&(2\pi e)^{\frac{n}{2}}\big(\det(\Sigma)\big)^{\frac{1}{2}}r^{\frac{n}{2}}\notag\\
&=&2^{\frac{1}{2}\log((2\pi e)^n\det(\Sigma))+\frac{n}{2}\log r}\notag\\
&=&2^{h_{\mathsf{G}}(\Sigma)+\frac{n}{2}\log r},
\end{eqnarray}
where $(a)$ is due to the fact that the map $\underline{a}\mapsto U^T\underline{a}$ is an isometry, $(b)$ is due to $\det (\Sigma_n)=\prod_{i=1}^n\lambda_i$ and $(c)$ is due to the inequality $\Gamma(x+1)\geq x^x e^{-x}$ for every $x>0$ according to Corollary~1.2 in \cite{Gamma}. Finally, 
\begin{eqnarray}
\label{ema_1}
\mathrm{Vol}(\mathcal{T}^{(n)}_\eta(\Sigma))&=&\mathrm{Vol}(\mathcal{E}_{1+\eta})-\mathrm{Vol}(\mathcal{E}_{1-\eta})1_{\eta<1}\notag\\&\leq&\mathrm{Vol}(\mathcal{E}_{1+\eta})\notag\\&\leq&2^{h_{\mathsf{G}}(\Sigma)+\frac{n}{2}\log(1+\eta)},
\end{eqnarray}
where $1_{\eta<1}$ is $1$ if $\eta<1$ and is $0$ otherwise. Corollary~1.2 in \cite{Gamma} also implies that $\Gamma(x+1)\leq x^x e^{-x} \sqrt{2\pi(x+1)}$ for every $x>0$. Using this inequality in (\ref{vol_elip}), we get 
\begin{eqnarray}
\label{nasl_1}
\mathrm{Vol}(\mathcal{E}_r)\geq 2^{h_{\mathsf{G}}(\Sigma)+\frac{n}{2}\log r-\frac{1}{2}\log(\pi(n+2))}.
\end{eqnarray}
If $\eta\geq 1$, then (\ref{nasl_1}) gives
\begin{eqnarray}
\label{ema_2}
\mathrm{Vol}(\mathcal{T}^{(n)}_\eta(\Sigma))&=&\mathrm{Vol}(\mathcal{E}_{1+\eta})\notag\\
&\geq& 2^{h_{\mathsf{G}}(\Sigma)+\frac{n}{2}\log(1+\eta)-\frac{1}{2}\log(\pi(n+2))}.
\end{eqnarray}
Finally, (\ref{ema_1}) and (\ref{ema_2}) verify (\ref{typ_asy}) for every $\eta\geq 1$.

  \section*{Appendix~I; Proof of Lemma~\ref{lem_55}}
 Fix $H\in \mathcal{H}$. For a real symmetric $n\times n$ matrix $X$,  denote its eigenvalues in increasing order by $\lambda_i(X)$ for $1\le i\le n$, i.e., $\lambda_1(X)\leq \lambda_2(X)\le\cdots\le \lambda_n(X)$.  We need the following lemma which is a direct consequence of Theorem~4.3.1 in \cite{Horn} due to Hermann~Weyl: 
 \begin{lem}
 \label{lem_9}
 Let $A$ and $B$ be $n\times n$ real symmetric matrices. Then 
 \begin{eqnarray}
|\lambda_{i}(A)-\lambda_i(B)|\leq \|A-B\|,
\end{eqnarray}
for every $1\leq i\leq n$.
 \end{lem}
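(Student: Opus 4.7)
The plan is to derive the bound directly from Weyl's inequality (Theorem~4.3.1 in \cite{Horn}), which states that for real symmetric $n\times n$ matrices $X,Y$ one has the two-sided estimate
\begin{eqnarray}
\lambda_i(X)+\lambda_1(Y)\le\lambda_i(X+Y)\le\lambda_i(X)+\lambda_n(Y),\qquad 1\le i\le n.\notag
\end{eqnarray}
I will apply this with $X=B$ and $Y=A-B$, noting that $A-B$ is real symmetric since both $A$ and $B$ are. Substituting $X+Y=A$ and subtracting $\lambda_i(B)$ from all sides yields
\begin{eqnarray}
\lambda_1(A-B)\le\lambda_i(A)-\lambda_i(B)\le\lambda_n(A-B),\qquad 1\le i\le n.\notag
\end{eqnarray}
Hence $|\lambda_i(A)-\lambda_i(B)|\le\max\{|\lambda_1(A-B)|,|\lambda_n(A-B)|\}=\max_{1\le j\le n}|\lambda_j(A-B)|$.

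The remaining step is to identify the quantity $\max_{j}|\lambda_j(A-B)|$ with the spectral norm $\|A-B\|$ defined in (\ref{opn_111}). By the definition (\ref{opn_111}), $\|M\|=\bigl(\lambda_{\max}(M^TM)\bigr)^{1/2}$ for any matrix $M$. When $M$ is real symmetric, $M^TM=M^2$, so the eigenvalues of $M^TM$ are precisely the squares of the eigenvalues of $M$. Consequently
\begin{eqnarray}
\|M\|=\bigl(\lambda_{\max}(M^2)\bigr)^{1/2}=\max_{1\le j\le n}|\lambda_j(M)|,\notag
\end{eqnarray}
and taking $M=A-B$ gives $\max_{j}|\lambda_j(A-B)|=\|A-B\|$. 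Combining with the bound from Weyl's inequality produces $|\lambda_i(A)-\lambda_i(B)|\le\|A-B\|$ for every $i$.

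There is no real obstacle here: the whole argument is two lines of bookkeeping on top of Weyl's inequality, which is quoted from \cite{Horn}. The only point worth stating carefully is that $A-B$ is symmetric (so that Weyl's inequality applies and the spectral norm coincides with the largest absolute eigenvalue); this is immediate from the hypothesis that $A$ and $B$ are both real symmetric.
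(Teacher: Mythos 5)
Your proof is correct and follows exactly the route the paper intends: the paper states Lemma~4 as ``a direct consequence of Theorem~4.3.1 in \cite{Horn}'' (Weyl's inequality) without spelling out the details, and your two steps --- the shift $X=B$, $Y=A-B$ in Weyl's two-sided bound, followed by the identification $\|M\|=\max_j|\lambda_j(M)|$ for symmetric $M$ via the definition in (\ref{opn_111}) --- are precisely the bookkeeping that citation elides. Nothing is missing.
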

 Denote the eigenvalues of $H^TH\Sigma_n $ and $H_c^TH_c\Sigma_n$ in increasing order by $\mu_i$~and~$\lambda_i$,~respectively. We have\footnote{ The second step uses the identity $\det(I_m+AB)=\det(I_n+BA)$ for matrices $A$ and $B$ of sizes $m\times n$ and $n\times m$, respectively.}
 \begin{eqnarray}
 \label{yar_11}
\frac{\det(\Omega_H)}{\det(\Omega_c)}=\frac{\det(I_m+H\Sigma_n H^T)}{\det(I_m+H_c\Sigma_n H_c^T)}=\frac{\det(I_n+H^TH\Sigma_n)}{\det(I_n+H_c^TH_c\Sigma_n)}=\prod_{i=1}^n\frac{1+\mu_i}{1+\lambda_i}.
\end{eqnarray}
 Let us write
\begin{eqnarray}
\label{yar_22}
\frac{1+\mu_i}{1+\lambda_i}=1+\frac{\mu_i-\lambda_i}{1+\lambda_i}\geq 1-\frac{|\mu_i-\lambda_i|}{1+\lambda_{\min}(H_c^TH_c\Sigma_n)},
\end{eqnarray} 
for every $1\le i\leq n$ where $\lambda_{\min}(H_c^TH_c\Sigma_n)=\min_{1\leq i\leq n}\lambda_i$. 
By Lemma~\ref{lem_9}, 
\begin{eqnarray}
\label{yar_333}
|\mu_i-\lambda_i|\leq \|H^TH\Sigma_n -H_c^TH_c\Sigma_n\|\leq \|H^TH-H_c^TH_c\|\|\Sigma_n\|.
\end{eqnarray}
If $\alpha>0$, then by (\ref{laa_22}) in Appendix~E, every eigenvalue of $H_c^TH_c$ is positive and hence, this matrix is invertible. In this case,  
\begin{eqnarray}
\label{yar_444}
\lambda_{\min}(H_c^TH_c\Sigma_n)&=&\frac{1}{\lambda_{\max}((H_c^TH_c\Sigma_n)^{-1})}\notag\\
&\stackrel{}{=}&\frac{1}{\lambda_{\max}(\Sigma_n^{-1}(H_c^TH_c)^{-1})}\notag\\
&\stackrel{(a)}{\geq}&\frac{1}{\lambda_{\max}(\Sigma_n^{-1})\lambda_{\max}((H_c^TH_c)^{-1})}\notag\\
&=&\lambda_{\min}(\Sigma_n)\lambda_{\min}(H_c^TH_c)\notag\\
&\stackrel{(b)}{\geq}&\alpha^2\lambda_{\min}(\Sigma_n),
\end{eqnarray}
where $(a)$ uses the identity\footnote{This identity is part of Problem III.6.14 on page 78 in \cite{Bhatia}.} $\lambda_{\max}(AB)\leq \lambda_{\max}(A)\lambda_{\max}(B)$ for positive semidefinite matrices $A$ and $B$ and $(b)$ is a direct consequence of (\ref{laa_22}) in Appendix~E.  By (\ref{yar_22}), (\ref{yar_333}) and (\ref{yar_444}),  
\begin{eqnarray}
\label{yar_33}
\frac{1+\mu_i}{1+\lambda_i}\geq 1-\frac{\|H^TH-H_c^TH_c\|\|\Sigma_n\|}{1+\alpha^2\lambda_{\min}(\Sigma_n)}.
\end{eqnarray} 
Recall the matrix $E$ in~(\ref{err_111}). Then
\begin{eqnarray}
\label{yar_44}
\|H^TH-H_c^TH_c\|=\|E^TE+E^TH_c+H_c^TE\|\leq \|E\|^2+2\|E\|\|H_c\|\leq  r_s^2+2 r_s\beta,
\end{eqnarray}
where the last step is due to $\|H_c\|\leq \beta$ and $\|E\|\leq r_s$ proved in Appendix~E and Appendix~F, respectively.  By (\ref{yar_11}), (\ref{yar_33}) and (\ref{yar_44}), 
\begin{eqnarray}
\det(\Omega_H)\ge \det(\Omega_c)\Big(1-\frac{( r^2_{s}+2r_s\beta)\|\Sigma_n\|}{1+\alpha^2\lambda_{\min}(\Sigma_n)}\Big)^n.
\end{eqnarray}
This concludes the proof of (\ref{zcb_11}).

\section*{Appendix~J;  Proof of Lemma~\ref{lem_66}}
We write 
\begin{eqnarray}
\label{dast_11}
\inf_{\underline{y}\in \mathcal{T}^{(m)}_{\eta'}(\Omega_c)}\underline{y}^T\Omega_H^{-1}\underline{y}=\inf_{m(1-\eta')< t< m(1+\eta')}\,\,\,\,\,\min_{\underline{y}\in\mathbb{R}^m:\underline{y}^T\Omega_c^{-1}\underline{y}=t}\underline{y}^T\Omega_H^{-1}\underline{y}.
\end{eqnarray}
To achieve $\min_{\substack{\underline{y}\in\mathbb{R}^m\\\underline{y}^T\Omega_c^{-1}\underline{y}=t}}\underline{y}^T\Omega_H^{-1}\underline{y}$, the gradient of $\underline{y}^T\Omega_H^{-1}\underline{y}$ must be parallel to the gradient of $\underline{y}^T\Omega_c^{-1}\underline{y}$, i.e., 
\begin{eqnarray}
\label{how_11}
\Omega_H^{-1}\underline{y}=\lambda \Omega_c^{-1}\underline{y},
\end{eqnarray}
where $\lambda$ is the Lagrange multiplier. This tells us that the ``optimum'' $\underline{y}$ must be an eigenvector for the matrix $\Omega_c\Omega_H^{-1}$ with corresponding eigenvalue $\lambda$. Multiplying both sides of (\ref{how_11}) by $\underline{y}^T$ from left, we conclude that 
\begin{eqnarray}
\label{dast_22}
\inf_{\underline{y}\in\mathbb{R}^m:\underline{y}^T\Omega_c^{-1}\underline{y}=t}\underline{y}^T\Omega_H^{-1}\underline{y}=t\lambda_{\min}(\Omega_c\Omega_H^{-1}),
\end{eqnarray} 
where $\lambda_{\min}(\Omega_c\Omega_H^{-1})$ is the minimum eigenvalue of $\Omega_c\Omega_H^{-1}$. Putting (\ref{dast_11}) and (\ref{dast_22}) together,  
\begin{eqnarray}
\label{blah_1}
\inf_{\underline{y}\in \mathcal{T}^{(m)}_{\eta'}(\Omega_c)}\underline{y}^T\Omega_H^{-1}\underline{y}=m(1-\eta') \lambda_{\min}(\Omega_c\Omega_H^{-1})=\frac{m(1-\eta')}{\lambda_{\max}(\Omega_H\Omega^{-1}_c)}.
\end{eqnarray}
Next, we find an upper bound on $\lambda_{\max}(\Omega_H\Omega_c^{-1})$. Writing $H=H_c+E$, we get 
\begin{eqnarray}
\label{vrty_1}
\Omega_H\Omega_c^{-1}&=&(\underbrace{I_m+H_c\Sigma_n H_c^T}_{=\Omega_c}+H_c\Sigma_n E^T+E\Sigma_n H_c^T+E\Sigma_nE^T)\Omega_c^{-1}\notag\\
&=&I_m+H_c\Sigma_n E^T\Omega_c^{-1}+E\Sigma_n H_c^T\Omega_c^{-1}+E\Sigma_nE^T\Omega_c^{-1}.
\end{eqnarray}
Then 
\begin{eqnarray}
\label{blah_123}
\lambda_{\max}(\Omega_H\Omega_c^{-1})&\stackrel{(a)}{\leq}& \|\Omega_H\Omega_c^{-1}\|\notag\\
&\stackrel{(b)}{\leq}&1+2\|H_c\|\|E\||\Sigma_n\|+\|E\|^2\|\Sigma_n\|\notag\\
&\stackrel{(c)}{\leq}&1+r_s(r_s+2\beta)\|\Sigma_n\|\notag\\
&\stackrel{(d)}{=}&1+r_s(r_s+2\beta)\lambda_{\max}(\Sigma_n),
\end{eqnarray} 
where $(a)$ follows from Theorem 5.6.9 in \cite{Horn} which states that the size of every eigenvalue of a square matrix is bounded from above by every matrix norm of that matrix, $(b)$ is due to (\ref{vrty_1}) and the properties of the operator norm and $(c)$ is due to $\|H_c\|\le \beta$ and $\|E\|\leq r_s$ proved in Appendix~E and Appendix~F, respectively, and $(d)$ is due to $\lambda_{\max}(M)=\|M\|$ for every symmetric matrix $M$. Finally, (\ref{blah_1}) and (\ref{blah_123}) give the promised bound in (\ref{zcb_22}).

\end{document}